\newtheorem{theo}{Theorem}[section]
\newtheorem{prop}[theo]{Proposition}
\newtheorem{lemma}[theo]{Lemma}
\newtheorem{cor}[theo]{Corollary}
\newtheorem{rmk}[theo]{Remark}
\newcommand{\N} {\mathbb{N}}
\newcommand{\R} {\mathbb{R}}
\newcommand{\C} {\mathbb{C}}
\newcommand{\om}{\omega}
\newcommand{\f}{\frac}
\newcommand{\be}{\begin{equation}}
\newcommand{\n}{\noindent}
\newcommand{\ee}{\end{equation}}
\begin{document}

%\begin{document}

%\preprint{AIP/123-QED}

\title[Orbital and asymptotic stability for standing waves of a NLS equation with concentrated nonlinearity in dimension three]{Orbital and asymptotic stability for standing waves of a NLS equation with concentrated nonlinearity in dimension three}% Force line breaks with \\
%\thanks{Footnote to title of article.}

\author{Riccardo Adami}\email{riccardo.adami@polito.it}\affiliation{Dipartimento di Scienze Matematiche, Politecnico di Torino, C.so Duca degli Abruzzi 24 10129 Torino, Italy}
%\email{riccardo.adami@unimib.it}
%\email{riccardo.adami@unimib.it}%
 %\altaffiliation[Also at ]{Istituto di Matematica Applicata e Tecnologie Informatiche\\
%Consiglio Nazionale delle Ricerche\\
%via Ferrata, 1, 27100, Pavia, Italy}%Lines break automatically or can be forced with \\
\author{Diego Noja}\email{diego.noja@unimib.it}\affiliation{Dipartimento di Matematica e Applicazioni, Universit\`a di Milano Bicocca, Via Cozzi 53 20125 Milano, Italy}%
%\email{diego.noja@unimib.it}
%\\This line break forced with \textbackslash\textbackslash}%
\author{Cecilia Ortoleva}\email{cecilia.ortoleva@math.cnrs.fr}
 %\homepage{http://www.Second.institution.edu/~Charlie.Author.}
\affiliation{Laboratoire d'Analyse et de Math\'{e}matiques Appliqu\'{e}es, UMR 8050
Universit\'{e} Paris-Est, 
61 avenue du G\'{e}n\'{e}ral de Gaulle, 94010 Cr\'{e}teil Cedex, France %\\This line break forced% with \\
}%
%\email{cecilia.ortoleva@math.cnrs.fr}

\date{\today}% It is always \today, today,
             %  but any date may be explicitly specified
%\enlargethispage{1cm}
\begin{abstract}We begin to study in this paper orbital and asymptotic stability of
standing waves for a model of Schr\"odinger equation with concentrated
nonlinearity in dimension three. The nonlinearity is obtained
considering a {point} (or contact) interaction with strength
$\alpha$, which consists of a singular perturbation of the Laplacian
described by a selfadjoint operator $H_{\alpha}$, and letting the
strength $\alpha$ depend on the wavefunction: $i\dot u= H_\alpha
u$, $\alpha=\alpha(u)$. 
It is well-known that the elements of the domain of such operator can
be written as the sum of a regular function and a function that
exhibits a singularity proportional to $|x - x_0|^{-1}$, where $x_0$ is
the location of the point interaction.
If $q$ is the so-called charge of the domain
element $u$, i.e. the coefficient of its singular part, then, in order
to introduce a nonlinearity, we let the strength
$\alpha$ depend on $u$ according to the law
$\alpha=-\nu|q|^\sigma$, with $\nu > 0$. 
This characterizes the model as a focusing
NLS with concentrated nonlinearity of power type. For such a model
we prove the existence of standing waves of the form $u (t)=e^{i\omega
  t}\Phi_{\omega}$, 
which are orbitally stable in the range $\sigma \in (0,1)$, and
orbitally unstable when $\sigma \geq 1.$   Moreover, we show that for $\sigma \in
(0,\frac{1}{\sqrt 2})$ every standing wave is asymptotically stable
in the following sense. Choosing
initial data close to the stationary
state in the energy norm, and belonging to a natural weighted $L^p$
space which allows 
dispersive estimates, the following resolution holds:
$u(t) = e^{i\omega_{\infty} t} \Phi_{\omega_{\infty}}
+U_t*\psi_{\infty} +r_{\infty}$, 
where $U$ is the free Schr\"odinger propagator,
$\omega_{\infty} > 0$ and $\psi_{\infty}$, $r_{\infty} \in L^2(\R^3)$
with 
$\| r_{\infty} \|_{L^2} = O(t^{-5/4}) \quad \textrm{as} \;\; t
\rightarrow +\infty$. 
Notice that in the present model the admitted nonlinearity for which
asymptotic stability of solitons is proved is subcritical, in the sense
that it does not give rise to blow up, regardless of the chosen
initial data. 
 
%
%Valid PACS numbers may be entered using the \verb+\pacs{#1}+ command.
\end{abstract}

%\pacs{Valid PACS appear here}% PACS, the Physics and Astronomy
                             % Classification Scheme.
%\keywords{Suggested keywords}%Use showkeys class option if keyword
                              %display desired
\maketitle

\section{Introduction} \label{introduzione}
In this paper we begin a systematic analysis of the stability of
solitary waves for a nonlinear Schr\"{o}dinger equation with a
nonlinearity concentrated in space dimension three.  In particular, we
show that the standing waves of the model are %locally
asymptotically stable in the sense that the evolution of the system in
a neighbourhood of a standing solution
admits a soliton resolution expansion: at large times, the evolution
decomposes as the sum of a standing wave (possibly with different
parameters from those of the reference initial soliton), a free linear
wave, and a small remainder with a spatial decay stronger than the
linear dispersive one.\par\noindent An analogous study concerning the
NLS equation with a concentrated nonlinearity in dimension one was
given in \cite{BKKS} and \cite{KKS}. These papers have been a source
of inspiration for the present work, in particular for what concerns
the general scheme of analysis and for some proofs. However, the one
and the three-dimensional models are different, in particular the
latter is strongly singular and its energy space is
not contained in $H^1(\R^3)$. This fact prevents us from following
step by step the techniques and the results of the cited papers;
in particular, no formal manipulations with delta distributions are
possible, and the full definition of a delta interaction as a point
perturbation of the Laplacian is needed in the analysis.  We shall
comment on that along the paper. \par\noindent We start by giving a
presentation of the model.  According to \cite{ADFT}, we construct a
Schr\"{o}dinger equation with concentrated nonlinearities in dimension
three by starting from the standard three-dimensional {\it linear}
Schr\"odinger operator with a so-called point or delta interaction
(\cite{Albeverio}). Point interactions are widely used in Quantum
Mechanics as models of contact or zero-range interactions and they are
intended to describe strongly concentrated potentials at a point. In
order to rigorously define a delta interaction located at the origin
of $\R^3$ we first consider the Laplacian restricted to the set
$C^{\infty}_0(\R^3\setminus \{0\})$ and obtain a symmetric non
selfadjoint operator with deficiency indices $(1,1)$. Second, by the
classical Von Neumann-Krejn theory there exists a one-parameter family
of selfadjoint extensions, which we denote by
$H_{\alpha}$. The operator $H_{\alpha}$ is defined on the
domain
\begin{equation*}
D(H_{\alpha}) = \{ u \in L^2(\R^3): \; u(x) = \phi(x) +q G_0(x) \,
\textrm{with} \; \phi \in L^2_{loc}(\R^3)\ , \nabla \phi \in L^2(\R^3)\ , \Delta \phi \in
L^2(\R^3),
\end{equation*}
\begin{equation}\label{dha} q \in \C,\quad \quad \lim_{x \rightarrow 0} (u(x) - q
G_0(x)) = \alpha q  \},
\end{equation}
\noindent where $G_0$ is the Green's function of the Laplacian in
three dimensions, i.e.
\begin{equation}	\label{Green}
G_0(x) = \frac{1}{4\pi |x|},
\end{equation}

\noindent and the action is given by $H_{\alpha} u(x) = -\Delta \phi(x)$, $x \in \R^3$.
To summarize, any element of the domain decomposes in a regular part
$\phi$ and a singular (Coulombian) part; the coefficient $q$ of the
singular part is conventionally called {\it charge}, and the boundary
condition imposes a relation between the charge and the value of the
regular part at the origin depending on the so-called {\it strength}
$\alpha$ of the point interaction, which is the parameter that fixes the selfadjoint extension. 
\par\noindent
An alternative equivalent and perhaps more direct construction, which
better justifies the interpretation and the physical meaning of
$H_{\alpha}$, can be given by defining $H_{\alpha}$ as a suitable scaling
limit (in norm resolvent sense) of a family of Schr\"odinger operators
of the form $-\Delta + V_{\epsilon}$, where $V_{\epsilon}$ is a short
range potential that approximates a delta distribution as $\epsilon\to
0$. A closer analysis of the above scaling procedure shows that the point interaction cannot be interpreted as a kind of "laplacian plus delta distribution", differently from the one dimensional case; moreover the parameter $\alpha$ appearing in the above definition and  characterizing the particular selfadjoint extension is related to zero energy
resonances of the approximating operators. For details and further
information see \cite{Albeverio}. 
\par\noindent Whatever the definition given to the operator
$H_{\alpha}$ is, we recall that, for $\alpha\geq 0$ (repulsive delta
interaction), $H_\alpha$ is positive and its spectrum is purely
absolutely continuous and coincides with $[0,+\infty)$, while for
$\alpha<0$ (attractive delta interaction) an isolated simple negative
eigenvalue $\lambda= -(4\pi\alpha)^2$ appears, corresponding to a
bound state. 
A second property relevant to the physical interpretation of the model
and related to the value of $\alpha$ is that the scattering length of
a delta interaction of strength $\alpha$ is given by
$-(4\pi\alpha)^{-1}\ $. 
The closed and lower bounded quadratic form  associated to $H_{\alpha}$ is
\begin{equation}	\label{quadratic}
\mathbf{H_{\alpha}}(u) = \int_{\R^3} |\nabla \phi x)|^2 dx
+\alpha |q|^2,
\end{equation}

\noindent defined on the domain of {\it finite energy states}
\begin{equation}    \label{def-V}
V = \{ u \in L^2(\R^3): \; u(x) = \phi(x) +q G_0(x), \, \textrm{with}
\; \phi \in L^2_{loc}(\R^3), \; \nabla \phi \in L^2(\R^3), \, q \in
\C \},
\end{equation}

\noindent which is a Hilbert space endowed with the norm
\begin{equation}	\label{norm-V}
\|u\|^2_V = \| \nabla \phi \|_{L^2} + |q|^2.
\end{equation}

\noindent Note that for a generic element $u$ of the form domain the charge $q$ and its regular part $\phi$ are independent of each other. %Both are uniquely determined by $u$; for example, the relation between the element $u$ and its charge is given by
%$$q = \lim_{\epsilon \downarrow 0} 4\pi \epsilon \int_{S(0,\epsilon)} u(x)\ d\sigma $$
%where $S(0,\epsilon)$ is the sphere of radius $\epsilon$ centered at the origin and $d\sigma $ is the surface measure (see \cite{Teta} for details on the quadratic form of a delta interaction).\par\noindent
%We stress the fact that only when passing to the s.a. operator associated to 
%$\mathbf{H_{\alpha}}$ the domain element satisfies a boundary condition at the singularity and one has $q=\alpha^{-1}\phi(0)$.
%\par\noindent 
Note also that the energy domain is strictly larger than $H^1(\R^3)$.
%$H^1(\R^3)\oplus \C \subset V \subset D^1 \oplus \C $ in both cases with strict inclusion.
So, the linear problem cannot be considered as a small perturbation of
the standard free problem in the sense of the quadratic forms (at
variance with the one-dimensional case). An equivalent representation
of the energy space is obtained, fixed $\lambda>0$,  by
\begin{equation}\label{def-V2}
V = \left\{ u = \phi_{\lambda}+ q G_{\lambda}, \,
\textrm{with} \, \phi_{\lambda} \in H^1(\R^3), \, q \in \C\ , \ G_{\lambda}(x)=\frac{e^{-\lambda|x|}}{4\pi|x|} \right\},
\end{equation}

\noindent and one can define an equivalent energy norm by
$$\|u\|_V^2 = \| \nabla \phi_{\lambda} \|_{L^2}^2 +|q|^2, \quad \forall u \in
V.$$
Notice that $G_{\lambda}\in L^2(\R^3)$ and $\phi_{\lambda} \in H^1(\R^3)$, while in the representation \eqref{def-V} the regular part belongs 
%of the regular part was just in
to the {\it homogeneous} Sobolev space $D^1(\R^3)$ and $G_{0}\notin L^2(\R^3)$.
\par\noindent
Following \cite{ADFT}, the nonlinear model can be defined by allowing the strength
$\alpha$ to depend on $u$ as $\alpha(u) = -\nu |q|^{2\sigma}$, with $ \nu > 0,  \sigma > 0$, so that
$$D(H_{\alpha(u)}) = \{ u \in L^2(\R^3): \; u(x) = \phi(x) +q G_0(x) \,
\textrm{with} \; \phi \in H^2_{loc}(\R^3), \, \Delta \phi \in
L^2(\R^3),$$
$$q \in \C,\lim_{x \rightarrow 0} (u(x) - q
G_0(x)) = - \nu |q|^{2 \sigma} q  \},$$

\noindent and $H_{\alpha(u)} u = -\Delta \phi$.
%\noindent Let us note that one can consider the Green function of
%the operator $-\triangle +\lambda$, $\lambda > 0$, then in
%\cite{Teta} is proved that the space of finite energy $V$ can be
%written in the following equivalent form
%$$V = \{ u \in L^2(\R^3): \; u(x) = \phi_{\lambda}(x) +q G_{\lambda}(x) \,
%$\textrm{with} \; \phi_{\lambda} \in H^1(\R^3), \, q \in \C \}.$$
\noindent
In the following sections, we often omit the notation $H_{\alpha(u)}$
in favour of $H_{\alpha}$ if no risk of confusion exists between the
linear and the nonlinear operator.  We stress that the nonlinarity we are
considering is {\em focusing}. It can be interpreted as modeling the
action of a defect in a  medium which exerts a nonlinear response to
the propagation.
We remark that a more general definition of concentrated
nonlinearities (with applications to the case of the wave equation) is
given in \cite{NP}. 

We consider the evolution
generated by the nonlinear operator $H_{\alpha(u)}$, i.e. 
\begin{equation}    \label{eq1}
i\frac{du}{dt} = H_{\alpha(u)} u.
\end{equation}

\noindent %By the above discussion on the meaning of $H_{\alpha}$ we can consider the previous equation \eqref{eq1} as a  nonlinear Schr\"odinger equation (NLSE) with concentrated nonlinearity of power type. 
%The nonlinearity is focusing if $\nu>0$ and defocusing if $\nu<0$. 
%The concentrated nonlinearity can be interpreted as a defect in the medium which exerts a nonlinear response to propagation. 
In the present literature, there is some physical and numerical
analysis of Schr\"odinger dynamics in the presence of nonlinear
defects,  mainly focused on the milder one-dimensional case (\cite{MA},\cite{SKBRC},\cite{DM}). %We recall that in dimension one the singularity is milder, because the delta interaction amounts to a bilateral boundary condition at the singularity and the form domain of the corresponding linear operator is the usual Sobolev space $H^1$. 
The more technical construction of the three-dimensional problem %The physical interpretation of the three dimensional case is similar at a formal level, but the more difficult presentation of the model and the fact that operator and form domain elements admit Coulomb singularities 
has hindered extended modelistic study, numerical work as well as
rigorous analysis. Moreover, a certain amount of literature is devoted
to NLS with nonhomogeneous (i.e. $x$-dependent and decaying)
nonlinearities, yet with a relatively low decay at infinity
(see \cite{FW,GeS} and references therein).  

 \noindent 
Local (for any $\sigma > 0$) and global (for $\sigma < 1$) well-posedness of the Cauchy problem associated to
the nonlinear Schr\"{o}dinger equation \eqref{eq1} in the space $V$
have been established in \cite{ADFT} and \cite{ADFT2}. In
particular, \eqref{eq1} admits two conserved quantities called {\it
mass} and {\it energy}, defined as
$$\begin{array}{ll}
   M(u(t)) = \|u(t)\|_{L^2}^2, \qquad
   E(u(t)) =\frac{1}{2}\|\nabla \phi(t)\|_{L^2}^2 -\frac{\nu}{2\sigma +2}
|q(t)|^{2\sigma +2}.
\end{array}$$
\noindent %5The first quantity is called (in this context) mass and the second one energy, as usual. With respect to standard NLSE the momentum conservations is lost, with the breaking of translation invariance due to the delta interaction.\par\noindent
In Section \ref{preliminari} we prove that equation \eqref{eq1} admits standing waves, i.e. solutions
of the form $u(x,t) = e^{i\omega t} \Phi_{\omega}(x)\ ,$ where the
profile or amplitude $\Phi_\omega$ up to a phase factor $e^{i\theta}$
is given by 
\begin{equation}	\label{fiom}
\Phi_{\omega}(x)=\left(\frac{\sqrt{\omega}}{4\pi \nu} \right)^{\frac{1}{2\sigma}}
\frac{e^{-\sqrt{\omega} |x|}}{4\pi |x|}\ .
\end{equation}
The set of standing waves is called the {\it solitary manifold}
$\mathcal{M}$, and the main concern of this paper consists in the
study of
the large-time evolution of initial data in the vicinity of $\mathcal{M}$.
A first result concerns stability and instability of standing
waves. 
%Standing waves are time dependent, while equation \eqref{eq1} is
%autonomous. So 
Stability has to be intended as {\em orbital} stability, i.e. Lyapunov
stability up to symmetries of the equation, in this case up to gauge
($U(1)$) invariance. The orbit of $\Phi_{\omega}$ is then
${\mathscr
O}(\Phi_{\omega})=\{e^{i\theta}\Phi_{\omega}(x),\ \theta\in \R\}$. 
Thus, by definition, the
state $\Phi_{\omega}$ is orbitally stable if for every $\epsilon>0$
there exists $\delta>0$ such that 
\begin{equation*}
d(\psi(0), {\mathscr O}(\Phi_{\omega}))<\delta \quad \Rightarrow \quad d(\psi(t), {\mathscr O}(\Phi_{\omega}))<\epsilon \quad\quad \forall t > 0
\end{equation*}
where $d(\psi, {\mathscr O}(\Phi_{\omega}))=\inf_{u\in{\mathscr
O}(\Phi_{\omega})} \|\psi-u\|_{V}$.
A stationary state is said to be unstable if it is not stable. Then,
we have the following result,
proved in Section \ref{staborb}: \par\noindent
%It turns out that the following result holds true.\par\noindent
{\bf Theorem}  {\bf (Orbital Stability)}
Let us consider \eqref{eq1}. Then, for every $\omega>0$, 

\noindent (a) if $0<\sigma<1$, then the state $\Phi_{\omega}$ is orbitally stable 

\noindent (b) if $\sigma\geq 1$, then $\Phi_{\omega}$ is orbitally unstable.

\vskip3pt
\par\noindent

The result directly follows from Weinstein \cite{W1} and
Grillakis-Shatah-Strauss \cite{GSS} theory for the case $\sigma\neq
1$, while for the case $\sigma=1$ the pseudoconformal invariance of
the equation gives the instability by blow-up.\par\noindent 
The core of the paper is devoted to the study of the asymptotic
stability of the family of stationary states. Asymptotic stability
means, loosely speaking, that the solution $u(t)$ corresponding to an
initial datum $u(0)$ close to the family of orbits, approaches some
element of the family of orbits as $t\to \infty$. 
The analysis makes use %(this is true in general for NLSE with inhomogeneous interactions or potentials in which translation invariance is broken) 
of the representation
\begin{equation}\label{asrepr}
u(t,x) = e^{i\Theta(t)} \left( \Phi_{\omega(t)}(x) +\chi(t,x)
\right),
\end{equation}
\noindent where
$\Theta(t) = \int_0^t \omega(s) ds +\gamma(t),$ and $\gamma(t)$ is a
suitable phase. Namely, the solution is represented at every time as a
modulated solitary wave, with time dependent parameters, up to a fluctuating 
remainder $\chi$ which has to be controlled. Asymptotic stability of
the family of standing waves means that the modulating parameters
$\omega(t)$ and $\gamma(t)$ have a limit as $t \to \infty$, and the
fluctuation $\chi$ is in some sense a small and decaying dispersive
correction; the radiation damping through dispersion is responsible for the
``dissipative" asymptotic behaviour of the solution $u$ around the family of relative equilibria ${\mathscr
O}(\Phi_{\omega})\ .$ 
Notice that, however, in general the solution does not
converge to the solitary wave to which it was close initially.  
\par\noindent The subject of asymptotic stability of solitary
waves was pioneered by Soffer and Weinstein (\cite{SW1}, \cite{SW2}),
and Buslaev and Perelman (\cite{BP1}, \cite{BP2}), who developed the main
strategies and techniques, nowadays classical; a more
recent presentation is contained in \cite{BS}. Many relevant later
contributions refining and enlarging the hypotheses in the original
papers, as well as concerning the kind of initial admitted data and nonlinearities,
are contained in \cite{Cu,TY1,TY2,GNT,GS1,GS2,CM}. 
According to this consolidated analysis, one must preliminarily indagate the
spectrum of the linearization  of equation \eqref{eq1} around the
solitary solution. Writing $u=e^{i\omega t}(\Phi_\omega + R)$ and
identifying $R$ with the vector of its real and imaginary part, we
obtain that it satisfies the canonical system 
$$ 
J\frac{dR}{dt} =  \left[
  \begin{array}{cc}
    H_{\alpha_1} + \omega & 0 \\
    0 & H_{\alpha_2} + \omega \\
  \end{array}
\right]R\equiv DR
$$ 
\noindent
where $H_{\alpha_j}$ are (linear) delta interaction hamiltonian
operators  with fixed strength $\alpha_j$ that depend on the
stationary state $\Phi_\omega$ (through its charge) and on the
parameters of the model $\nu, \sigma$ (see eq. \eqref{alpha}). So the
dynamics of the linearization of the NLS around the standing wave
$\Phi_\omega$ is controlled by the nonselfadjoint (Hamiltonian) matrix
operator $L=JD$. The explicit characterization of the spectrum of the
linearization $L$ is possible due to the detailed knowledge of the properties of operators
$H_{\alpha_j}$. Such feature is unfrequent and
allows to avoid further spectral assumptions. 
The complete result is given in Section \ref{linearizzato},
Theorem \ref{ris}. Here it is sufficient to recall that    
in this paper we study asymptotic stability of standing waves in the
range $\sigma\in(0,1/\sqrt{2})$ only, which corresponds to $L$ having
no eigenvalues different from zero and no resonances at the threshold
of the essential spectrum.  A forthcoming paper will treat the case
$\sigma\in (1/\sqrt{2},1)$, where two simple eigenvalues $\pm i
2\sigma \sqrt{1 -\sigma^2} \omega$ appear. \par\noindent 
\par\noindent 
Let us notice that the representation \eqref{asrepr} amounts in fact
 to a change of coordinates from the original global $u$ to the new
 set $\{\omega, \gamma, \chi\}$, with a finite dimensional component
 given by $\{\omega,\gamma \}$, that describes the solitary manifold
 and an infinite dimensional one described by $\chi$. However, the
 representation is not unique, because any choice of
 ${\omega, \gamma}$ gives a corresponding choice of $\chi$ such that
 $u$ given by \eqref{asrepr} is a solution of \eqref{eq1}; so one has
 to restrict in some way the behaviour of the new parameters
 $\{\omega, \gamma, \chi\}$ of the solution.  To this end, we exploit the fact that the solitary manifold
 can be naturally endowed with a symplectic structure (see
 Section \ref{sw}) and it
%with
 %respect to the usual symplectic form in $L^2(\R^3, \R^2)$, and it
 turns out that its tangent space $T_{\Phi_{\omega}}$  coincides with
 the generalized kernel of the linearization $L$. The generalized
 kernel is in turn non trivial, so the propagator $e^{-tL}$ has a
 component growing in time. A parametrization of the running
 approximate solitary wave in the neighborhood of the solitary
 manifold suitable for asymptotic analysis is hence obtained through a
 symplectic splitting in a component along the solitary manifold and a
 component transversal (symplectically orthogonal) to it. Requiring
 that the infinite dimensional component $\chi$ is purely transversal, i.e. projects to zero on the directions of the discrete spectrum, here reduced to the generalized kernel of the linearization,
 provides the set of the so called modulation (coupled) equations
 for the parameters $\omega(t)$ and $\gamma(t)$, as well as a corresponding
 partial differential equation for $\chi$ (see \cite{FGJS1} for an
 enlightening description of the symplectic projection method). The
 goal is to establish the asymptotic behaviour of the solutions to the
 modulation equations with
 a simultaneous control of the decay of the nonlinear part $\chi$,
 through the so-called majorant's method
 (see \cite{BP1,BP2,BS}).\par\noindent 
The main result of this paper is the following, and it is proven in
 Section  \ref{stabasi}. \par\noindent
{\bf Theorem} ({\bf Asymptotic stability})

Assume $\sigma \in (0, 1/\sqrt{2})$. Let $u \in C(\R^+, V)$ be a
solution to equation \eqref{eq1} with $u(0) = u_0 \in V \cap L^1_w$
and denote $d = \| u_0 -e^{i\theta_0} \Phi_{\omega_0} \|_{V \cap
L^1_w},$ for some $\omega_0 > 0$ and $\theta_0 \in \R$. Then, if $d$
is sufficiently small, the solution $u(t)$ can be decomposed as
follows
\begin{equation}
u(t) = e^{i\omega_{\infty} t} \Phi_{\omega_{\infty}}
+U_t*\psi_{\infty} +r_{\infty}(t), %\quad \textrm{as} \;\; t \rightarrow +\infty,
\end{equation}

\noindent where $\omega_{\infty} > 0$ and $\psi_{\infty}$,
$r_{\infty}(t) \in L^2(\R^3)$, with $\| r_{\infty}(t) \|_{L^2} =
O(t^{-5/4})$ as $t \rightarrow +\infty.$
 
\vskip5pt
\par\noindent
In the previous statement, $L^1_w$ is defined in Section \ref{dispersive} and
is a weighted space of integrable functions. The weight guarantees the
validity of the dispersive estimates needed in order to control the decay of
the transversal evolution, and it seems at present unavoidable in view
of the singularity of finite energy states. Moreover, it imposes a
certain localization on the the admitted initial data, which seems to
be a technical requirement. The norm $||\cdot||_{V\cap L^1_{w}}$ is defined as the maximum of the norms of the two Banach spaces $V$ and $L^1_w$.
\par\noindent
%Let us give some final comments on the proof and the result. 
%As
%already said, the method and the main steps are the classical ones
%settled in the cited seminal papers and rooted in the asymptotic
%analysis of modulational equations. 
Concerning the treatment of the modulation equations, one of the
main additional difficulties with respect to standard models, and in
particular with the case of concentrated nonlinearities in one
dimension treated in \cite{BKKS} and \cite{KKS}, is that the equations
controlling the evolution of the transversal part $\chi$ have domains
that change with time. This fact forced us to make use of the
variational formulation (i.e. in terms of quadratic forms) instead of
the traditional strong formulation (i.e. in terms of operators and
their domains). The same problem propagates to the proof of the
asymptotics given in the above theorem.  A last remark concerns the
seemingly anomalous value of the nonlinearities where asymptotic
stability is proven; this because in the typical situations, when
standard NLS with or without potential is treated, it is difficult to
have information about subcritical nonlinearities (but see the notably
exception in \cite{KM}), and in particular pure power. On the other
hand, the present model 
corresponds to an inhomogeneous (space dependent and strongly singular)
nonlinearity; this seems to indicate that the analysis of specific
models can give results not accessible to general theory, at least at present.
The paper is organized as follows. In Section II we fix some
notation, describe the set of standing waves for the system of
interest, and deduce the linearized evolution around a standing
solution. In Section III, using the method by Grillakis, Shatah and
Strauss, we prove the orbital stability for any standing wave in the
case of low nonlinearity power (i.e. $\sigma < 1$, and orbital
instability in the case of large nonlinearity power (i.e. $\sigma >
1$); in the critical case ($\sigma = 1$) we directly show that any
stationary state is affected by instability due to the vicinity of
initial data that give rise to blow-up solutions. Section IV is
devoted to the study of the linearized problem: first, the resolvent
of the linearized generator of the evolution is explicitly
constructed; then, it is used in order to derive dispersive estimates
in suitable weighted spaces. In Section V we start the analysis of the asymptotic completeness by deducing the modulation
equations of the system. In Section VI we prove the
decay rate in time of the solutions to the modulation equations.
Finally. in Section VII we prove the result about asymptotic stability.
The paper ends by three appendices related to the content of Section
IV. In the first appendix we construct the generalized kernel of the
generator of the linearized evolution, in the second we prove a
convenient expression for the resolvent of the same operator, while in
the last appendix we give the explicit linearized dynamics along the generalized kernel.

%,
%which needs the writing of the nonlinear flow in a weak
%formulation. \par\noindent 
\vskip10pt
\par\noindent
{\bf Acknowledgments} The authors are grateful to Gianfausto
Dell'Antonio and Galina Perelman for several discussions. 
R. A. is partially supported by 
 the PRIN2009 grant {\em Critical Point Theory and Perturbative 
Methods for Nonlinear Differential Equations}.

\noindent
This paper is part of the Ph.D. thesis of C. O.

%\noindent Here we study the orbital and asymptotic stability of
%solitary waves of the equation \eqref{eq1} in $V$.

%\noindent Note that with $\alpha = +\infty$ one recover the linear problem. In this case the explicit formula of the 
%solution formula
%$$u(t, x) = \frac{1}{(4\pi t)^{3/2}} \int_{\R^3}
%e^{-\frac{|x-y|}{4it}} u_0(y) dy$$

%\noindent for $t \neq 0$ and all initial data $u_0$, shows that if
%$u_0$ is sufficiently small the function $u(t)$ decays at the rate
%$O(|t|^{-3/2})$ in the $L^{\infty}$ norm. Therefore there is no
%stability for the stationary waves.

\section{Preliminaries} \label{preliminari}

\subsection{Hamiltonian structure}\label{sw}
%As in the standard case, NLSE with concentrated nonlinearity is an
%Hamiltonian infinite dimensional system. Apart from its general
%interest, this fact is particularly relevant in the study of
%stability properties of their stationary or standing waves, which is
%the main subject of this section.\par\noindent 
We consider $L^2(\R^3,\C)$ as a real Hilbert space endowed with the scalar product
\begin{equation}\label{scalarproduct}
(u,v)_{L^2} = \Re \int_{\R^3} u \overline{v} \, dx = \int_{\R^3} (\Re v \Re u +\Im v \Im u) dx.
\end{equation}
\noindent It is sometimes convenient to shift from the complex valued
representation of $u$ to the vector real valued one through the
identification $u=\Re u + i \Im u \mapsto (\Re u, \Im
u)=(u_1,u_2)$. As a consequence, $H^s(\R^3, \C) \cong H^s(\R^3,
\R^2)$, while multiplication by $i$ is equivalent to multipication by
the matrix $-J$, where  
\begin{equation}\label{J}
J = \left[
  \begin{array}{cc}
    0 & 1 \\
    -1 & 0 \\
  \end{array}
\right]\ .
\end{equation}

\noindent The space $L^2(\R^3)$ is also a symplectic manifold  when endowed with the symplectic form
\begin{equation}	\label{formasimplettica}
\Omega(u,v)=\Im \int_{\R^3} u{\overline v} \ dx = \int_{\R^3} (\Re v \Im u -\Im v \Re u) dx =\int_{\R^3} (u_2 v_1 -u_1 v_2) dx.
\end{equation}

\noindent Along the paper we often shift between real and complex
representation when no ambiguity occurs. %As it is common in the literature.
\par\noindent
%For recognizing the Hamiltonian structure of NLSE, it is better to refer to the real representation.
In our model the Hamiltonian functional coincides with the total energy and  (exploiting the decomposition of an element of the form domain in regular and singular part) it is given by
\begin{equation}	\label{energy}
 E(u) =\frac{1}{2}\|\nabla \phi\|_{L^2}^2 -\frac{\nu}{2\sigma +2}
|q|^{2\sigma +2}, \ \ u=\phi+qG_0 \in V.
\end{equation}

\par\noindent
Correspondingly, the NLS \eqref{eq1} takes the hamiltonian form
\begin{equation}\label{eqham}
\frac{du}{dt}=J \ E^{'} (u)\ .
\end{equation} 
%\par\noindent
%Being the Hamiltonian \eqref{energy} independent of time, one has conservation of energy along the flow; moreover, the Hamiltonian is $U(1)$-Gauge invariant, in the sense that $E(e^{i\theta} u)=E(u)\ \forall \theta \in [0, 2\pi)$. As a consequence, the mass $M(u(t)$ is conserved too.
where the prime denotes the differential of the considered functional at the point $u$.
\subsection{Standing waves}
%In the present section we want to determine and study the properties of standing or solitary waves for equation \eqref{eq1}, or equivalently \eqref{eqham}.
Standing waves are solutions of the equation \eqref{eq1} of the form $u(x,t) = e^{i\omega t} \Phi_{\omega}(x).$
It immediately follows that if a standing wave exists, then the
amplitude $\Phi_{\omega}$ satisfies the nonlinear equation 
\begin{equation}    \label{gs_eq}
 H_{\alpha(\Phi_{\omega})} \Phi_{\omega} +\omega \Phi_{\omega} = 0.
\end{equation}

\begin{prop}
Standing waves for equation \eqref{eq1} exist if and only if $\nu > 0$. In
such a case the set of solitary waves is given by the two-dimensional manifold
\begin{equation}	\label{M}
\mathcal{M} \: = \ %\mathcal{M} (\omega, \Theta)= 
\left\{\ e^{i\Theta}\ \Phi_{\omega}% = e^{i\Theta}\ \left(
%\frac{\sqrt{\omega}}{4\pi \nu} \right)^{\frac{1}{2\sigma}}
%\frac{e^{-\sqrt{\omega} |x|}}{4\pi |x|} : \; 
\ , \ \omega > 0\ ,\ \Theta \in [0, 2 \pi) \right\},
\end{equation}

\noindent where the function $$\Phi_\om(x)=\left(\frac{\sqrt{\omega}}{4\pi \nu} \right)^{\frac{1}{2\sigma}}
\frac{e^{-\sqrt{\omega} |x|}}{4\pi |x|}\ $$ and the parameters $\omega$ and $\Theta$ play the role of local coordinates.
\end{prop}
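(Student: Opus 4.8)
The plan is to solve the stationary equation \eqref{gs_eq} directly by exploiting the representation \eqref{def-V2} of the energy space $V$. I would write a candidate profile as $\Phi_\omega = \phi_\lambda + q G_\lambda$ with $\lambda = \sqrt{\omega}$, $\phi_\lambda \in H^1(\R^3)$, $q \in \C$, and first re-express the action of $H_{\alpha(\Phi_\omega)}$ on such an element. Since the charge is independent of the representation, $\Phi_\omega = \phi_0 + q G_0$ with $\phi_0 = \phi_\lambda + q(G_\lambda - G_0)$; using $(-\Delta + \omega)G_\lambda = \delta = -\Delta G_0$ one gets $-\Delta(G_\lambda - G_0) = -\omega G_\lambda$ (a distributional identity, with $G_\lambda - G_0$ bounded and continuous), hence $H_{\alpha(\Phi_\omega)}\Phi_\omega = -\Delta \phi_0 = -\Delta\phi_\lambda - \omega q G_\lambda$. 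Substituting this into \eqref{gs_eq} the singular terms cancel and the equation collapses to $(-\Delta + \omega)\phi_\lambda = 0$ in $H^1(\R^3)$; as $\omega > 0$ makes this operator injective, $\phi_\lambda \equiv 0$, so necessarily $\Phi_\omega = q G_{\sqrt{\omega}}$.

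It then remains to impose the boundary condition encoded in the nonlinear domain, $\lim_{x\to 0}(\Phi_\omega(x) - q G_0(x)) = -\nu |q|^{2\sigma} q$. A short Taylor expansion of $G_\lambda - G_0 = (e^{-\lambda|x|} - 1)/(4\pi|x|)$ near the origin gives the value $-q\sqrt{\omega}/(4\pi)$ for the left-hand side, so the condition becomes $-q\sqrt{\omega}/(4\pi) = -\nu|q|^{2\sigma} q$. A nontrivial solution requires $q \neq 0$, and dividing by $q$ leaves the algebraic relation $|q|^{2\sigma} = \sqrt{\omega}/(4\pi\nu)$, which is solvable precisely when $\nu > 0$ and then forces $|q| = (\sqrt{\omega}/(4\pi\nu))^{1/(2\sigma)}$, leaving the phase $\Theta$ of $q$ free. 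Writing $q = e^{i\Theta}|q|$ reproduces exactly the family \eqref{fiom}; conversely, each such function is easily checked to lie in $D(H_{\alpha(\Phi_\omega)})$ (its regular part $q(G_{\sqrt{\omega}} - G_0)$ is bounded near the origin with Laplacian $-\omega q G_{\sqrt{\omega}} \in L^2_{loc} \cap L^2$, so by elliptic regularity it is in $H^2_{loc}$, and $G_{\sqrt{\omega}} \in L^2$ gives $\Phi_\omega \in V$) and to solve \eqref{gs_eq} by the computation above. When $\nu \le 0$ the displayed algebraic relation has no solution with $q \neq 0$ (for $\nu = 0$ the boundary condition forces $q = 0$), so no standing wave exists; alternatively, for $\nu \le 0$ one has $\alpha(\Phi_\omega) \ge 0$, and the corresponding linear operator $H_\alpha$ is nonnegative with purely absolutely continuous spectrum, which already rules out the eigenvalue $-\omega < 0$.

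Finally, to identify $\mathcal{M}$ as a two-dimensional manifold with $(\omega,\Theta)$ as local coordinates, I would note that $(\omega,\Theta) \in (0,+\infty)\times[0,2\pi) \mapsto e^{i\Theta}\Phi_\omega \in V$ is smooth and injective, and is an immersion: in the real Hilbert space $V$ the $\Theta$-derivative equals $i e^{i\Theta}\Phi_\omega$ (purely imaginary), while the $\omega$-derivative is real-valued and nonzero, so the two are linearly independent at every point.

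The computations involved are elementary. The only point requiring some care is the bookkeeping between the two representations \eqref{def-V} and \eqref{def-V2} of $V$ together with the verification that the candidate profile genuinely belongs to the \emph{nonlinear} operator domain — in particular that the boundary-condition limit is taken in the correct pointwise sense and that the local regularity of the regular part holds — but I do not expect any substantive obstacle here.
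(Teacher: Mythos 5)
Your argument is correct, and it reaches the boundary condition $-q\sqrt{\omega}/(4\pi)=-\nu|q|^{2\sigma}q$ by a genuinely different route from the paper. The paper solves $-\Delta\Phi_\omega+\omega\Phi_\omega=0$ away from the origin by separation of variables in spherical coordinates, discarding all harmonics with $l\neq 0$ via the asymptotics of Bessel functions and square-integrability, and only then imposes the boundary condition in $D(H_{\alpha(u)})$. You instead exploit the decomposition \eqref{def-V2} with $\lambda=\sqrt{\omega}$: since $-\Delta(G_{\sqrt\omega}-G_0)=-\omega G_{\sqrt\omega}$, the singular parts cancel in \eqref{gs_eq} and the regular part must satisfy $(-\Delta+\omega)\phi_\lambda=0$ with $\phi_\lambda\in L^2$ and $\Delta\phi_\lambda\in L^2$, whence $\phi_\lambda\equiv0$ by Fourier transform. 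This is cleaner (no Bessel functions, no term-by-term treatment of the harmonic expansion) and it makes transparent exactly where the domain structure of the point interaction enters; it also dovetails with the bookkeeping between \eqref{def-V} and \eqref{def-V2} that the rest of the paper uses. The one point you leave implicit is the exclusion of standing waves with $\omega\le 0$: the very ansatz $\Phi_\omega=\phi_\lambda+qG_{\sqrt\omega}$ presupposes $\omega>0$, whereas the paper extracts $\omega>0$ from the square-integrability of the radial solution. This is a one-line repair in your framework — for $\omega\le0$ the stationary equation would make $-\omega\ge0$ an $L^2$-eigenvalue of $H_{\alpha}$ with $\alpha=-\nu|q|^{2\sigma}<0$, which is impossible since by \eqref{spettro} the spectrum on $[0,+\infty)$ is purely absolutely continuous (and the $q=0$ alternative gives the trivial solution) — but it should be stated. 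Your remaining steps (the Taylor expansion of $G_{\sqrt\omega}-G_0$ at the origin, the dichotomy in $\nu$, the verification that $qG_{\sqrt\omega}$ genuinely lies in the nonlinear domain, and the immersion argument for the coordinates $(\omega,\Theta)$) are all sound, and the last two are in fact more careful than what the paper records.
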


%\noindent We occasionally denote $\mathcal{M} (\omega, \Theta)$ simply by $\mathcal{M}$.
\begin{proof}
%Without loss of generality we can assume $\nu \neq 0$.

\noindent Recall that the function
$G_0$ defined in \eqref{Green} satisfies the equation
$-\triangle G_0 = \delta$

\noindent where $\delta$ is the Dirac's delta distribution centred at
$x = 0$. Hence, for $x \neq 0$ equation \eqref{gs_eq} is
equivalent to $-\triangle \Phi_{\omega}(x) +\omega \Phi_{\omega}(x ) = 0.$
%First of all, let us note that the non radial solutions to
%the former equation do not belong to $L^2(\R^3)$. Indeed,
Let us introduce, with a slight abuse, the function $f(r,\theta, \phi)=\Phi_{\omega}(x)$ and consider the corresponding equation in spherical
coordinates, namely
$$-\frac{\partial^2 f}{\partial r^2} -\frac{2}{r} \frac{\partial
f}{\partial r} -\frac{1}{r^2} \frac{\partial^2 f}{\partial \phi^2}
-\frac{\cos \phi}{r^2 \sin \phi} \frac{\partial f}{\partial \phi}
-\frac{1}{r^2 \sin^2 \phi} \frac{\partial^2 f}{\partial \theta^2}
+\omega f = 0,$$

\noindent and exploit the spherical harmonics expansion of the
solution $f(r, \theta, \phi) = \sum_{l=0}^{+\infty} \sum_{j = -l}^l v_{l, j}(r)
Y_{l,j}(\theta, \phi),$

\noindent where $Y_{l,j}$ denotes the set of spherical harmonics
which is an orthonormal basis of $L^2([0, \pi] \times [0, 2\pi], \sin \theta d\theta d\phi)$.
Since
$$\frac{\partial^2 Y_{l,j}}{\partial \phi^2} +\frac{\cos
\phi}{\sin \phi} \frac{\partial Y_{l,j}}{\partial \phi}
+\frac{1}{\sin^2 \phi} \frac{\partial^2 Y_{l,j}}{\partial
\theta^2} = -\lambda Y_{l,j}, \qquad \textrm{for some} \;\; \lambda
\in \C,$$
one has that $\lambda$ belongs to the set $\{ \lambda_l : = l (l+1),
\, l \in {\mathbb N} \}$, and so 
the functions $v_{l,j}$ solve $-v_{l, j}''(r) -\frac{2}{r} v_{l, j}'(r) +\left( \omega
-\frac{\lambda_l}{r^2} \right) v_{l,j}(r) = 0.$
Then, from formula
%\noindent The solution of the last equation can be expressed in
%terms of a Bessel function $Z_{\nu}(z)$, more precisely from formula
8.491.6 in \cite{tavole},
$$v_{j,l}(r) = \frac{1}{\sqrt{r}} Z_{\sqrt{\frac{1}{4} +\lambda}}
(\sqrt{\omega} r),$$
where $Z_\nu$ is a Bessel's function.
By the asymptotic expansions 8.443 and 8.451.1 in \cite{tavole} one
immediately has that if $\lambda \neq 0$, then $v_{j,l}$ cannot belong
to $L^2(\R^+, r^2 dr)$. 
%$$\int_0^{+\infty} |u_{j,l}(r)|^2 r^2 dr = \left. \left[ r^2 \left( \left( Z_{\sqrt{\frac{1}{4} +\lambda}} (\sqrt{\omega} r) \right)^2 -Z_{\sqrt{\frac{1}{4} +\lambda} -1} (\sqrt{\omega} r) Z_{\sqrt{\frac{1}{4} +\lambda} +1} (\sqrt{\omega} r) \right) \right] \right|_0^{+\infty} = +\infty,$$
%\noindent from which follows that $u \notin L^2(\R^3)$.
Hence, we fix $\lambda = 0$ and denote $\Phi_{\omega}(x) =
\frac{v(r)}{r}$, $r = |x|$. Thus
$v$ has to be a square-integrable solution of $v''(r) -\omega v(r) = 0,$ and finally
$$\Phi_{\omega}(x) = \frac{q e^{-\sqrt{\omega}
|x|}}{4\pi |x|},$$

\noindent for some $q \in \C$ and $\omega > 0$. Furthermore, by the boundary  condition in the definition of $D
(H_{\alpha(u)})$,
i.e. $\lim_{x \rightarrow 0} (\Phi_{\omega}(x) - q G_0(x))
=\alpha(\Phi_{\omega}) q$, one gets $-\frac{q}{4\pi} \sqrt{\omega} =
-\nu |q|^{2\sigma} q$, and supposing $\nu\neq 0$ one obtains
$|q|^{2\sigma} = \frac{\sqrt{\omega}}{4\pi \nu}.$ 
 This requires
$\nu > 0$, so
$$\Phi_{\omega}(x) = \left( \frac{\sqrt{\omega}}{4\pi \nu}
\right)^{\frac{1}{2\sigma}} \frac{e^{-\sqrt{\omega} |x|}}{4\pi
|x|}$$

\noindent which, up to a phase factor, gives the stated result.
In the case $\nu=0$, from boundary condition we get $q=0$ or
$\omega=0$. If $q=0$ , then the function $u$ vanishes. If $\omega=0$,
then  one has $u(x)=\frac{1}{4\pi |x|}$, which is the resonance function of the delta interaction with vanishing strength, but it is not an element of the operator domain, and it does not solve the stationary equation \eqref{gs_eq}. So for $\nu=0$ standing waves do not exist.
\end{proof}

\noindent In the following we denote
$q_{\omega} = \left( \frac{\sqrt{\omega}}{4\pi \nu}
\right)^{\frac{1}{2\sigma}}.$

\subsection{Linearization of $H_{\alpha(u)}$ around
  $\Phi_\om$} \label{lineariz}
%$e^{i\omega t} \label{lineariz} 
%\Phi_{\omega}(x)$}\label{lineariz}
%To apply the standard analysis of Weinstein and Grillakis-Shatah-Strauss to standing waves of equation \eqref{eq1}, one has to linearize the real Hamiltonian system obtained from the NLSE introducing the vector of real and imaginary components of $u$.  However, 
The linearization of equation \eqref{eq1} around a
stationary solution is not completely obvious, due to the fact that
the nonlinearity is embodied in the domain of the operator $H_{\alpha(u)}$ {\it and not}
in the action of the operator itself. Nevertheless, we can
consider the Hamiltonian associated to equation \eqref{eq1} given by
formula \eqref{eqham} and notice that the nonlinearity no longer
appears in the domain $V$ but directly in the Hamiltonian
functional. So we derive the linear operator which approximates
$H_{\alpha(u)}$ from the quadratic form which approximates $E(\Phi_\om)$ and
obtain the following result.
\begin{prop}	\label{E''}
The Hessian ${E''}(\Phi_\omega)$ of the functional $E$ can be 
represented as ${E''}(\Phi_\omega)(h,k)=\langle H_{\alpha,Lin}\ h, k
\rangle $, where  $H_{\alpha,Lin}$ is the linear operator given by 
$$H_{\alpha,Lin} = \left[
  \begin{array}{cc}
    H_{\alpha_1} & 0 \\
    0 & H_{\alpha_2} \\
  \end{array}
\right],$$

\noindent where the operators $H_{\alpha_1}$ and  $H_{\alpha_1}$ are the standard point interactions defined in the Introduction (see \eqref{dha}) and the fixed parameters $\alpha_1$ and $\alpha_2$ are given by
\begin{equation}    \label{alpha}
%\begin{array}{ll}
    \alpha_1  = -\nu (2\sigma +1) |q_{\omega}|^{2\sigma} = - \f{2
    \sigma + 1}{4 \pi} \sqrt \omega, \qquad
    \alpha_2 = -\nu |q_{\omega}|^{2\sigma} = - \f {\sqrt \omega} {4 \pi}.
%  \end{array}.
\end{equation}
$H_{\alpha,lin}$ is selfadjoint with respect to the real scalar product in $L^2(\R^3,\C)\ .$
\end{prop}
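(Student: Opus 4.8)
The plan is to read off the Hessian directly from the second‑order term of the Taylor expansion of $\epsilon\mapsto E(\Phi_\omega+\epsilon h)$ for $h\in V$, exploiting, as announced just before the statement, that in the Hamiltonian picture the configuration space $V$ is a \emph{fixed} Hilbert space and the whole nonlinearity sits in the scalar term $-\frac{\nu}{2\sigma+2}|q|^{2\sigma+2}$ of $E$. Concretely, I would write $h=h_\phi+h_q G_0$ according to \eqref{def-V}; since the charge and the regular part of an element of $V$ are independent, $\Phi_\omega+\epsilon h$ has charge $q_\omega+\epsilon h_q$ and regular part obtained by adding $\epsilon h_\phi$ to that of $\Phi_\omega$. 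The gradient term of $E$ then contributes, at order $\epsilon^2$, only $\frac12\|\nabla h_\phi\|_{L^2}^2$ — the regular part of $\Phi_\omega$ itself enters only the constant and linear terms and so does not affect the Hessian — while linear terms are in any case irrelevant to a second variation.

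The only genuine computation is the expansion of $|q_\omega+\epsilon h_q|^{2\sigma+2}$ about the real positive value $q_\omega$. Writing $h_q=h_{q,1}+i h_{q,2}$ one has $|q_\omega+\epsilon h_q|^2=q_\omega^2+2\epsilon q_\omega h_{q,1}+\epsilon^2|h_q|^2$, and raising to the power $\sigma+1$ gives as $\epsilon^2$‑coefficient $(\sigma+1)|q_\omega|^{2\sigma}\big[(2\sigma+1)h_{q,1}^2+h_{q,2}^2\big]$; the asymmetric factor $2\sigma+1$ in the direction of $q_\omega$ versus $1$ transversally is precisely the origin of the two distinct strengths. Multiplying by $-\frac{\nu}{2\sigma+2}$, collecting, and passing to the real representation $h=(h_1,h_2)$ with $h_j=h_{\phi,j}+h_{q,j}G_0$, I obtain
\begin{equation*}
E''(\Phi_\omega)(h,h)=\Big(\|\nabla h_{\phi,1}\|_{L^2}^2+\alpha_1|h_{q,1}|^2\Big)+\Big(\|\nabla h_{\phi,2}\|_{L^2}^2+\alpha_2|h_{q,2}|^2\Big),
\end{equation*}
with $\alpha_1=-\nu(2\sigma+1)|q_\omega|^{2\sigma}$ and $\alpha_2=-\nu|q_\omega|^{2\sigma}$; using $|q_\omega|^{2\sigma}=\frac{\sqrt\omega}{4\pi\nu}$ these are exactly the values in \eqref{alpha}.

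It remains to identify this bilinear form with the operator $H_{\alpha,Lin}$. Each of the two brackets is the closed, lower‑bounded quadratic form $\mathbf{H_{\alpha_j}}$ of \eqref{quadratic} evaluated on $h_j$, whose form domain is precisely $V$; hence $E''(\Phi_\omega)$ is the quadratic form of $\mathrm{diag}(H_{\alpha_1},H_{\alpha_2})$, and by the first representation theorem for closed semibounded forms (together with the polarization identity, both forms being symmetric) one gets $E''(\Phi_\omega)(h,k)=\langle H_{\alpha,Lin}\,h,k\rangle$ for $h\in D(H_{\alpha,Lin})$ and $k\in V$. Finally, each $H_{\alpha_j}$ has real strength and $G_0$ is real, so it maps real functions to real functions and is self‑adjoint on $L^2(\R^3,\R)$; the block operator is then self‑adjoint on $L^2(\R^3,\R)\oplus L^2(\R^3,\R)\cong\big(L^2(\R^3,\C),(\cdot,\cdot)_{L^2}\big)$. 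I expect the two delicate points to be the bookkeeping in the power expansion — keeping track of why the real direction carries the extra factor $2\sigma+1$ — and the passage from a bilinear form on the energy space $V$ to a genuine self‑adjoint operator on $L^2$, which is exactly where the identification of $V$ with the form domain of $H_{\alpha_1}$ and of $H_{\alpha_2}$, and the form representation theorem, are needed.
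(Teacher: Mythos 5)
Your proposal is correct and follows essentially the same route as the paper: both exploit that in the energy functional the nonlinearity sits only in the scalar term $-\frac{\nu}{2\sigma+2}|q|^{2\sigma+2}$, compute the second G\^ateaux derivative (you via the one-parameter expansion plus polarization, the paper via the mixed $\partial_\epsilon\partial_\lambda$ derivative), arrive at the same quadratic form $\|\nabla\phi_h\|_{L^2}^2-\nu|q_\omega|^{2\sigma}\bigl[(2\sigma+1)q_{h_1}^2+q_{h_2}^2\bigr]$, and identify the two blocks with the closed semibounded forms of $H_{\alpha_1}$ and $H_{\alpha_2}$. Your write-up is in fact more explicit than the paper's on the two points it leaves implicit, namely the bookkeeping behind the factor $2\sigma+1$ and the invocation of the form representation theorem.
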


\begin{proof}
\noindent %In what follows we will represent the complex functions as
%vectors of two real functions.
The first G{\^{a}}teaux derivative of $E(u)$ reads
\begin{equation}\label{Gateaux1}
E'(u)[h]={\frac{d}{d\epsilon}} \{E(u+\epsilon h)\}_{\epsilon=0}=
\Re \int_{\R^3} \nabla \phi_u(x)\cdot \overline{\nabla \phi_h (x)} dx -
\nu |q_u|^{2\sigma}\Re(q_u 
\overline{q_h})\ \ \ \forall u,h \in V,
\end{equation}
while the second G{\^{a}}teaux derivative at $\Phi_\omega$ reads
%$%\begin{equation}
%E''(u)[h,k]=\frac{\partial^2}{{\partial\epsilon}{\partial \lambda }}
%\{{E}(u+\epsilon h +\lambda k)\}_{\epsilon=0, \lambda=0}\ \ {\rm
%  where}\ \ \ h, \ k \in V. 
%$%\end{equation}
%\par\noindent
%Here, we are interested in calculating the second variation of the
%Hamiltonian at the point $\Phi_{\omega}e^{i\omega t}$. By the
%$U(1)$-invariance of $E(u)$ it suffices to calculate the second
%variation at $\Phi_{\omega}$.\par\noindent 
%One has 
\begin{equation*}
\frac{\partial^2}{{\partial\epsilon}{\partial \lambda }}
\{{E}(\Phi_{\omega}+\epsilon h +\lambda k)\}_{\epsilon=0, \lambda=0} =
\Re \int_{\R^3} \nabla \phi_h (x)\cdot \overline{\nabla \phi_k (x)} dx -
\frac{\partial^2}{{\partial\epsilon}{\partial \lambda }}
\{\frac{\nu}{2\sigma +2} |q_{\Phi_{\omega}+\epsilon h +\lambda k}|^{2\sigma +2}\}_{\epsilon=0,
  \lambda=0}\ . 
\end{equation*}
\par\noindent
The last term gives, after some calculation, the contribution (here $h=(h_1,h_2),\ k=(k_1,k_2)$)
$$
\frac{\partial^2}{{\partial\epsilon}{\partial \lambda }}
\{- \frac{\nu}{2\sigma +2} |q_{\Phi_{\omega}+\epsilon h +\lambda k}|^{2\sigma +2}\}_{\epsilon=0,
  \lambda=0}= %2\nu|q_{\omega}|^{2\sigma} (2\sigma +1)(q_h q_k +
%q_{\overline h} q_{\overline k}) + 2 |q_{\omega}|^{2\sigma}
%(\sigma+1)(q_{h} q_{\overline k} + q_{\overline h} q_{k}) 
%$$
%$$
- \nu|q_{\omega}|^{2\sigma}[(2\sigma+1)q_{h_1}q_{k_1} + q_{h_2}q_{k_2}]\ .
$$
So $E^{\prime \prime}(\Phi_\om)$ is given by the direct sum of two
quadratic forms: one is acting on the real part of the functions $h$
and $k$, and the other on the imaginary part. The term related to the
real
part is a lower bounded quadratic form whose corresponding
selfadjoint operator is $H_{\alpha_1}$, while the quadratic form
related to the imaginary part corresponds to the operator
$H_{\alpha_2}$ ($\alpha_1$ and $\alpha_2$ have been defined in 
\eqref{alpha}). Then, the operator $H_{\alpha, Lin}$ representing
the entire quadratic form $E^{\prime \prime}(\Phi_\om)$ is
self-adjoint and the proof is complete.
\end{proof}
Now, to get the linearized equation set $u(t)=e^{i\omega t}(\Phi_{\omega} +R)$ and obtain
$$
\frac{d}{dt} R =J (E'(\Phi_{\omega}) + \omega\Phi_{\omega}) + J(E^{''}(\Phi_{\omega}) + \omega)R + {\rm higher\ order\ terms} \simeq J (H_{\alpha,Lin}+\omega) R\ .
$$
\noindent Summing up, the linearized equation \eqref{eq1} becomes
\begin{equation}    \label{eq_linearizzata0}
\frac{dR}{dt} = J D R,
\end{equation}

\noindent where $\displaystyle D = \left[
  \begin{array}{cc}
    L_1 & 0 \\
    0 & L_2 \\
  \end{array}
\right],$ with 
\begin{equation}\label{elleunoedue}
L_j = H_{\alpha_j} +\omega\ , \qquad j = 1, 2\ .
\end{equation}
 Notice that the operator 
\begin{equation}\label{venti}
JD :=L =  \left[
  \begin{array}{cc}
    0 & L_2 \\
    -L_1 & 0 \\
  \end{array}
\right],
\end{equation}
is not selfadjoint nor skew adjoint. Nevertheless, a standard
application of Hille-Yosida theorem and a simple analysis of the
resolvent of $L$ which takes into account the factorized structure
$L=JD$ with $D$ s.a. shows that it generates a semigroup of linear
operators with (at most) exponential growth in time. A more precise
analysis of the resolvent of the operator $L$ will be given in Theorem
\ref{ris} and in the appendix \ref{propagator} we will prove that the
semigroup has in fact a linear growth (see Theorem \ref{stima-disc})
in the case here interesting, i.e. $\sigma\in (0,1/\sqrt{2}). $
\par\noindent
%In the following we will denote, in analogy with the notation for the quadratic form of a linear operator,
%\begin{equation}\label{qalpha}
%{E'}(u)[h]=Q_{\alpha} (u,h)
%\end{equation}
%\par\noindent 
%\begin{rmk}
%Notice that the nonlinear part of the form $Q_{\alpha}$ is of the type
%$$Q_{\alpha, N} (u,v) = N(q_u) \overline{q_v},$$
%\noindent for any $u,v \in V$, where $N: \C \rightarrow \C$ is a nonlinear function.
%\end{rmk}

\section{Orbital stability}\label{staborb}
In order to prove the orbital stability of the stationary solutions
to equation \eqref{eq1}, we apply Grillakis-Shatah-Strauss theory, and
in particular Theorem 2 in \cite{GSS}. As a first step, we recall the
following known fact proved 
in \cite{Albeverio}.

\begin{prop}
If $\alpha(u) = \alpha$ where $\alpha < 0$ is a constant, then
\begin{equation}    \label{spettro}
\sigma(H_{\alpha}) \equiv \{ -(4\pi \alpha)^2 \} \cup [0, +\infty).
\end{equation}
\end{prop}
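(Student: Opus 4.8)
\textit{Proof plan.} This is a classical fact on three--dimensional point interactions, contained in \cite{Albeverio}; the argument one reproduces is as follows. The main tool is Kre{\u\i}n's resolvent formula for the selfadjoint extension $H_\alpha$. Since $H_\alpha$ extends the restriction of $-\Delta$ to $C^\infty_0(\R^3\setminus\{0\})$, which has deficiency indices $(1,1)$, for every $z\in\C\setminus[0,+\infty)$ the operator $(H_\alpha-z)^{-1}$ differs from $(-\Delta-z)^{-1}$ by the rank--one operator
\[
\frac{1}{\alpha+\frac{\sqrt{-z}}{4\pi}}\,\big(\,\overline{G_{\sqrt{-\bar z}}}\,,\,\cdot\,\big)_{L^2}\,G_{\sqrt{-z}},
\]
where $\sqrt{\,\cdot\,}$ denotes the branch with positive real part on $\C\setminus(-\infty,0]$ and $G_\lambda$ is as in \eqref{def-V2}. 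One first checks this identity by direct inspection: the right--hand side maps $L^2(\R^3)$ into the domain \eqref{dha} (its regular part is $(-\Delta-z)^{-1}f$ plus a smooth multiple of $G_{\sqrt{-z}}-G_0$, and its charge is the coefficient in front of $G_{\sqrt{-z}}$); the scalar prefactor $(\alpha+\sqrt{-z}/(4\pi))^{-1}$ is tuned precisely so that the boundary condition in \eqref{dha} holds, via $\lim_{x\to0}(G_{\sqrt{-z}}(x)-G_0(x))=-\sqrt{-z}/(4\pi)$; and applying $H_\alpha-z$ returns the identity because $G_{\sqrt{-z}}$ solves $(-\Delta-z)G_{\sqrt{-z}}=0$ away from the origin.

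Once this is in hand the spectral picture is immediate. The difference $(H_\alpha-z)^{-1}-(-\Delta-z)^{-1}$ has rank one, hence is compact, so by Weyl's theorem $\sigma_{\mathrm{ess}}(H_\alpha)=\sigma_{\mathrm{ess}}(-\Delta)=[0,+\infty)$; absolute continuity on $(0,+\infty)$ follows by localizing in angular momentum channels, since the perturbation lives entirely in the $\ell=0$ sector while on each sector $\ell\ge1$ the operator coincides with the free radial Laplacian. The eigenvalues in $\C\setminus[0,+\infty)$ are exactly the poles of $z\mapsto(H_\alpha-z)^{-1}$, i.e.\ the zeros of the Kre{\u\i}n function $\Gamma_\alpha(z)=\alpha+\sqrt{-z}/(4\pi)$. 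Since $z\mapsto\sqrt{-z}$ is injective on $\C\setminus[0,+\infty)$, the equation $\Gamma_\alpha(z)=0$ forces $\sqrt{-z}=-4\pi\alpha$, which has a (unique) solution with positive real part if and only if $\alpha<0$, and then $z=-(4\pi\alpha)^2$. Reading off the residue (a simple zero of $\Gamma_\alpha$) shows that the eigenvalue is simple, with eigenfunction $G_{-4\pi\alpha}(x)=e^{-4\pi|\alpha|\,|x|}/(4\pi|x|)\in L^2(\R^3)$: this function has charge $1$ and $\lim_{x\to0}(G_{-4\pi\alpha}(x)-G_0(x))=\alpha$, so it lies in $D(H_\alpha)$ and satisfies $H_\alpha G_{-4\pi\alpha}=-(4\pi\alpha)^2 G_{-4\pi\alpha}$.

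Finally, embedded eigenvalues in $[0,+\infty)$ are excluded by the same channel decomposition: the sectors $\ell\ge1$ are free and carry no eigenvalues, while in the $\ell=0$ sector an $L^2$ eigenfunction at energy $E\ge0$ would have to be proportional to $e^{-\sqrt{-E}\,|x|}/(4\pi|x|)$, which is not square--integrable for $E\ge0$. Collecting these facts --- $\sigma_{\mathrm{ess}}(H_\alpha)=[0,+\infty)$, a single negative eigenvalue $-(4\pi\alpha)^2$ precisely when $\alpha<0$, and nothing embedded --- yields \eqref{spettro}. The only mildly delicate points are keeping track of the correct branch of $\sqrt{-z}$ and extracting simplicity of the eigenvalue from the residue of Kre{\u\i}n's formula; alternatively, the whole statement can be read off from the half--line Sturm--Liouville problem $-v''$ on $L^2(\R^+)$ with boundary condition $v'(0)=4\pi\alpha\,v(0)$ obtained by separating variables in the $\ell=0$ sector. (Since the Proposition is quoted verbatim from \cite{Albeverio}, in the paper one would simply give the reference.)
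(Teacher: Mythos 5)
The paper does not prove this Proposition at all: it is stated as a known fact and simply attributed to \cite{Albeverio}, so there is no internal proof to compare yours against. Your reconstruction of the classical argument is correct and complete. The Krein resolvent formula is the right one in the conventions of \eqref{def-V2} (with $G_{\lambda}(x)=e^{-\lambda|x|}/(4\pi|x|)$ and $\sqrt{-z}$ taken with positive real part), the normalization $\lim_{x\to 0}\bigl(G_{\sqrt{-z}}(x)-G_0(x)\bigr)=-\sqrt{-z}/(4\pi)$ is exactly what tunes the scalar prefactor to the boundary condition in \eqref{dha}, and the zero of $\alpha+\sqrt{-z}/(4\pi)$ lands at $z=-(4\pi\alpha)^2$ precisely when $\alpha<0$. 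Your direct verification that $e^{-4\pi|\alpha|\,|x|}/(4\pi|x|)$ has charge $1$, boundary value $\alpha$, and is an eigenfunction is also correct, and it is worth noting that this computation alone, combined with Weyl's theorem for the rank-one resolvent difference and the partial-wave exclusion of further (and embedded) eigenvalues, already yields the set equality \eqref{spettro} without invoking the full Krein formula; the resolvent formula is only needed if one additionally wants simplicity of the eigenvalue and absolute continuity on $[0,+\infty)$, which the Proposition as stated does not assert (though the Introduction of the paper does). The one point to keep straight, as you observe, is the branch of $\sqrt{-z}$; everything else is bookkeeping.
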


\noindent Thanks to the last proposition one can prove the following
lemma which implies the spectral properties needed to verify Assumption 3 in \cite{GSS}.

\begin{lemma}   \label{ass3}
The spectrum of the operator $D$ is
$$\sigma(D) = \{ -4\sigma (\sigma +1) \omega, 0 \} \cup [\omega,
+\infty),$$

\noindent and $\displaystyle \ker(D) = \emph{\textrm{span}} \left\{ \left(
                   \begin{array}{ll}
                     0\\
                     \Phi_{\omega}
                   \end{array}
                 \right) \right\}.$

\end{lemma}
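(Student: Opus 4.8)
The plan is to reduce everything to the two scalar point-interaction operators $H_{\alpha_1}$ and $H_{\alpha_2}$ by exploiting the block-diagonal structure of $D$. Since $D=\mathrm{diag}(L_1,L_2)$ on $L^2(\R^3)\oplus L^2(\R^3)$ with $L_j=H_{\alpha_j}+\omega$ as in \eqref{elleunoedue}, one has $\sigma(D)=\sigma(L_1)\cup\sigma(L_2)$ and $\ker D=\ker L_1\oplus\ker L_2$, so it suffices to analyse $L_1$ and $L_2$ one at a time.

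First I would note that, by \eqref{alpha}, both strengths $\alpha_1=-\tfrac{2\sigma+1}{4\pi}\sqrt\omega$ and $\alpha_2=-\tfrac{\sqrt\omega}{4\pi}$ are strictly negative for every $\omega>0$ and $\sigma>0$. Hence the spectral result recalled above (formula \eqref{spettro}, taken from \cite{Albeverio}) applies: $\sigma(H_{\alpha_j})=\{-(4\pi\alpha_j)^2\}\cup[0,+\infty)$, the negative eigenvalue being simple and isolated. Shifting by $\omega$ gives $\sigma(L_j)=\{-(4\pi\alpha_j)^2+\omega\}\cup[\omega,+\infty)$, and it remains only to insert the explicit values. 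For $j=1$, $-(4\pi\alpha_1)^2+\omega=\bigl(1-(2\sigma+1)^2\bigr)\omega=-4\sigma(\sigma+1)\,\omega$; for $j=2$, $-(4\pi\alpha_2)^2+\omega=-\omega+\omega=0$. Taking the union yields $\sigma(D)=\{-4\sigma(\sigma+1)\omega,\,0\}\cup[\omega,+\infty)$; since $\sigma>0$ the point $-4\sigma(\sigma+1)\omega$ is strictly negative and $0$ lies in the gap $(-4\sigma(\sigma+1)\omega,\,\omega)$, so the three pieces are genuinely distinct and both non-essential parts are isolated eigenvalues.

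For the kernel, $0\notin\sigma(L_1)$ forces $\ker L_1=\{0\}$, whereas $0$ is an eigenvalue of $L_2$ (the image under the shift of the simple eigenvalue $-\omega=-(4\pi\alpha_2)^2$ of $H_{\alpha_2}$), so $\ker L_2$ is one-dimensional. Its generator is $G_{\sqrt\omega}(x)=\tfrac{e^{-\sqrt\omega|x|}}{4\pi|x|}$, which solves $-\Delta u+\omega u=0$ for $x\neq 0$, belongs to $L^2(\R^3)$, and satisfies the boundary condition $\lim_{x\to0}\bigl(u(x)-G_0(x)\bigr)=-\tfrac{\sqrt\omega}{4\pi}=\alpha_2$ --- exactly the computation performed in the proof of the Proposition on standing waves. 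Since $\Phi_\omega=q_\omega\,G_{\sqrt\omega}$ by \eqref{fiom}, we get $\ker L_2=\mathrm{span}\{\Phi_\omega\}$ and hence $\ker D=\mathrm{span}\{(0,\Phi_\omega)^{\mathrm T}\}$.

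There is no real obstacle here: the lemma is a direct corollary of the known spectrum of a point interaction with fixed negative strength together with the identity $1-(2\sigma+1)^2=-4\sigma(\sigma+1)$. The only points meriting a line of care are (i) verifying $\alpha_1,\alpha_2<0$ so that \eqref{spettro} is applicable; (ii) invoking the simplicity of the bound state of $H_{\alpha_2}$ (again from \cite{Albeverio}) in order to conclude that $\ker L_2$ is \emph{exactly} one-dimensional, not merely non-trivial; and (iii) recognising the bound state of $H_{\alpha_2}$ as a scalar multiple of $\Phi_\omega$, which is immediate once both are written down explicitly.
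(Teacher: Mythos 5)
Your proposal is correct and follows essentially the same route as the paper: exploit the block-diagonal structure to reduce to $L_1$ and $L_2$, apply the known spectrum \eqref{spettro} of a point interaction with negative strength, shift by $\omega$, and insert the explicit values of $\alpha_1,\alpha_2$ to identify the eigenvalues and the kernel. The extra care you take in checking $\alpha_j<0$, the simplicity of the bound state, and the explicit identification of the eigenfunction with $\Phi_\omega$ is implicit in the paper but worth making explicit.
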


\begin{proof}
Since $D$ is the direct sum of the operators $L_1$ and $L_2$ acting on
$L^2(\R^3) \oplus L^2(\R^3)$, its spectrum is given by the union of
$\sigma(L_1)$ and $\sigma(L_2)$. From \eqref{spettro} follows 
$$\sigma(H_{\alpha_1}) = \{ -(2\sigma +1)^2 \omega \} \cup [0,
+\infty), \quad \sigma(H_{\alpha_2}) = \{ -\omega \} \cup [0, +\infty).$$

\noindent Then
$$\sigma(L_1) = \sigma(H_{\alpha_1}) +\omega = \{ -4\sigma (\sigma
+1) \omega \} \cup [\omega, +\infty), \quad \sigma(L_2) = \sigma(H_{\alpha_2}) +\omega = \{ 0 \} \cup [\omega,
+\infty).$$

\noindent Hence, $\ker(L_1) = \{ 0 \}$ and $\ker(L_2) = \textrm{span} \{
\Phi_{\omega} \}$, which concludes the proof.
\end{proof}

\noindent We can now prove the following

\begin{theo}{\textbf{\emph{(Orbital stability)}}}
For each $\omega > 0$, if $0 < \sigma < 1$, then $\Phi_{\omega}$ is
orbitally stable. If $\sigma > 1$, then $\Phi_{\omega}$ is orbitally
unstable.
\end{theo}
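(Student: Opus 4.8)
The plan is to apply the Grillakis--Shatah--Strauss framework (Theorem 2 of \cite{GSS}, see also \cite{W1}) for standing waves of a Hamiltonian system with $U(1)$-symmetry. The abstract theorem requires: (i) existence of the standing wave as a critical point of $E(u)+\omega M(u)$; (ii) the spectral Assumption 3 on the Hessian $H_{\alpha,Lin}+\omega = D$, namely that $D$ has exactly one negative eigenvalue (counted with multiplicity), a one-dimensional kernel spanned by the infinitesimal gauge generator $J^{-1}\partial_\Theta\Phi_\omega$, and the rest of its spectrum positive and bounded away from zero; and (iii) evaluation of the scalar \emph{slope function} $d(\omega):=E(\Phi_\omega)+\omega M(\Phi_\omega)$, with $d''(\omega)>0$ giving orbital stability and $d''(\omega)<0$ giving instability. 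The local and global (for $\sigma<1$) well-posedness in $V$ needed for the abstract theorem to even be stated is taken from \cite{ADFT,ADFT2}; for $\sigma\ge 1$ one only needs local well-posedness plus the GSS instability criterion.

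First I would verify the spectral hypothesis. By Proposition~\ref{E''} the Hessian $E''(\Phi_\omega)$ is represented by $H_{\alpha,Lin}=\mathrm{diag}(H_{\alpha_1},H_{\alpha_2})$, so $D=\mathrm{diag}(L_1,L_2)$ with $L_j=H_{\alpha_j}+\omega$. Lemma~\ref{ass3} already gives $\sigma(D)=\{-4\sigma(\sigma+1)\omega,\,0\}\cup[\omega,+\infty)$ and $\ker D=\mathrm{span}\{(0,\Phi_\omega)^{T}\}$. Thus $D$ has a single simple negative eigenvalue $-4\sigma(\sigma+1)\omega$ (sitting in the $L_1$ block), a one-dimensional kernel, and essential spectrum $[\omega,+\infty)$ bounded away from $0$. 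I would then check that the kernel is exactly the gauge direction: $\partial_\Theta(e^{i\Theta}\Phi_\omega)|_{\Theta=0}=i\Phi_\omega$, which in the real representation is $J\cdot(\Phi_\omega,0)^T=(0,-\Phi_\omega)^T$, matching $\ker D$. This confirms Assumption 3 with $n(D)=1$ and $\ker D$ minimal, so the stability/instability dichotomy is governed entirely by the sign of $d''(\omega)$.

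Next I would compute $d(\omega)$. From $\eqref{fiom}$, $q_\omega=\big(\tfrac{\sqrt\omega}{4\pi\nu}\big)^{1/(2\sigma)}$ and $\Phi_\omega=q_\omega G_{\sqrt\omega}$ in the representation $\eqref{def-V2}$ (here $\phi_{\sqrt\omega}=0$, so $\nabla\phi=0$ in the homogeneous representation as well, after the appropriate bookkeeping). The mass is $M(\Phi_\omega)=\|\Phi_\omega\|_{L^2}^2=q_\omega^2\|G_{\sqrt\omega}\|_{L^2}^2$; since $\|G_\lambda\|_{L^2}^2=\int_{\R^3}\frac{e^{-2\lambda|x|}}{16\pi^2|x|^2}dx=\frac{1}{8\pi\lambda}$, one gets $M(\Phi_\omega)=\frac{q_\omega^2}{8\pi\sqrt\omega}=c\,\omega^{\frac{1}{2\sigma}-\frac12}$ for a positive constant $c=c(\nu,\sigma)$. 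For the energy one uses $E(\Phi_\omega)=\frac12\|\nabla\phi\|_{L^2}^2-\frac{\nu}{2\sigma+2}|q_\omega|^{2\sigma+2}$; exploiting the stationary equation $\eqref{gs_eq}$ (Pohozaev-type identity $\|\nabla\phi\|^2=-\omega M + \nu\,q_\omega^{2\sigma+2}\cdot 4\pi\cdot(\cdots)$, or more simply the scaling $\langle H_{\alpha(\Phi_\omega)}\Phi_\omega+\omega\Phi_\omega,\Phi_\omega\rangle=0$) reduces everything to a single power of $\omega$. The upshot is $d(\omega)=E(\Phi_\omega)+\omega M(\Phi_\omega)=C\,\omega^{\frac{1}{2\sigma}+\frac12}$ with $C>0$ (up to checking the sign), so $d''(\omega)$ has the sign of $\big(\tfrac{1}{2\sigma}+\tfrac12\big)\big(\tfrac{1}{2\sigma}-\tfrac12\big)=\tfrac{1}{4\sigma^2}(1-\sigma^2)$, hence $d''(\omega)>0 \iff \sigma<1$ and $d''(\omega)<0 \iff \sigma>1$, which is exactly the claimed dichotomy.

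I expect the main obstacle to be the bookkeeping in computing $\|\nabla\phi\|_{L^2}^2$ correctly: $\Phi_\omega$ lies in the operator domain $D(H_{\alpha(u)})$, where its ``regular part'' is $\phi=\Phi_\omega-q_\omega G_0$ with $\nabla\phi\in L^2(\R^3)$ even though $G_0\notin L^2$, and one must not conflate this with the $\lambda$-dependent representation $\eqref{def-V2}$. A clean route is to avoid evaluating $\|\nabla\phi\|_{L^2}^2$ directly and instead pair the stationary equation with $\Phi_\omega$ and with the dilation generator to get the needed identities, reducing both $M(\Phi_\omega)$ and $E(\Phi_\omega)$ to explicit powers of $\omega$ times $\nu$-dependent constants; then only the \emph{sign} of $d''$ matters, and the overall positive constant is irrelevant. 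A minor separate point is the critical case $\sigma=1$, where $d''(\omega)=0$ and GSS is inconclusive: there I would invoke the pseudoconformal invariance of $\eqref{eq1}$ (the equation and its standing waves being formally pseudoconformally invariant for $\sigma=1$, as in the $L^2$-critical NLS) to exhibit, arbitrarily close to $\Phi_\omega$ in $V$, initial data whose solutions blow up in finite time, establishing instability; but since the present theorem as stated only claims stability for $\sigma<1$ and instability for $\sigma>1$, the GSS computation above suffices and the $\sigma=1$ discussion can be relegated to a remark.
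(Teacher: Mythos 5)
Your proposal is correct and follows essentially the same route as the paper: verification of the GSS Assumption 3 via the spectrum of $D=\mathrm{diag}(L_1,L_2)$ (Lemma \ref{ass3}) followed by the slope condition. The only cosmetic difference is that you evaluate $d''(\omega)$ with $d(\omega)=E(\Phi_\omega)+\omega M(\Phi_\omega)$, whereas the paper uses the equivalent criterion $\frac{d}{d\omega}\|\Phi_\omega\|_{L^2}^2\gtrless 0$ directly (these coincide since $d'(\omega)=M(\Phi_\omega)$), and both computations yield the sign of $1-\sigma$.
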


\begin{proof}
\noindent Well-posedness and existence of a branch of standing waves, i.e. Assumptions 1 and 2 in \cite{GSS}, are proved in
\cite{ADFT} and \cite{ADFT2} and in the previous section, while Assumption 3 is true thanks to
Lemma \ref{ass3}. Hence, from Theorem 3 in \cite{GSS} we have orbital
stability if  $\frac d {d\omega} \| \Phi_\omega
\|^2_{L^2(\R^3)}
> 0$ and orbital instability if
$\frac d {d\omega} \| \Phi_\omega
\|^2_{L^2(\R^3)}
<0$. In order to inspect the sign of $\frac d {d\omega} \| \Phi_\omega
\|^2_{L^2(\R^3)}
$, we compute
$$\| \Phi_{\omega}\|_{L^2}^2 = \left( \frac{\sqrt{\omega}}{4\pi \nu}
\right)^{\frac{1}{\sigma}} \frac{1}{8\pi \sqrt{\omega}},$$

\noindent hence
$\frac d {d\omega} \| \Phi_\omega
\|^2_{L^2(\R^3)}= \frac{1}{8\pi (4\pi \nu)^{1/\sigma}} \frac{1
-\sigma}{2\sigma} \omega^{\frac{1 -3\sigma}{2\sigma}},$
which concludes the proof.
\end{proof}

\subsection{The case $\sigma = 1$}
Since Theorem 3 in \cite{GSS} does not give information about
orbital stability of the stationary state $e^{i\omega
t}\Phi_{\omega}$ when $\frac d {d\omega} \| \Phi_\omega
\|^2_{L^2(\R^3)}= 0$, we need to inspect
the case $\sigma = 1$ apart.
In such case, equation \eqref{eq1} exhibits one additional symmetry (see \cite{ADFT2}).
\begin{rmk}
Equation \eqref{eq1} is invariant under the pseudoconformal
transformation
$$u_{pc}^T(t,x) = \frac{e^{-i\frac{|x|^2}{4(T-t)}}}{(T-t)^{3/2}}
u\left(\frac{1}{T-t}, \frac{|x|}{T-t}\right).$$

%\noindent This fact is proved in \cite{ADFT2}.
\end{rmk}

\noindent In \cite{ADFT} it is proved that equation \eqref{eq1} may
have some non global solutions which blow up, in the following sense:
the solution $u(t)$ of equation \eqref{eq1} blows up (in the future) at
time $T <+\infty$ if
$$\limsup_{t \rightarrow T^-} \| \nabla \phi_u \|_{L^2} = +\infty.$$
Here $\phi_u$ is the regular part of the function $u$ and $q_u$ is the corresponding charge
according to the decomposition in \eqref{def-V}.
\noindent Due to energy conservation this condition is equivalent to $\limsup_{t \rightarrow T^-} |q_u(t)| = +\infty$.

%\end{defi}

\noindent Thanks to the pseudoconformal  invariance we 
prove that in any neighbourhood (in energy norm) of each
standing wave there are initial data of a blow up solution.
\begin{theo}
Fix $\sigma = 1$ and $\omega > 0$. For any $\delta > 0$ there exists a
blow up solution $u(t) 
\in V$ such that $\|u(0) -\Phi_{\omega}\|_V < \delta$.

\end{theo}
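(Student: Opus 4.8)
The plan is to exploit the pseudoconformal invariance, available precisely at $\sigma=1$, in order to turn the globally defined standing wave into a solution that blows up at a prescribed finite time $T$, and then to let $T\to+\infty$ while \emph{simultaneously rescaling} the frequency of the standing wave so that the initial datum of the transformed solution converges to $\Phi_\omega$ in $V$. Concretely, fix $\sigma=1$, $\omega>0$, $\delta>0$, and for $T>0$ put $\Omega:=\omega T^2$; start from the standing wave $u(t,x)=e^{i\Omega t}\Phi_\Omega(x)$, which belongs to $C(\R,V)$ and solves \eqref{eq1} globally in time. By the Remark above, $v_T:=u_{pc}^T$ again solves \eqref{eq1}, now on the interval $[0,T)$, and from its explicit expression (a Gaussian-modulated rescaling of $\Phi_\Omega$) one sees that $v_T(t)\in V$ for every $t\in[0,T)$. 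Using \eqref{fiom} together with the identity $q_{\omega T^2}=\sqrt{T}\,q_\omega$ (valid for $\sigma=1$) one computes $\Phi_\Omega(x/T)=T^{3/2}\Phi_\omega(x)$, whence $v_T(0,x)=e^{i\omega T}e^{-i|x|^2/(4T)}\Phi_\omega(x)$; since \eqref{eq1} is invariant under the constant gauge transformation $u\mapsto e^{i\gamma}u$ (the strength $\alpha$ depends on $u$ only through $|q|$), we may drop the factor $e^{i\omega T}$, so that up to gauge the initial datum is $e^{-i|x|^2/(4T)}\Phi_\omega$.

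Next I would verify that $v_T$ blows up at $t=T$ in the sense of the paper. Setting $s=T-t$ and inserting the explicit profile, $v_T(t,x)$ equals $\frac{q_\Omega e^{i\Omega/s}}{\sqrt{s}}\,e^{-i|x|^2/(4s)}\frac{e^{-\sqrt{\Omega}\,|x|/s}}{4\pi|x|}$, which near $x=0$ behaves like $\frac{q_\Omega e^{i\Omega/s}}{\sqrt{s}}\,G_0(x)$; hence the charge of $v_T(t)$ has modulus $q_{\omega T^2}/\sqrt{T-t}\to+\infty$ as $t\to T^-$, and by conservation of the energy this is equivalent to $\limsup_{t\to T^-}\|\nabla\phi_{v_T}\|_{L^2}=+\infty$, i.e. to blow-up at time $T$.

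It then remains to prove that $\|v_T(0)-\Phi_\omega\|_V\to 0$ as $T\to+\infty$. Writing $v_T(0)-\Phi_\omega=g_T\Phi_\omega$ with $g_T(x):=e^{-i|x|^2/(4T)}-1$, and observing that $g_T(x)=O(|x|^2)$ near the origin, the product $g_T\Phi_\omega$ has vanishing charge, so that $\|v_T(0)-\Phi_\omega\|_V^2=\|\nabla(g_T\Phi_\omega)\|_{L^2}^2$. Splitting $\nabla(g_T\Phi_\omega)=(\nabla g_T)\Phi_\omega+g_T\nabla\Phi_\omega$: the first term has $L^2$ norm $O(T^{-1})$, since $|\nabla g_T|=|x|/(2T)$ and $|x|\,\Phi_\omega\in L^2(\R^3)$ (the profile decays exponentially); the second term tends to $0$ by dominated convergence, using the pointwise bound $|g_T(x)|\le\min\{2,|x|^2/(4T)\}$ to produce a $T$-independent integrable majorant for $|g_T|^2|\nabla\Phi_\omega|^2$ — near $x=0$ the non-integrable $|x|^{-4}$ behaviour of $|\nabla\Phi_\omega|^2$ is absorbed by $|g_T|^2\lesssim|x|^4/T^2$, and at infinity $\nabla\Phi_\omega$ decays exponentially. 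Choosing $T$ large enough that $\|v_T(0)-\Phi_\omega\|_V<\delta$, the solution $v_T$ — which by local well-posedness in $V$ (valid for all $\sigma>0$, see \cite{ADFT,ADFT2}) is the unique solution issued from $v_T(0)$ — is the desired blow-up solution lying in the $\delta$-neighbourhood of $\Phi_\omega$ in $V$.

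The only genuinely delicate step is the convergence $\|\nabla(g_T\Phi_\omega)\|_{L^2}\to 0$: because $\Phi_\omega\notin H^1(\R^3)$ (its gradient is not square integrable near the point interaction), one cannot simply appeal to continuity of multiplication by $g_T$ on $H^1(\R^3)$, and the quadratic vanishing of the Gaussian phase at the origin has to be used precisely to cancel the $|x|^{-2}$ singularity of $\nabla\Phi_\omega$. A minor additional point, needed for rigour, is to check that the pseudoconformal transform of the standing wave is indeed a $V$-valued solution of \eqref{eq1} on the whole of $[0,T)$, which follows at once from its explicit expression.
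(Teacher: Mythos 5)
Your proposal is correct and follows essentially the same route as the paper: apply the pseudoconformal transformation to the standing wave with rescaled frequency $\widetilde\omega=\omega T^2$, observe that the resulting solution blows up at time $T$ while its initial datum is (up to a constant gauge factor) $e^{-i|x|^2/(4T)}\Phi_\omega$, and then show $\|(e^{-i|\cdot|^2/(4T)}-1)\Phi_\omega\|_V\to 0$ as $T\to+\infty$ by splitting the gradient and using the quadratic vanishing of the phase at the origin to tame the $|x|^{-2}$ singularity of $\nabla\Phi_\omega$. The only differences are cosmetic: you verify the blow-up and the vanishing of the charge of $(e^{-i|\cdot|^2/(4T)}-1)\Phi_\omega$ explicitly and use dominated convergence where the paper gives the slightly sharper direct bound $\frac{1}{4T}\||\cdot|^2\nabla\Phi_\omega\|_{L^2}=O(T^{-1})$.
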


\begin{proof}
Applying the pseudoconformal transformation to the solitary wave
$e^{i\widetilde{\omega} t}\Phi_{\widetilde{\omega}}$ one gets that for any $T > 0$, the function
$$u_{\widetilde{\omega},T}(t,x) = e^{i\frac{\widetilde{\omega}}{T -t}} \frac{\widetilde{\omega}^{1/4}}{\sqrt{4\pi
\nu}} \frac{e^{-\frac{\sqrt{\widetilde{\omega}} |x|}{T -t}}}{4\pi \sqrt{T
-t} |x|} e^{-i\frac{|x|^2}{4(T-t)}}$$

\noindent is a solution to equation \eqref{eq1}. Thus, for any $T > 0$, the initial datum $\displaystyle u_T(x) = e^{i\frac{\widetilde{\omega}}{T}} \frac{\widetilde{\omega}^{1/4}}{\sqrt{4\pi
\nu}} \frac{e^{-\frac{\sqrt{\widetilde{\omega}} |x|}{T}}}{4\pi \sqrt{T} |x|} e^{-i\frac{|x|^2}{4T}}$ gives rise to a solution that blows up at time $T$. Now, let $\widetilde{\omega}$ depend on $T$ as $\widetilde{\omega} = \omega T^2$, so that $u_T(x) = e^{-i\frac{|x|^2}{4T}} \Phi_{\omega}(x)$.

\noindent We prove the theorem by showing that $\| (
e^{-i\frac{|\cdot|^2}{4T}} -1 ) \Phi_{\omega} \|_V \rightarrow 0$ as
$T \rightarrow +\infty$. Indeed, noting that the function $(
e^{-i\frac{|\cdot|^2}{4T}} -1 ) \Phi_{\omega}$ belongs to $H^1(\R^3)$,
$$\| (
e^{-i\frac{|\cdot|^2}{4T}} -1 ) \Phi_{\omega} \|_V = 
\| \nabla (( e^{-i\frac{|\cdot|^2}{4T}} -1
) \Phi_{\omega}) \|_{L^2} \leq \frac{1}{2T} \|
|\cdot| \Phi_{\omega} \|_{L^2} +\frac{1}{4T} \|
|\cdot|^2 \nabla \Phi_{\omega} \|_{L^2} \rightarrow 0, \quad
T \rightarrow +\infty.$$ 

\end{proof}

\section{Spectral and dispersive properties of linearization $L$}
\label{linearizzato}
Here we study the long time behaviour of equation
\eqref{eq_linearizzata0},
that is the linearization of
 \eqref{eq1}
around the stationary solution $e^{i\omega t} \Phi_{\omega}$.

\noindent The generalized kernel of the operator $L$ (see
\eqref{venti}) is defined as
$\displaystyle
  N_g(L) = \bigcup_{k \in \N} \ker (L^k)$.

\noindent
In what follows let us denote
$$\begin{array}{ll}
    \varphi_{\omega}(x) = \frac{d \Phi_{\omega}}{d\omega}(x) = \frac{1}{4\sigma
\omega} \left( \frac{\sqrt{\omega}}{4\pi \nu} \right)^{\frac{1}{2\sigma}}
\frac{e^{-\sqrt{\omega} |x|}}{4\pi |x|} -\frac{1}{2\sqrt{\omega}} \left(
\frac{\sqrt{\omega}}{4\pi \nu} \right)^{\frac{1}{2\sigma}}
\frac{e^{-\sqrt{\omega} |x|}}{4\pi},\\
    g_{\omega}(x) = \frac{\omega^{\frac{1}{4}}}{\sqrt{4\pi \nu}} |x|
\frac{e^{-\sqrt{\omega} |x|}}{4\pi}, &  \\
    h_{\omega}(x) = \frac{\omega^{\frac{1}{4}}}{\sqrt{4\pi \nu}} \left(
-\frac{1}{4\omega^{\frac{3}{2}}} \frac{e^{-\sqrt{\omega} |x|}}{4\pi |x|}
+\frac{1}{2\omega} \frac{e^{-\sqrt{\omega} |x|}}{4\pi} +\frac{1}{2\sqrt{\omega}}
|x| \frac{e^{-\sqrt{\omega} |x|}}{4\pi} +\frac{1}{3} |x|^2
\frac{e^{-\sqrt{\omega} |x|}}{4\pi} \right).
\end{array}$$

\noindent In Appendix \ref{nucleo} we prove the following theorem.

\begin{theo}	\label{kergen}
If the nonlinearity power $\sigma$ is different from $1$, then $N_g(L)
= \emph{\textrm{span}} \left\{ \left(
                   \begin{array}{ll}
                     0\\
                     \Phi_{\omega}
                   \end{array}
                 \right),
                \left(
                   \begin{array}{ll}
                     \varphi_{\omega}\\
                     0
                   \end{array}
                 \right)
 \right\}.$

\noindent Moreover, if $\sigma = 1$, then $N_g(L) =
\emph{\textrm{span}} \left\{ \left(
                   \begin{array}{ll}
                     0\\
                     \Phi_{\omega}
                   \end{array}
                 \right),
                \left(
                   \begin{array}{ll}
                     \varphi_{\omega}\\
                     0
                   \end{array}
                 \right),
                \left(
                   \begin{array}{ll}
                     0\\
                     g_{\omega}
                   \end{array}
                 \right),
                \left(
                   \begin{array}{ll}
                     h_{\omega}\\
                     0
                   \end{array}
                 \right)
 \right\}.$

\end{theo}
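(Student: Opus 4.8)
The plan is to exploit the factorized structure $L=JD$ with $D=\mathrm{diag}(L_1,L_2)$ and $L_j=H_{\alpha_j}+\omega$, together with the spectral data of Lemma~\ref{ass3}, and to build explicitly the single Jordan chain sitting over $\ker L$. First I record the elementary algebra of the powers of $L$: from \eqref{venti} one gets $L^2=-\mathrm{diag}(L_2L_1,\,L_1L_2)$, so the even powers of $L$ are block-diagonal while the odd powers interchange the two components; hence $(u,v)^{\mathsf{T}}\in\ker L^k$ reduces, componentwise, to relations of the type $(L_2L_1)^mu=0$, $L_1(L_2L_1)^mu=0$, and the analogous ones for $v$. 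By Lemma~\ref{ass3} the operator $L_1$ has $0$ in its resolvent set, hence is boundedly invertible, whereas $0$ is an isolated, simple eigenvalue of the self-adjoint operator $L_2$ with eigenfunction $\Phi_\omega$, so that $\mathrm{Ran}(L_2)=\{\Phi_\omega\}^{\perp}$ is closed. In particular $\ker L=\mathrm{span}\{(0,\Phi_\omega)^{\mathsf{T}}\}$ is one-dimensional, which forces the dimensions $\dim(\ker L^{k+1}/\ker L^k)$ to be non-increasing and at most $1$; thus $N_g(L)$ is spanned by a single Jordan chain $\xi_1,\xi_2,\dots$ with $\xi_1=(0,\Phi_\omega)^{\mathsf{T}}$ and $L\xi_{j+1}=\xi_j$ up to inessential constants.

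Next I run the chain, observing that passing from $\xi_j$ to $\xi_{j+1}$ always amounts to solving one inhomogeneous equation of $L_1$-type, $L_1w_1=(\text{known})$, which is always solvable, together with one of $L_2$-type, $L_2w_2=(\text{known})$, which is solvable if and only if the right-hand side is orthogonal to $\Phi_\omega$. The first non-trivial step gives $\xi_2=(\varphi_\omega,0)^{\mathsf{T}}$ with $\varphi_\omega=\partial_\omega\Phi_\omega$: differentiating the profile equation $H_{\alpha(\Phi_\omega)}\Phi_\omega+\omega\Phi_\omega=0$ in $\omega$ yields $L_1\varphi_\omega=-\Phi_\omega$, once one has checked that $\varphi_\omega$ actually belongs to $D(H_{\alpha_1})$ (the $\omega$-derivative of the boundary condition satisfied by $\Phi_\omega$ is precisely the boundary condition of strength $\alpha_1$). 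The next step is of $L_2$-type: it requires solving $L_2w\propto\varphi_\omega$, which is possible precisely when $\langle\varphi_\omega,\Phi_\omega\rangle=0$. But $\langle\varphi_\omega,\Phi_\omega\rangle=\tfrac12\frac{d}{d\omega}\|\Phi_\omega\|_{L^2}^2$, which by the computation already performed in Section~\ref{staborb} is a nonzero multiple of $(1-\sigma)\,\omega^{(1-3\sigma)/(2\sigma)}$. Hence if $\sigma\neq1$ this step is obstructed, the chain stops at $\xi_2$, and $N_g(L)=\mathrm{span}\{(0,\Phi_\omega)^{\mathsf{T}},(\varphi_\omega,0)^{\mathsf{T}}\}$, which is the first assertion of the theorem.

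If instead $\sigma=1$, then $\langle\varphi_\omega,\Phi_\omega\rangle=0$ and the chain goes on. One checks that the function $g_\omega$ of the statement, which equals $|x|^2\Phi_\omega$ and has zero charge (hence lies in $D(H_{\alpha_2})$), satisfies $L_2g_\omega=-8\omega\varphi_\omega$, giving $\xi_3=(0,g_\omega)^{\mathsf{T}}$. The following step is of $L_1$-type, hence always available: solving $L_1h_\omega\propto g_\omega$ produces exactly the function $h_\omega$ displayed in the statement (a short expansion at the origin shows that its $|x|^{-1}$-part carries the charge compatible with the boundary condition of strength $\alpha_1$, so $h_\omega\in D(H_{\alpha_1})$), and one sets $\xi_4=(h_\omega,0)^{\mathsf{T}}$. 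Finally, extending the chain once more is again of $L_2$-type and succeeds if and only if $\langle h_\omega,\Phi_\omega\rangle=0$ (using $\langle\varphi_\omega,\Phi_\omega\rangle=0$); an explicit integration gives $\langle h_\omega,\Phi_\omega\rangle\neq0$, so the chain terminates at $\xi_4$ and $N_g(L)=\mathrm{span}\{(0,\Phi_\omega)^{\mathsf{T}},(\varphi_\omega,0)^{\mathsf{T}},(0,g_\omega)^{\mathsf{T}},(h_\omega,0)^{\mathsf{T}}\}$, as claimed.

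The delicate part of the argument is entirely localized at the origin. What requires genuine care is the verification that the candidate functions $\varphi_\omega$, $g_\omega$, $h_\omega$ lie in the appropriate operator domains $D(H_{\alpha_1})$ or $D(H_{\alpha_2})$ --- that is, that their $|x|^{-1}$-parts carry the correct charge and that their regular parts satisfy the prescribed boundary conditions with strength $\alpha_1$, resp.\ $\alpha_2$ --- and that the identities $L_1\varphi_\omega\propto\Phi_\omega$, $L_2g_\omega\propto\varphi_\omega$ and $L_1h_\omega\propto g_\omega$ hold as genuine $L^2(\R^3)$ identities, so that no spurious $\delta$-contribution is generated when passing from the pointwise computation for $x\neq0$ to the statement about the operators. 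Since all the functions involved are radial, these reductions are elementary one-dimensional computations; the only additional quantitative input is the non-vanishing of the two scalar products $\langle\varphi_\omega,\Phi_\omega\rangle$ (for $\sigma\neq1$) and $\langle h_\omega,\Phi_\omega\rangle$ (for $\sigma=1$), which pins down the length of the chain.
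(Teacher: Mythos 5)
Your proof is correct and rests on the same skeleton as the paper's (Appendix~\ref{nucleo}): the block structure of the powers of $L$, the invertibility of $L_1$, and the one-dimensional kernel of $L_2$. The genuine difference is in how you decide whether the chain can be extended at each $L_2$-step. The paper computes $\ker(L^k)$ for $k=1,2,3$ (and $4,5$ when $\sigma=1$) by explicitly writing down the general distributional solution of $-\triangle u+\omega u=\varphi_\omega$ with its two free constants and then checking whether membership in $D(H_{\alpha_2})$ (square-integrability plus the boundary condition) can be arranged; the obstruction for $\sigma\neq 1$ emerges from that algebra. You instead organize $N_g(L)$ as a single Jordan chain over the one-dimensional $\ker L$ (using the non-increasing quotient dimensions), obtain $L_1\varphi_\omega=-\Phi_\omega$ by differentiating the profile equation in $\omega$, and replace the explicit ODE solving by the Fredholm solvability condition $\langle\varphi_\omega,\Phi_\omega\rangle=\tfrac12\tfrac{d}{d\omega}\|\Phi_\omega\|_{L^2}^2\propto(1-\sigma)$, which you may legitimately invoke because $0$ is an isolated simple eigenvalue of the self-adjoint $L_2$. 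What your route buys is transparency: the length of the chain is visibly governed by the Vakhitov--Kolokolov quantity $\tfrac{d}{d\omega}\|\Phi_\omega\|^2$, consistently with the general theory and with the orbital stability computation of Section~\ref{staborb}; what the paper's route buys is that the candidate generalized eigenfunctions ($g_\omega$, $h_\omega$) and the domain conditions at the origin come out of the computation rather than having to be guessed and verified. Two points where you are as terse as (but not more wrong than) the paper: the non-vanishing of $\langle h_\omega,\Phi_\omega\rangle$ for $\sigma=1$ is asserted rather than computed, and the domain verifications at the origin are flagged but not carried out; to be fully rigorous you should also record that the solvability condition for extending past $\xi_4$ is $\langle h_\omega+c\,\varphi_\omega,\Phi_\omega\rangle=0$ for some $c$ (you correctly note this collapses to $\langle h_\omega,\Phi_\omega\rangle=0$ precisely because $\langle\varphi_\omega,\Phi_\omega\rangle=0$ at $\sigma=1$).
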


\par\noindent
In the following section we provide an explicit description of the
spectrum of the non-selfadjoint operator $L$ and the dispersive
estimates for the action of the propagator $e^{-Lt}$ upon the
absolutely continuous subspace.
%\noindent which turns out to be the linearization of $J H_{\alpha}$ around
%$e^{i\omega t}\Phi_{\omega}(x)$.

\subsection{The resolvent and the spectrum of the linearized operator}
The purpose of this section is to prove
an explicit formula for the resolvent of the linearized
operator. For later convenience we denote
\be \label{gom}
G_{\omega \pm i\lambda}(x) = \frac{e^{i \sqrt{-\omega \mp
i\lambda} |x|}}{4\pi |x|} \qquad \omega > 0, \lambda \in \C,
\ee

\noindent with the prescription $\Im{\sqrt{-\omega \pm i\lambda}} >
0$.

\noindent Furthermore, we make use of the notation $\langle g,h \rangle : =
\int_{\R3} g(x) h(x) \, dx$.

\noindent
We prove the following

\begin{theo}    \label{ris}
The resolvent $R(\lambda) = (L -\lambda I)^{-1}$ of the operator $L$
defined in \eqref{venti} is given by
\begin{equation}    \label{eq:risolvente}
R(\lambda) = \left[
  \begin{array}{cc}
    -\lambda \mathcal{G}_{\lambda^2} * & -\Gamma_{\lambda^2} * \\
    \Gamma_{\lambda^2} * & -\lambda \mathcal{G}_{\lambda^2} * \\
  \end{array}
\right] +\frac{4\pi}{W(\lambda^2)} i \left[
  \begin{array}{cc}
    \Lambda_1 & i\Sigma_2 \\
    -i\Sigma_1 & \Lambda_2\\
  \end{array}
\right],
\end{equation}
where
$$W (\lambda^2) = 32\pi2 \alpha_1 \alpha_2 -4i \pi (\alpha_1 +\alpha_2)
\left(\sqrt{-\omega +i\lambda} +\sqrt{-\omega -i \lambda} \right)
-2\sqrt{-\omega +i\lambda} \sqrt{-\omega -i\lambda},$$
and formula \eqref{eq:risolvente} holds
for all $\lambda \in \C \setminus \{ \lambda \in \C: \;
W(\lambda^2) = 0, \;\; \textrm{or} \;\; \Re(\lambda) = 0 \;
\textrm{and} \; |\Im(\lambda)| \geq \omega \}$. Furthermore, the
symbol $*$ in  \eqref{eq:risolvente}
denotes the convolution and
$$%\begin{array}{ll}
  %\displaystyle
\mathcal{G}_{\lambda^2}(x) = \frac{1}{2i \lambda} \left(
G_{\omega -i\lambda}(x) -G_{\omega +i\lambda}(x) \right), \quad %\\
%\\
 % \displaystyle
\Gamma_{\lambda^2}(x) = \frac{1}{2} \left( G_{\omega
-i\lambda}(x) +G_{\omega +i\lambda}(x) \right).% \\
%\end{array}
$$
Finally, the entries of the second matrix are finite rank operators
whose action on $f \in L^2 (\R^3)$ reads
\begin{equation} \label{lambdasigma}
\Lambda_1 f = [i\lambda (4\pi \alpha_2 -i\sqrt{-\omega +i\lambda})
\langle\mathcal{G}_{\lambda^2}, f \rangle -(4\pi \alpha_1 -i\sqrt{-\omega
+i\lambda}) \langle\Gamma_{\lambda^2}, f \rangle]G_{\omega +i\lambda}
+
\end{equation}
$$+[i\lambda (4\pi \alpha_2 -i\sqrt{-\omega -i\lambda})
\langle \mathcal{G}_{\lambda^2}, f \rangle +(4\pi \alpha_1 -i\sqrt{-\omega
-i\lambda}) \langle \Gamma_{\lambda^2}, f \rangle]G_{\omega -i\lambda},$$
$$\Lambda_2 f = [i\lambda (4\pi \alpha_1 -i\sqrt{-\omega +i\lambda})
\langle \mathcal{G}_{\lambda^2}, f \rangle -(4\pi \alpha_2 -i\sqrt{-\omega
+i\lambda}) \langle \Gamma_{\lambda^2}, f \rangle]G_{\omega +i\lambda} +$$
$$+[i\lambda (4\pi \alpha_1 -i\sqrt{-\omega -i\lambda})
\langle \mathcal{G}_{\lambda^2}, f \rangle +(4\pi \alpha_2 -i\sqrt{-\omega
-i\lambda}) \langle \Gamma_{\lambda^2}, f\rangle]G_{\omega -i\lambda},$$
$$\Sigma_1 f = -[i\lambda (4\pi \alpha_2 -i\sqrt{-\omega +i\lambda})
\langle \mathcal{G}_{\lambda^2}, f \rangle -(4\pi \alpha_1 -i\sqrt{-\omega
+i\lambda}) \langle \Gamma_{\lambda^2}, f \rangle]G_{\omega +i\lambda} +$$
$$+[i\lambda (4\pi \alpha_2 -i\sqrt{-\omega -i\lambda})
\langle \mathcal{G}_{\lambda^2}, f \rangle +(4\pi \alpha_1 -i\sqrt{-\omega
-i\lambda}) \langle \Gamma_{\lambda^2}, f \rangle]G_{\omega -i\lambda},$$
$$\Sigma_2 f = -[i\lambda (4\pi \alpha_1 -i\sqrt{-\omega +i\lambda})
\langle \mathcal{G}_{\lambda^2}, f\rangle -(4\pi \alpha_2 -i\sqrt{-\omega
+i\lambda}) \langle \Gamma_{\lambda^2}, f\rangle ]G_{\omega +i\lambda} +$$
$$+[i\lambda (4\pi \alpha_1 -i\sqrt{-\omega -i\lambda})
\langle \mathcal{G}_{\lambda^2}, f \rangle +(4\pi \alpha_2 -i\sqrt{-\omega
-i\lambda}) \langle \Gamma_{\lambda^2}, f \rangle]G_{\omega -i\lambda}.$$

\noindent The spectrum of the operator $L$ can be decomposed into an
essential and a discrete part,
\begin{equation}    \label{spettro-L}
\sigma(L) = \sigma_{ess}(L) \cup \sigma_d(L),
\end{equation}

\noindent where the essential spectrum is
$$\sigma_{ess}(L) =  \mathcal{C}_+ \cup \mathcal{C}_- = \{ \lambda \in \C: \;
\Re(\lambda) = 0 \, \textrm{and} \, \Im(\lambda) \geq \omega \} \cup
\{ \lambda \in \C: \; \Re(\lambda) = 0 \, \textrm{and} \,
\Im(\lambda) \leq -\omega \},$$

\noindent and the discrete spectrum depends on the paremeter
$\sigma$ as follows:
\begin{itemize}
 \item[(a)] if $\sigma \in (0, 1/\sqrt{2}]$, then the only eigenvalue
of $L$ is $0$
with algebraic multiplicity $2$.
 \item[(b)] if $\sigma \in (1/\sqrt{2}, 1)$, then $L$ has two simple eigenvalues
$\pm i 2\sigma \sqrt{1 -\sigma^2} \omega$ and the eigenvalue $0$
with algebraic multiplicity $2$.
 \item[(c)] if $\sigma = 1$, then the only eigenvalue of $L$ is $0$
with algebraic
multiplicity $4$.
 \item[(d)] if $\sigma \in (1, +\infty)$, then $L$ has two simple
eigenvalues $\pm
2\sigma \sqrt{\sigma^2 -1} \omega$ and the eigenvalue $0$ with
algebraic multiplicity $2$.
\end{itemize}

\end{theo}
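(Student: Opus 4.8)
\noindent\emph{Proof strategy.} The plan is to derive the resolvent formula \eqref{eq:risolvente} by an explicit ansatz, and then to extract $\sigma(L)$ from it. Off the origin $L$ acts as the ``free'' operator $L_{0}$ obtained by replacing each $H_{\alpha_{j}}$ by $-\Delta$, so I would first compute $(L_{0}-\lambda)^{-1}=(L_{0}+\lambda)(L_{0}^{2}-\lambda^{2})^{-1}$. Since $L_{0}^{2}$ is block diagonal with both diagonal entries equal to $-(-\Delta+\omega)^{2}$, the partial fraction identities $\frac{1}{a^{2}+\lambda^{2}}=\frac{1}{2i\lambda}\big(\frac{1}{a-i\lambda}-\frac{1}{a+i\lambda}\big)$ and $\frac{a}{a^{2}+\lambda^{2}}=\frac12\big(\frac{1}{a-i\lambda}+\frac{1}{a+i\lambda}\big)$, applied with $a=-\Delta+\omega$, together with the explicit kernel $\frac{e^{i\sqrt{z}|x-y|}}{4\pi|x-y|}$ of $(-\Delta-z)^{-1}$, identify $(L_{0}-\lambda)^{-1}$ with convolution by the first matrix of \eqref{eq:risolvente}: $\mathcal G_{\lambda^{2}}$ is the kernel of $((-\Delta+\omega)^{2}+\lambda^{2})^{-1}$ and $\Gamma_{\lambda^{2}}$ that of $(-\Delta+\omega)((-\Delta+\omega)^{2}+\lambda^{2})^{-1}$. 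This term is defined exactly when $-\omega\pm i\lambda\notin[0,+\infty)$, i.e. when $\lambda\notin\mathcal C_{+}\cup\mathcal C_{-}$.

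\noindent\emph{The correction and the denominator $W$.} Next I would seek $R(\lambda)f$ as $(L_{0}-\lambda)^{-1}f$ plus a singular correction whose two components are $a_{j}G_{\omega+i\lambda}+b_{j}G_{\omega-i\lambda}$, $j=1,2$ (the only $L^{2}_{loc}$ solutions of $(L_{0}-\lambda)u=0$ singular at the origin). Using $(-\Delta+\omega)G_{\omega\pm i\lambda}=\mp i\lambda\,G_{\omega\pm i\lambda}$ away from $0$, the requirement $(L_{0}-\lambda)(\text{correction})=0$ forces $a_{1}=-ia_{2}$, $b_{1}=ib_{2}$, leaving two free parameters. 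These are fixed by imposing that the two components lie in $D(H_{\alpha_{1}})$ and $D(H_{\alpha_{2}})$, i.e. by the boundary conditions $\phi_{j}(0)=\alpha_{j}q_{j}$, where $q_{j}$ is the charge; using the expansion $G_{\omega\pm i\lambda}(x)=G_{0}(x)+\frac{i}{4\pi}\sqrt{-\omega\mp i\lambda}+O(|x|)$ and the fact that the convolution terms are continuous at $0$ with values $-\lambda\langle\mathcal G_{\lambda^{2}},f_{1}\rangle-\langle\Gamma_{\lambda^{2}},f_{2}\rangle$ and $\langle\Gamma_{\lambda^{2}},f_{1}\rangle-\lambda\langle\mathcal G_{\lambda^{2}},f_{2}\rangle$ respectively, this produces a $2\times2$ linear system for $(a_{2},b_{2})$ whose determinant is a fixed nonzero multiple of $W(\lambda^{2})$. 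Solving it by Cramer's rule, substituting $a_{1}=-ia_{2}$, $b_{1}=ib_{2}$, and regrouping yields the rank-two correction, i.e. the second matrix of \eqref{eq:risolvente} with entries the finite-rank $\Lambda_{j},\Sigma_{j}$ of \eqref{lambdasigma}; the formula is valid precisely on $\{\lambda\notin\mathcal C_{+}\cup\mathcal C_{-}:W(\lambda^{2})\neq0\}$, and at a zero $\lambda_{0}\neq0$ of $W(\lambda^{2})$ the homogeneous system has a nontrivial solution, producing a genuine eigenfunction in each component, so such $\lambda_{0}$ is a true eigenvalue.

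\noindent\emph{Spectrum.} From \eqref{eq:risolvente}, $R(\lambda)-(L_{0}-\lambda)^{-1}$ is finite rank, so $L$ and $L_{0}$ share the same essential spectrum; and since $L_{0}^{2}=-(-\Delta+\omega)^{2}\oplus\big(-(-\Delta+\omega)^{2}\big)$ with $\sigma(-\Delta+\omega)=[\omega,+\infty)$, one gets $\sigma(L_{0})=\{\lambda:\lambda^{2}\le-\omega^{2}\}=\mathcal C_{+}\cup\mathcal C_{-}$, purely essential. On the connected set $\C\setminus(\mathcal C_{+}\cup\mathcal C_{-})$ the functions $\sqrt{-\omega\pm i\lambda}$ are analytic (branch points $\pm i\omega$ and cuts along $\mathcal C_{\pm}$), so there $R(\lambda)$ is meromorphic with poles exactly at the zeros of $W(\lambda^{2})$; hence $\sigma_{d}(L)$ consists precisely of those zeros. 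To locate them, insert $\alpha_{1}=-\frac{2\sigma+1}{4\pi}\sqrt\omega$, $\alpha_{2}=-\frac{\sqrt\omega}{4\pi}$, put $p=\sqrt{-\omega+i\lambda}$, $q=\sqrt{-\omega-i\lambda}$ (with $\Im p,\Im q>0$), and use $p^{2}+q^{2}=-2\omega$ so that $pq=\frac12(p+q)^{2}+\omega$; then $W(\lambda^{2})=0$ becomes, after dividing by $2$ and setting $s=p+q$, the quadratic $s^{2}-2i(\sigma+1)\sqrt\omega\,s-4\sigma\omega=0$, of discriminant $-4(\sigma-1)^{2}\omega$, whence $s\in\{2i\sqrt\omega,\,2i\sigma\sqrt\omega\}$. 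The root $s=2i\sqrt\omega$ forces $p=q=i\sqrt\omega$, i.e. $\lambda=0$. The root $s=2i\sigma\sqrt\omega$ gives $pq=(1-2\sigma^{2})\omega$, so $(p-q)^{2}=s^{2}-4pq=4(\sigma^{2}-1)\omega$ and $2i\lambda=p^{2}-q^{2}=(p+q)(p-q)=\pm4i\sigma\omega\sqrt{\sigma^{2}-1}$, i.e. $\lambda=\pm2\sigma\omega\sqrt{\sigma^{2}-1}$. Checking $\Im p,\Im q>0$ on this family shows it actually lies off $\mathcal C_{\pm}$ — hence is in $\sigma_{d}(L)$ — exactly when $\sigma>1/\sqrt2$: for $\sigma\in(1/\sqrt2,1)$ it equals $\pm i\,2\sigma\sqrt{1-\sigma^{2}}\,\omega$ (and $|\Im\lambda|<\omega$ since $(2\sigma^{2}-1)^{2}>0$), for $\sigma>1$ it is the real pair $\pm2\sigma\sqrt{\sigma^{2}-1}\,\omega$, for $\sigma=1$ it merges with $\lambda=0$, and for $0<\sigma\le1/\sqrt2$ it falls on $\mathcal C_{\pm}$ (on the threshold when $\sigma=1/\sqrt2$) or on a non-physical sheet, i.e. it is a resonance, not an eigenvalue. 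This gives cases (a)--(d). Finally, the algebraic multiplicity of $0$ equals $\dim N_{g}(L)$, which by Theorem \ref{kergen} is $2$ for $\sigma\neq1$ and $4$ for $\sigma=1$, while the eigenvalues in (b) and (d) are simple because $W(\lambda^{2})$ has a simple zero there, so the corresponding pole of $R(\lambda)$ is simple with rank-one residue.

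\noindent The bulk of the work — and the main obstacle — is the bookkeeping of the first two steps: correctly reading off the charges and the origin-values of the regular parts of $G_{\omega\pm i\lambda}$ and of the convolution terms, assembling and inverting the $2\times2$ boundary-value system, and recognising the resulting denominator as $W(\lambda^{2})$ and the numerators as the explicit $\Lambda_{j},\Sigma_{j}$ of \eqref{lambdasigma}. The only genuinely delicate point afterwards is the branch analysis in the last step, which separates the eigenvalue regime $\sigma\in(1/\sqrt2,1)$ (case (b)) from the resonance regime $\sigma\in(0,1/\sqrt2]$ (case (a)) and ties the threshold $\sigma=1/\sqrt2$ to the eigenvalue reaching the bottom $\pm i\omega$ of the essential spectrum.
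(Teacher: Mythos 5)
Your argument is correct, and it reaches the resolvent formula by a genuinely different route from the paper. The paper first proves (Appendix \ref{risolvente}) the block factorization $R(\lambda)=\left[\begin{smallmatrix} -\lambda(\lambda^2+L_2L_1)^{-1} & -L_2(\lambda^2+L_1L_2)^{-1}\\ L_1(\lambda^2+L_2L_1)^{-1} & -\lambda(\lambda^2+L_1L_2)^{-1}\end{smallmatrix}\right]$, then explicitly inverts the fourth-order operators $\mu+L_2L_1$ and $\mu+L_1L_2$ (Lemma \ref{045} and Remark \ref{046}) by writing $(\mathcal H^{21}_\mu)^{-1}f=\mathcal G_\mu*f+p(f)G_{\omega+i\sqrt\mu}+q(f)G_{\omega-i\sqrt\mu}$ with $p,q$ determined by the boundary conditions of $D(L_2L_1)$, and finally assembles the matrix. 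You instead work directly at the level of the matrix operator: free matrix resolvent $(L_0-\lambda)^{-1}$ by partial fractions, plus a two-parameter singular correction in $\mathrm{span}\{G_{\omega\pm i\lambda}\}$ whose coefficients are pinned down by the boundary conditions of $D(H_{\alpha_1})\oplus D(H_{\alpha_2})$ --- a Krein-type formula for $L$ itself. Both routes end at the same $2\times2$ algebraic system; I checked that your determinant equals $W(\lambda^2)/(16\pi^2 i)$, so the identification with $W$ is exact. What your approach buys is a shorter path to the formula (no commutation lemmas for $L_1,L_2,\widetilde{L_2}$ are needed); what the paper's factorization buys is that the operators $\lambda^2+L_jL_k$ and their domains reappear verbatim in the dispersive analysis. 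A further genuine added value of your write-up is that you actually carry out the root-finding for $W$ (reduction to $s^2-2i(\sigma+1)\sqrt\omega\,s-4\sigma\omega=0$ and the physical-sheet condition $\Im p,\Im q>0$, which is exactly the dichotomy at $\sigma=1/\sqrt2$); the paper declares this computation ``lengthy but elementary'' and omits it, and your branch analysis is consistent with the paper's remark that $\sigma=1/\sqrt2$ produces a threshold resonance.

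Two small points deserve a line of care if you write this up in full. First, your construction produces a right inverse of $L-\lambda$ by design; to conclude $\lambda\in\rho(L)$ you should also note that $L-\lambda$ is injective when $W(\lambda^2)\neq0$ and $\lambda\notin\mathcal C_+\cup\mathcal C_-$ (any $L^2$ kernel element must be a combination of $G_{\omega\pm i\lambda}$ solving the homogeneous boundary system). Second, the simplicity of the nonzero eigenvalues in cases (b) and (d) via ``simple zero of $W$, hence rank-one residue'' needs the extra observation that the residue matrix, which is a priori built from the two functions $G_{\omega+i\lambda_0}$ and $G_{\omega-i\lambda_0}$, actually has rank one at the pole; equivalently, that the homogeneous boundary system has a one-dimensional kernel there. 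Neither point is an obstacle, and the paper itself leaves both implicit.
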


Before giving the proof, we need two preliminary lemmas.
\begin{lemma}
 For any $\mu \in \C$, $\omega > 0$, the Green's function ${\mathcal
   G}_\mu$ of the operator
$\mathcal{H_\mu}$, defined by
$$D(\mathcal{H_\mu}) = H^4(\R^3), \quad \mathcal{H}_\mu = \mu +(-\triangle
+\omega)^2, $$
 reads
\begin{equation}    \label{green4}
 \mathcal{G}_{\mu}(x) = \frac{1}{2i \sqrt{\mu}} \left( G_{\omega
-i\sqrt{\mu}}(x) -G_{\omega +i\sqrt{\mu}}(x) \right).
\end{equation}
\end{lemma}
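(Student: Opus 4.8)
The plan is to obtain $\mathcal{G}_\mu$ from the algebraic factorization of the symbol of $\mathcal{H}_\mu$ together with the classical three-dimensional resolvent kernel of the Laplacian. First I would note that $(-\Delta+\omega)^2$ is a positive constant-coefficient operator with spectrum $[\omega^2,+\infty)$, so $\mathcal{H}_\mu=\mu+(-\Delta+\omega)^2$ is boundedly invertible on $L^2(\R^3)$ precisely for $\mu\notin(-\infty,-\omega^2]$; for such $\mu$ the Green's function is the (unique tempered) distribution with $\mathcal{H}_\mu\mathcal{G}_\mu=\delta$, and on the Fourier side $\widehat{\mathcal{G}_\mu}(\xi)=\big(\mu+(|\xi|^2+\omega)^2\big)^{-1}$, so $\mathcal{G}_\mu\in L^2(\R^3)$ and is continuous away from the origin. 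Fixing any branch of $\sqrt\mu$ and assuming $\mu\neq0$ (needed for the quoted formula), the symbol factors as $\mu+(|\xi|^2+\omega)^2=(|\xi|^2+\omega-i\sqrt\mu)(|\xi|^2+\omega+i\sqrt\mu)$, i.e.\ $\mathcal{H}_\mu=P_-P_+$ with $P_\pm:=-\Delta+\omega\pm i\sqrt\mu$, two commuting operators on $H^4(\R^3)$.

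Next I would recall the standard formula for the resolvent of $-\Delta$ on $\R^3$ (see \cite{Albeverio}): for $z\in\C\setminus[0,+\infty)$ and the branch $\Im\sqrt z>0$, the operator $-\Delta-z$ has bounded inverse given by convolution with $e^{i\sqrt z|x|}/(4\pi|x|)$, equivalently $(-\Delta-z)\,\tfrac{e^{i\sqrt z|x|}}{4\pi|x|}=\delta$ in $\mathcal{D}'(\R^3)$. Applied to $z=-\omega\pm i\sqrt\mu$ this identifies the two factors: $P_\pm\,G_{\omega\pm i\sqrt\mu}=\delta$ (coherent signs), with $G_{\omega\pm i\sqrt\mu}$ as in \eqref{gom} and the prescription $\Im\sqrt{-\omega\pm i\sqrt\mu}>0$. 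The verification of \eqref{green4} is then immediate: since $P_+$ and $P_-$ commute and $\mathcal{H}_\mu=P_-P_+$, one has $\mathcal{H}_\mu\,G_{\omega-i\sqrt\mu}=P_+(P_-G_{\omega-i\sqrt\mu})=P_+\delta$ and $\mathcal{H}_\mu\,G_{\omega+i\sqrt\mu}=P_-(P_+G_{\omega+i\sqrt\mu})=P_-\delta$; subtracting and using $P_+-P_-=2i\sqrt\mu$ gives $\mathcal{H}_\mu\big(G_{\omega-i\sqrt\mu}-G_{\omega+i\sqrt\mu}\big)=2i\sqrt\mu\,\delta$, which is exactly \eqref{green4} upon dividing by $2i\sqrt\mu$. (Alternatively, the same identity follows by inverting the Fourier transform term by term in the partial fraction decomposition $\big(\mu+(|\xi|^2+\omega)^2\big)^{-1}=\tfrac{1}{2i\sqrt\mu}\big((|\xi|^2+\omega-i\sqrt\mu)^{-1}-(|\xi|^2+\omega+i\sqrt\mu)^{-1}\big)$.) Finally I would observe that the right-hand side of \eqref{green4} is invariant under $\sqrt\mu\mapsto-\sqrt\mu$ (the prefactor changes sign while the two terms are swapped), so it is well defined independently of the choice of square root, and that it lies in $L^2(\R^3)$.

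There is no genuine obstacle here; the statement is essentially bookkeeping. The only points deserving a little care are (i) pinning down the admissible $\mu$ --- the resolvent set $\C\setminus(-\infty,-\omega^2]$ together with $\mu\neq0$ --- and checking that the two branch prescriptions $\Im\sqrt{-\omega\pm i\sqrt\mu}>0$ are simultaneously consistent, so that each factor $P_\pm$ is inverted by its \emph{decaying} Green's function; and (ii) justifying the term-by-term application of the differential operators $P_\pm$ (or, in the Fourier route, the term-by-term inversion), which is legitimate because each $G_{\omega\pm i\sqrt\mu}\in L^2(\R^3)\cap L^1_{loc}(\R^3)$ and $\Delta G_{\omega\pm i\sqrt\mu}=(\omega\pm i\sqrt\mu)G_{\omega\pm i\sqrt\mu}-\delta$ is an explicit distribution.
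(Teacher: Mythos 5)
Your argument is correct and is essentially the paper's proof: the paper simply takes the Fourier transform and applies the partial-fraction identity you mention in your parenthetical, which is the Fourier-side form of your factorization $\mathcal{H}_\mu=P_-P_+$ combined with the known resolvent kernel of $-\Delta$. The only addition in your write-up is that you pin down the admissible $\mu$ (namely $\mu\notin(-\infty,-\omega^2]$ and $\mu\neq 0$) and check the consistency of the branch prescriptions and the invariance under $\sqrt\mu\mapsto-\sqrt\mu$, points the paper's statement (``for any $\mu\in\C$'') glosses over.
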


\begin{proof}
By definition of Green's function, $\mathcal{G}_{\mu}$ solves the equation
$[\mu +(-\triangle +\omega)^2] \mathcal{G}_{\mu}(x) = \delta(x).$
Taking the Fourier transform, one
gets
$$\widehat{\mathcal{G}_{\mu}}(k) = \frac{1}{(2\pi)^{3/2} (\mu +(k2
+\omega)2)} %$$
%$$= \frac{1}{2i \sqrt{\mu}} \left( \frac{1}{(2\pi)^{3/2}} \frac{1}{k2 +\omega
%-i \sqrt{\mu}} -\frac{1}{(2\pi)^{3/2}} \frac{1}{k2 +\omega +i
%\sqrt{\mu}} \right) =$$
%$$
= \frac{1}{2i \sqrt{\mu}} \left( \widehat{G_{\omega
-i\sqrt{\mu}}}(k) -\widehat{G_{\omega +i\sqrt{\mu}}}(k) \right),$$

\noindent where the function
$G_{\omega \pm i\sqrt{\mu}}$ was defined in \eqref{gom}.
The proof is complete.
\end{proof}

\begin{rmk}
The function $\mathcal{G}_{\mu}$ is an element of $H^s (\R^3)$ for any
$s < 7/2$.
\end{rmk}

\noindent Let us denote
$$\mathcal{H}_\mu^{21} = \mu +L_2 L_1,$$ where $L_2$ and $L_1$ were
defined in \eqref{elleunoedue}.
 Applying elementary rules on composition of operators, one can easily
 see that
the domain of the operator $\mathcal{H}_\mu^{21}$, which coincides with
the domain of
$L_2 L_1$, is given by
\begin{equation} \label{d12} D(L_2 L_1) = \left\{ u \in L^2(\R^3): \;
u = \xi +p G_{\omega
+i\sqrt{\mu}} +q G_{\omega -i\sqrt{\mu}}, \, \textrm{with} \;
\xi \in H^4(\R^3), \; p,q \in \C, \right.
\end{equation}
$$\left. \; \xi(0) +ip \frac{\sqrt{-\omega -i
\sqrt{\mu}}}{4\pi} +iq \frac{\sqrt{-\omega +i \sqrt{\mu}}}{4\pi} =
\alpha_1 (p+q), \right.$$
$$\left. (-\triangle +\omega)\xi(0) +\sqrt{\mu} p \frac{\sqrt{-\omega -i
\sqrt{\mu}}}{4\pi} -\sqrt{\mu} q \frac{\sqrt{-\omega +i
\sqrt{\mu}}}{4\pi} = \alpha_2 i\sqrt{\mu} (q-p) \right\}.$$

\noindent In the following lemma the inverse operator of
$\mathcal{H}_\mu^{21}$ is constructed.
\begin{lemma} \label{045}
For each $\mu \in \C$, the inverse of the operator $\mathcal{H}_\mu^{21}$
is given by
\begin{equation}    \label{ansatz}
(\mathcal{H}_\mu^{21})^{-1}: L^2(\R^3) \rightarrow D(\mathcal{H}_\mu^{21})
\qquad f \mapsto  \mathcal{G}_{\mu}* f \, +p (f) G_{\omega
+i\sqrt{\mu}} +q (f) G_{\omega -i\sqrt{\mu}},
\end{equation}

\noindent where the functionals $p$,$q: L^2(\R^3) \rightarrow \C$ act as
\begin{equation} \label{piqqu} \begin{array}{ll}
    p (f) = \frac{4\pi}{i \sqrt{\mu} W(\mu)} [i \sqrt{\mu} (4\pi \alpha_2 -i
\sqrt{-\omega +i \sqrt{\mu}})\langle \mathcal{G}_{\mu}, f \rangle
    -(4\pi \alpha_1 -i \sqrt{-\omega +i \sqrt{\mu}}) \langle
    \Gamma_{\mu}, f \rangle]\\
    q (f) = \frac{4\pi}{i \sqrt{\mu} W(\mu)} [i \sqrt{\mu} (4\pi \alpha_2 -i
\sqrt{-\omega -i \sqrt{\mu}})\langle \mathcal{G}_{\mu}, f \rangle
    +(4\pi \alpha_1 -i \sqrt{-\omega -i \sqrt{\mu}}) \langle
    \Gamma_{\mu}, f \rangle],\\
  \end{array}
\end{equation}

\noindent with
$$W(\mu) = 2 (4\pi)2 \alpha_1 \alpha_2 -4i \pi (\alpha_1 +\alpha_2)
\left(\sqrt{-\omega +i\sqrt{\mu}} +\sqrt{-\omega -i \sqrt{\mu}}
\right) -2\sqrt{-\omega +i\sqrt{\mu}} \sqrt{-\omega -i\sqrt{\mu}}.$$

\end{lemma}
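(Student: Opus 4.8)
The statement is a Krein-type resolvent identity for the fourth-order factorised operator $\mathcal{H}_\mu^{21}=\mu+L_2L_1$, and the natural plan is to proceed by explicit construction: define the operator $T$ by the right-hand side of \eqref{ansatz}--\eqref{piqqu}, show $T$ maps $L^2(\R^3)$ into $D(\mathcal{H}_\mu^{21})$, and check that $(\mu+L_2L_1)Tf=f$ for all $f$ together with injectivity of $\mathcal{H}_\mu^{21}$; this forces $T=(\mathcal{H}_\mu^{21})^{-1}$. The computation rests on two distributional identities: $(-\Delta+\omega\pm i\sqrt\mu)G_{\omega\pm i\sqrt\mu}=\delta$, whence $(-\Delta+\omega)G_{\omega\pm i\sqrt\mu}=\delta\mp i\sqrt\mu\,G_{\omega\pm i\sqrt\mu}$, and, as a consequence (together with the preceding lemma \eqref{green4}), $(-\Delta+\omega)\mathcal{G}_\mu=\Gamma_\mu$ in $L^2$ and $(\mu+(-\Delta+\omega)^2)\mathcal{G}_\mu=\delta$. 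Since $\widehat{\mathcal{G}_\mu}$ decays like $|k|^{-4}$ at infinity, convolution with $\mathcal{G}_\mu$ maps $L^2(\R^3)$ into $H^4(\R^3)\subset C^2(\R^3)$, so $\mathcal{G}_\mu*f$ is a legitimate candidate for the $H^4$-component $\xi$ in the domain description \eqref{d12}; moreover $(\mathcal{G}_\mu*f)(0)=\langle\mathcal{G}_\mu,f\rangle$ and $\bigl((-\Delta+\omega)\mathcal{G}_\mu*f\bigr)(0)=\langle(-\Delta+\omega)\mathcal{G}_\mu,f\rangle=\langle\Gamma_\mu,f\rangle$, using radiality of $\mathcal{G}_\mu$.

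The computational heart of the argument, and the step I expect to be the main obstacle, is to verify that any $u=\xi+p\,G_{\omega+i\sqrt\mu}+q\,G_{\omega-i\sqrt\mu}$ with $\xi\in H^4$ satisfying the two boundary conditions of \eqref{d12} obeys $(\mu+L_2L_1)u=(\mu+(-\Delta+\omega)^2)\xi$. The difficulty is that one must split $u$ into regular plus $G_0$-singular part \emph{twice} — once to apply $L_1=H_{\alpha_1}+\omega$, once more to apply $L_2=H_{\alpha_2}+\omega$ to $L_1u$ — and the two functions $G_{\omega\pm i\sqrt\mu}$ contribute different finite values $i\sqrt{-\omega\mp i\sqrt\mu}/(4\pi)$ at the origin. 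Concretely: the charge of $u$ is $p+q$ and its $H_{\alpha_1}$-regular part has value $\xi(0)+ip\,\frac{\sqrt{-\omega-i\sqrt\mu}}{4\pi}+iq\,\frac{\sqrt{-\omega+i\sqrt\mu}}{4\pi}$ at $0$, so the first condition in \eqref{d12} is exactly the $D(H_{\alpha_1})$ boundary condition. Using the distributional identities one gets $L_1u=(-\Delta+\omega)\xi-i\sqrt\mu\,p\,G_{\omega+i\sqrt\mu}+i\sqrt\mu\,q\,G_{\omega-i\sqrt\mu}$, and the second condition in \eqref{d12} is precisely the requirement that this function lie in $D(H_{\alpha_2})$, so that $L_2$ may be applied. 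Iterating the same elementary computation for $L_2$ yields $L_2L_1u=(-\Delta+\omega)^2\xi-\mu p\,G_{\omega+i\sqrt\mu}-\mu q\,G_{\omega-i\sqrt\mu}$; adding $\mu u$ the singular terms cancel, leaving $(\mu+L_2L_1)u=(\mu+(-\Delta+\omega)^2)\xi$ as claimed.

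With this identity in hand, set $\xi=\mathcal{G}_\mu*f$; then $(\mu+L_2L_1)u=f$ automatically, and it only remains to choose $p=p(f)$, $q=q(f)$ so that the two boundary conditions of \eqref{d12} hold, i.e. so that $u\in D(\mathcal{H}_\mu^{21})$. Substituting $\xi(0)=\langle\mathcal{G}_\mu,f\rangle$ and $(-\Delta+\omega)\xi(0)=\langle\Gamma_\mu,f\rangle$, the two boundary conditions become a linear $2\times2$ system for $(p,q)$ whose right-hand sides are the combinations $\langle\mathcal{G}_\mu,f\rangle$, $\langle\Gamma_\mu,f\rangle$. A direct computation shows the determinant of this system is proportional to $W(\mu)$ (nonzero precisely when $W(\mu)\ne0$), so Cramer's rule produces exactly the expressions \eqref{piqqu}. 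Finally, injectivity of $\mathcal{H}_\mu^{21}$ follows by running the same argument with $f=0$: then $\xi$ solves $(\mu+(-\Delta+\omega)^2)\xi=0$ in $H^4$, hence $\xi=0$, and the boundary conditions force $p=q=0$ because the same determinant $W(\mu)$ does not vanish. Therefore the operator defined by \eqref{ansatz}--\eqref{piqqu} is a genuine two-sided inverse of $\mathcal{H}_\mu^{21}$, which is the assertion (here one restricts tacitly to $\mu$ with $\mu\ne0$ and $W(\mu)\ne0$, the values relevant for the resolvent formula of Theorem \ref{ris}).
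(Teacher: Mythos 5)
Your proposal is correct and follows essentially the same route as the paper: both verify that the Ansatz $\mathcal{G}_\mu*f+p(f)G_{\omega+i\sqrt\mu}+q(f)G_{\omega-i\sqrt\mu}$ lands in $D(L_2L_1)$ because $p(f),q(f)$ solve the $2\times2$ boundary-condition system (with determinant proportional to $W(\mu)$), and both reduce $(\mu+L_2L_1)u$ to $(\mu+(-\Delta+\omega)^2)\xi$ via the annihilation of $G_{\omega\pm i\sqrt\mu}$. The only differences are cosmetic: you obtain that reduction by explicitly applying $L_1$ and then $L_2$ while tracking the singular parts, where the paper invokes the identity $[\mu+(\widehat H_0^*+\omega)^2]G_{\omega\pm i\sqrt\mu}=0$ for the common extension $\widehat H_0^*$; and you replace the paper's direct check of the left-inverse identity ($p(\mathcal{H}_\mu^{21}f)=a$, $q(\mathcal{H}_\mu^{21}f)=b$) by an equivalent injectivity argument.
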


\begin{proof}
\noindent First we show that the definition of the functionals
$p$ and $q$ ensures
$$ \mathcal{G}_{\mu}*f +p (f) G_{\omega +i\sqrt{\mu}} +q (f)
G_{\omega -i\sqrt{\mu}} \ \in \ D(\mathcal{H}_\mu^{21}) = D(L_2 L_1)$$ for
all $f \in L^2(\R^3)$. Indeed,  $p(f)$ and $q(f)$ solve the
algebraic
system given by the bounday condition in the definition of the domain
\eqref{d12}, namely
$$\left\{
  \begin{array}{ll}
    \langle \mathcal{G}_{\mu}, f \rangle +ip \frac{\sqrt{-\omega -i
\sqrt{\mu}}}{4\pi}
+iq
\frac{\sqrt{-\omega +i \sqrt{\mu}}}{4\pi} = \alpha_1 (p+q)\\
    \langle \Gamma_{\mu}, f \rangle +\sqrt{\mu} p \frac{\sqrt{-\omega -i
\sqrt{\mu}}}{4\pi} -\sqrt{\mu} q \frac{\sqrt{-\omega +i
\sqrt{\mu}}}{4\pi} =
    \alpha_2 i\sqrt{\mu} (q-p).\\
  \end{array}
\right.$$

\noindent
Now, denote by $\widehat H_0$ the operator that acts as the Laplacian on
the subspace of the Schwartz functions in $\R^3$ that vanish in a
neighbourhood of the origin. It is well-known (see \cite{Albeverio}),
that both selfadjoint operators $H_{\alpha_1}$ and $H_{\alpha_2}$ defined in
Proposition \ref{E''} are restrictions of $\widehat H_0^*$ (i.e. the
adjoint of $\widehat H_0$ as an operator in $L^2 (\R^3)$), whose
action on  $ G_{\omega \pm
i\sqrt{\mu}}$ yields
\be \label{sero}
[\mu + (\widehat H_0^* +\omega)^2] G_{\omega \pm i\sqrt{\mu}}
\ = \ 0. \ee%%%%%-\mu
%G_{\omega \pm i\sqrt{\mu}} = 0.$$

\noindent %From the previous identity and r
Recalling that
$\mathcal{G}_{\mu} \in H^4(\R^3)$, it follows, for any $f \in L^2 (\R^3)$,
$$\mathcal{H}_\mu^{21}  (\mathcal{G}_{\mu}*f +p (f) G_{\omega
  +i\sqrt{\mu}} +q (f)
G_{\omega -i\sqrt{\mu}} )
\ = \ (\mu +(\widehat H_0^*
+\omega)^2) (\mathcal{G}_{\mu}*f +p (f) G_{\omega +i\sqrt{\mu}} +q (f)
G_{\omega -i\sqrt{\mu}} ) =$$
$$= (\mu +(-\triangle +\omega)^2)(\mu +(-\triangle +\omega)^2)^{-1} f = f.$$
%thus
%$$\mathcal{H}_\mu^{21}  (\mathcal{G}_{\mu}*f +p (f) G_{\omega
 % +i\sqrt{\mu}} +q (f)
%G_{\omega -i\sqrt{\mu}} ) \ = \ f. $$

\noindent To conclude the proof one has to show
$$\mathcal{G}_{\mu}*(\mathcal{H}_\mu^{21} f )  +p
(\mathcal{H}_\mu^{21} f) G_{\omega
  +i\sqrt{\mu}} +q (\mathcal{H}_\mu^{21} f)
G_{\omega -i\sqrt{\mu}}   = f$$
for any $f \in
D(\mathcal{H}_{21})$.  To this purpose let us set $f = \xi +a
G_{\omega+i\sqrt{\mu}} +b G_{\omega-i\sqrt{\mu}}$ for some
$\xi \in H^4(\R^3)$ and $a$, $b \in \C$ such that the boundary
condition in \eqref{d12} are satisfied, then, by \eqref{sero}
$$ \mathcal{H}_\mu^{21} f \ = \ [ \mu + ( - \Delta
  + \omega)^2 ] \xi$$
and, by system \eqref{piqqu}
 $$
 p (f) = a, \qquad q(f) = b.
$$
The proof is complete.
\end{proof}

\begin{rmk} \label{046}
{\em The inverse of the operator ${\mathcal H}_\mu^{12} =
    \mu + L_1 L_2$ is obtained exchanging $\alpha_1$ and $\alpha_2$ in the
expression of $({\mathcal H}_\mu^{21})^{-1}$.}
\end{rmk}

Now we can turn to the proof of Theorem \ref{ris}.

\begin{proof}
We preliminarily observe that
$$\Gamma_{\mu}(x)  = (-\triangle +\omega) \mathcal{G}_{\mu}(x) = \frac{e^{i
\sqrt{-\omega +i\sqrt{\mu}} |x|} +e^{i \sqrt{-\omega -i\sqrt{\mu}}
|x|}}{8\pi |x|} = \frac{1}{2} \left( G_{\omega -i\sqrt{\mu}}(x)
+G_{\omega +i\sqrt{\mu}}(x) \right).$$
As proven in Appendix \ref{risolvente}, the following identity holds:
$$R(\lambda) = (L -\lambda I)^{-1} = \left[
  \begin{array}{cc}
    -\lambda (\lambda^2 +L_2 L_1)^{-1} & -L_2 (\lambda^2 +L_1 L_2)^{-1}\\
    L_1 (\lambda^2 +L_2 L_1)^{-1} & -\lambda (\lambda^2 +L_1 L_2 )^{-1}\\
  \end{array}\right]  = \left[
  \begin{array}{cc}
    -\lambda ({\mathcal H}_{\lambda^2}^{21})^{-1} & - L_2  ({\mathcal
      H}_{\lambda^2}^{12})^{-1} \\ L_1  ({\mathcal
      H}_{\lambda^2}^{21})^{-1} & - \lambda  ({\mathcal
      H}_{\lambda^2}^{12})^{-1} \\
 \end{array}
\right],$$

\noindent with $\lambda$ in the resolvent set of $L$, to be specified.

\noindent In order to find the explicit expressions for $\Lambda_1$
and $\Lambda_2$ given in \eqref{lambdasigma}, one sets $\lambda =
\sqrt \mu$ and then applies Lemma \ref{045}, Remark \ref{046}, and
uses the definition of $p$ and $q$ given in
\eqref{piqqu}. Besides, the operators
$\Sigma_1$ and
$\Sigma_2$ can be obtained applying $L_1$ and $L_2$ to ${(\mathcal
      H}_{\lambda2}^{21})^{-1}$
and ${(\mathcal
      H}_{\lambda2}^{12})^{-1}$, respectively, and using some elementary algebra.

\noindent The statement about the essential spectrum of $L$ is a
consequence of Weyl's theorem (Theorem XIII.4 in
\cite{RS4}). On the other hand, the eigenvalues of $L$ are given by
the poles of the resolvent \eqref{eq:risolvente}, or equivalently by
the complex roots of the function $W(\lambda)$; these can be computed
through a lengthy but elementary calculation, here omitted.
\end{proof}

\begin{rmk}{\em
As a by-product, the previous analysis of the complex roots of $W(\lambda)$
reveals the presence of a resonance at the endpoints of
essential spectrum for the case $\sigma=\frac{1}{\sqrt 2}$. }
\end{rmk}

\subsection{Dispersive estimates for the linearized problem in the
case $\sigma \in (0, 1/\sqrt{2})$}\label{dispersive}
In this section we focus on the case $\sigma \in (0, 1/\sqrt{2})$ and
study the behaviour for large $t$ of the propagator
$e^{-Lt}$ restricted to the subspace associated to the essential
spectrum of the operator $L$.

\noindent In order to achieve an effective estimate, the following weighted
$L^p$ spaces are needed
$$L^1_{w}(\R^3) = \left\{ f: \R^3 \rightarrow \C : \; \int_{\R^3}
 w(x) |f(x)| dx < +\infty \right\},$$

\noindent and
$$L^{\infty}_{w^{-1}}(\R^3) = \left\{ f: \R^3 \rightarrow \C : \;
\mathop{\textrm{esssup}}_{x \in \R^3} (w(x))^{-1} |f(x)| < +\infty
\right\},$$

\noindent where $w(x) = 1 +\frac{1}{|x|}$.
The use of such spaces is due to the
singularity of the elements of \eqref{dha}. A
similar choice was made
in \cite{DPT} for the sake of deriving dispersive estimates in the
case of $N$  delta interactions
in $\R^3$.
\begin{theo}    \label{stima-cont}
There exists a constant $C > 0$ such that
$$\left| \frac{1}{2\pi i} \int_{\R^3} \int_{\mathcal{C}_+ \cup \mathcal{C}_-}
(R(\lambda +0) -R(\lambda -0)) (x) e^{-\lambda t} f(y) \, d\lambda  dy
\right| \leq C \left( 1 +\frac{1}{|x|} \right) t^{-\frac{3}{2}}
\int_{\R^3} \left( 1 +\frac{1}{|y|} \right) |f(y)| dy$$

\noindent for any $f \in L^1_{w}(\R^3)$, where
$$
\mathcal{C}_+  = \{ \lambda \in \C: \;
\Re(\lambda) =
0 \, \textrm{and} \, \ \Im(\lambda) \geq \omega \},\ \ \   \mathcal{C}_- =
\{ \lambda \in \C: \; \Re(\lambda) = 0 \, \textrm{and} \, \
\Im(\lambda) \leq -\omega \}\ .
$$
\end{theo}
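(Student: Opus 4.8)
The plan is to evaluate the jump of the resolvent across the essential spectrum $\mathcal C_+\cup\mathcal C_-$ explicitly, using the formula \eqref{eq:risolvente}, and then to estimate the resulting oscillatory integral by a stationary-phase/van der Corput argument. First I would parametrize the spectral curves: for $\lambda\in\mathcal C_+$ write $\lambda=i\mu$ with $\mu\ge\omega$ (and symmetrically $\lambda=-i\mu$ on $\mathcal C_-$), so that $\lambda^2=-\mu^2$ and one of the two square roots $\sqrt{-\omega\pm i\lambda}$ becomes purely imaginary while the other stays real; the prescription $\Im\sqrt{-\omega\pm i\lambda}>0$ fixes the branch, and the two boundary values $R(\lambda+0)$ and $R(\lambda-0)$ differ precisely by flipping the sign of the imaginary square root. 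Inserting this into the free part $-\lambda\,\mathcal G_{\lambda^2}*$, $\Gamma_{\lambda^2}*$ and into the rank-two correction built from $\Lambda_j,\Sigma_j$ and $W(\lambda^2)$, the jump $R(\lambda+0)-R(\lambda-0)$ becomes an explicit kernel: a sum of terms of the form (smooth amplitude in $\mu$)$\times\frac{e^{\pm i\sqrt{\mu-\omega}\,|x|}}{|x|}\cdot\frac{e^{\mp i\sqrt{\mu-\omega}\,|y|}}{|y|}$, where the amplitude involves $1/W(-\mu^2)$, which by the remark following Theorem \ref{ris} has no real zeros for $\sigma\in(0,1/\sqrt2)$ (the endpoint resonance occurs only at $\sigma=1/\sqrt2$), so the amplitude is bounded and smooth down to $\mu=\omega$.

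Next I would change variables $\mu\mapsto s$, $\mu=\omega+s^2$ (or equivalently $s=\sqrt{\mu-\omega}$), turning the integral $\int_\omega^\infty(\cdots)e^{-\lambda t}\,d\mu=\int_\omega^\infty(\cdots)e^{-i\mu t}\,d\mu$ into $\int_0^\infty a(s;x,y)\,e^{-i(\omega+s^2)t}\,ds$ after absorbing the Jacobian $2s\,ds$. The phase is now $s^2 t$ (plus the $s$-linear terms $\pm s(|x|-|y|)$ coming from the exponentials $e^{\pm i s|x|}e^{\mp i s|y|}$, which only shift the stationary point and do not affect the decay). A standard van der Corput / stationary-phase estimate on $\int_0^\infty a(s)e^{is^2 t}\,ds$ gives the bound $C t^{-1/2}$ provided $a$ and $a'$ are integrable, and in fact $C t^{-1/2}$ times control on $a$ in a suitable weighted-in-$s$ norm; combining with the overall factor $e^{-i\omega t}$ (modulus one) and the two copies of $1/|x|$ and $1/|y|$ pulled out of the Green's functions, one gets a pointwise bound of the shape $C\,\bigl(1+\tfrac1{|x|}\bigr)\bigl(1+\tfrac1{|y|}\bigr)\,t^{-1/2}$ — but this is only $t^{-1/2}$, not $t^{-3/2}$.

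To upgrade $t^{-1/2}$ to $t^{-3/2}$ I would integrate by parts twice in $s$: since the $s$-integral runs over $[0,\infty)$ with the amplitude $a(s;x,y)$ decaying in $s$ (the Green's functions $G_{\omega\mp i\lambda}$ decay because $\Im\sqrt{-\omega\mp i\lambda}>0$ away from the endpoint, giving exponential factors $e^{-c s|x|}$, and $1/W$ decays like $1/\mu$), the boundary terms at $s=0$ and the successive derivatives $\partial_s a$, $\partial_s^2 a$ are controlled, each integration by parts against $e^{is^2 t}$ producing an extra factor $t^{-1}$ modulo lower-order terms — more precisely one uses the identity $e^{is^2t}=\frac{1}{2ist}\partial_s e^{is^2t}$ away from $s=0$, splitting the integral at $s\sim t^{-1/2}$, estimating the near-zero piece trivially (length $t^{-1/2}$) and the far piece by two integrations by parts. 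This is the usual way one gets the sharp $t^{-3/2}$ free-Schrödinger-type rate from a square-root spectral parameter. The main obstacle, and the place where care is genuinely needed, is the uniformity in $x$ and $y$: the weighted spaces $L^1_w,L^\infty_{w^{-1}}$ with $w=1+1/|x|$ are tailored exactly so that the singular factors $1/|x|,1/|y|$ coming from the Green's functions $G_0$-type singularity are absorbed, but one must check that differentiating the amplitude $a(s;x,y)$ in $s$ does not produce uncontrolled growth in $|x|$ or $|y|$ (factors of $|x|$ come down from $\partial_s e^{is|x|}$), and that the near-endpoint behaviour of $1/W(-\mu^2)$ — which is merely bounded, with a square-root-type non-smoothness inherited from $\sqrt{\mu-\omega}$ — is compatible with the two integrations by parts; the variable $s=\sqrt{\mu-\omega}$ is precisely chosen to make $1/W$ and the Green's functions smooth functions of $s$ near $s=0$, which is what makes the argument go through. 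I would structure the write-up as: (i) compute the jump explicitly; (ii) reduce to finitely many scalar integrals $\int_0^\infty a_k(s;x,y)e^{is^2t}\,ds$; (iii) prove the uniform weighted bound $|a_k|+|\partial_s a_k|+|\partial_s^2 a_k|\le C\,w(x)w(y)\,\langle s\rangle^{-N}$ for each $k$; (iv) apply the stationary-phase lemma with two integrations by parts to conclude $t^{-3/2}$.
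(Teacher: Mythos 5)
Your overall architecture (compute the jump across the cuts explicitly, pass to the square-root spectral variable, reduce to one-dimensional oscillatory integrals, track the weights $w(x)w(y)$) matches the paper's, but the step that actually produces the rate $t^{-3/2}$ is where your argument breaks down, in two distinct ways. First, for a half-line integral $\int_0^\infty a(s)e^{is^2t}\,ds$ with $a$ smooth and $a(0)\neq 0$, the stationary point of the phase sits at the endpoint $s=0$ and contributes $\asymp a(0)\,t^{-1/2}$; even after absorbing the Jacobian one has $\int_0^\infty s\,b(s)e^{is^2t}\,ds=-\tfrac{b(0)}{2it}+O(t^{-3/2})$, so the boundary term at $s=0$ caps the decay at $t^{-1}$. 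No number of integrations by parts, with or without a splitting at $s\sim t^{-1/2}$, removes these endpoint contributions. The mechanism the paper actually uses is that the substitution $k=\sqrt{-\omega-i\lambda}$ assembles the jump $R(\lambda+0)-R(\lambda-0)$ into a \emph{single integral over the whole line}, $\int_{-\infty}^{+\infty}F(k)\,2k\,e^{-itk^2}\,dk$: there is no endpoint, and the integrand $2kF(k)$ vanishes at the interior stationary point $k=0$. It is this vanishing at the stationary point, not repeated integration by parts, that upgrades $t^{-1/2}$ to $t^{-3/2}$; your half-line parametrization $s=\sqrt{\mu-\omega}\ge 0$ discards exactly this structure.

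Second, your plan to bound $|\partial_s a|$ and $|\partial_s^2 a|$ cannot close because the amplitude contains the factors $e^{\pm is|x|}e^{\mp is|y|}$: each $s$-derivative brings down a factor of order $|x|+|y|$, and two of them produce $(|x|+|y|)^2$, which is not controlled by the weights $w(x)=1+1/|x|$ (these tame the singularity at the origin, not growth at infinity). You flag this concern but do not resolve it, and resolving it is the whole point. The paper avoids differentiating the phase factors altogether: for the convolution part it evaluates $\int \sin(|x-y|k)\,2k\,e^{-itk^2}\,dk$ exactly via a Fresnel-type table integral (formula 3.851 of Gradshteyn--Ryzhik), which is $O(t^{-3/2})$ uniformly in $|x-y|$; for the multiplication part it writes $\int g(k)e^{-ik(|x|+|y|)}2ke^{-itk^2}\,dk$ as the convolution $\int\check{g}(u)\,U_t(u-|x|-|y|)\,du$ with $U_t=$ the inverse Fourier transform of $2ke^{-itk^2}$, so the dependence on $|x|+|y|$ is a mere translation; the estimate then follows from $\check{g}\in L^1(\R)$ --- which requires only $g\in H^1(\R)$, i.e.\ one derivative of the amplitude, not two --- together with the $t^{-3/2}$ decay of $U_t$. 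To repair your proof you would need to (i) extend the $s$-integral to the full line using the symmetry of the jump in the square-root variable so that the amplitude vanishes at the stationary point, and (ii) keep the oscillations $e^{\pm is(|x|+|y|)}$ intact as translations of a fixed kernel rather than differentiating them.
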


\begin{proof}
One can compute the propagator $e^{-Lt}$ as the inverse Laplace
transform of the resolvent of $L$. In particular, by Theorem
\ref{ris} and applying the residue theorem, it follows that for $t
> > 0$
$$e^{-Lt} = \frac{1}{2\pi i} \int_{i\R +0} R(\lambda) e^{-\lambda t}
d\lambda =
 \frac{1}{2\pi i} \int_{|\lambda| = r} R(\lambda) e^{-\lambda t}
d\lambda +\frac{1}{2\pi i} \int_{\mathcal{C}_+ \cup \mathcal{C}_-}
(R(\lambda +0) -R(\lambda -0)) e^{-\lambda t} d\lambda,$$

\noindent with $r \in (0, \omega)$ and
$R(\lambda \pm 0) = \lim_{\epsilon \rightarrow 0^+} R(\lambda \pm
\epsilon).$

\noindent We show the computations only for the component
$R_{1,1}(\lambda)$ of the resolvent whose analytic expression is
given in \eqref{eq:risolvente} and \eqref{lambdasigma}, since the
other components can be
handled in the same way.

\noindent Recalling the definition of $\alpha_1$ and $\alpha_2$
given in equation \eqref{alpha}, $R_{1,1}(\lambda)$ can be written
as an integral kernel, namely
\begin{equation}    \label{ris_11}
R_{1,1}(\lambda;x,y) = {i} \frac{e^{i\sqrt{-\omega
+i\lambda}|x-y|} -e^{i\sqrt{-\omega -i\lambda}|x-y|}}{8\pi |x-y|} +
\end{equation}
$$+i \frac{- \sigma \sqrt{\omega} e^{i\sqrt{-\omega -i\lambda}|y|}
e^{i\sqrt{-\omega
+i\lambda}|x|} +[(\sigma +1) \sqrt{\omega} +i \sqrt{-\omega
+i\lambda}] e^{i\sqrt{-\omega -i\lambda}(|x| +|y|)}}{8\pi |x||y| [(2\sigma
+1) \omega +i (\sigma +1)
\sqrt{\omega} \left( \sqrt{-\omega -i\lambda} +\sqrt{-\omega
+i\lambda} \right) -\sqrt{-\omega -i\lambda} \sqrt{-\omega
+i\lambda}]} +$$
$$-i \frac{[(\sigma +1) \sqrt{\omega} +i \sqrt{-\omega
+i\lambda}] e^{i\sqrt{-\omega
+i\lambda}(|x| +|y|)} - \sigma \sqrt{\omega} e^{i\sqrt{-\omega +i\lambda}|y|}
e^{i\sqrt{-\omega -i\lambda}|x|}}{8\pi |x||y| [(2\sigma +1) \omega
+i (\sigma +1) \sqrt{\omega} \left( \sqrt{-\omega -i\lambda}
+\sqrt{-\omega +i\lambda} \right) -\sqrt{-\omega -i\lambda}
\sqrt{-\omega +i\lambda}]}.$$

\noindent Since from equation \eqref{ris_11} it is clear that the
computation of the integral on $\mathcal{C}_+$ and on
$\mathcal{C}_-$ are analogous, we  treat
the cut $\mathcal{C}_+$ only.
On $\mathcal{C}_+$, $\sqrt{-\omega +i\lambda}$ is
continuous while, by the prescription $\Im(\sqrt{-\omega \pm
i\lambda}) > 0$, considering $\epsilon$ as a real parameter,
one has
$$\lim_{\epsilon \rightarrow 0^+} \sqrt{-\omega -i(\lambda +\epsilon)}
= -\lim_{\epsilon \rightarrow 0^+} \sqrt{-\omega -i(\lambda
-\epsilon)} = -\sqrt{-\omega -i\lambda}.$$

\noindent Performing the change of variable $k = \sqrt{-\omega
-i\lambda}$, one can write
$$\int_{\mathcal{C}_+} (R_{1,1}(\lambda +0) -R_{1,1}(\lambda -0)) e^{-\lambda t}
d\lambda = i e^{-i\omega t} \int_{-\infty}^{+\infty} F(k) 2k
e^{-itk2} dk,$$

\noindent where $F$ is the function $R(\lambda +0) -R(\lambda
-0)$ expressed in the variable $k$.

\noindent The function $R_{1,1}$ defined in \eqref{ris_11} is the
sum of a convolution summand $R_{*,1,1}$ and a multiplication
summand $R_{m, 1,1}$, where
$$R_{*, 1,1}(\lambda;x,y) = {i} \frac{e^{i\sqrt{-\omega
+i\lambda}|x-y|} -e^{i\sqrt{-\omega -i\lambda}|x-y|}}{8\pi |x-y|}$$

\noindent and
\begin{equation} \label{rm11}
R_{m, 1,1}(\lambda;x,y) = i \frac{- \sigma \sqrt{\omega}
e^{i\sqrt{-\omega -i\lambda}|y|} e^{i\sqrt{-\omega +i\lambda}|x|}
+[(\sigma +1) \sqrt{\omega} + i \sqrt{-\omega +i\lambda}]
e^{i\sqrt{-\omega -i\lambda}(|x| +|y|)}}{8\pi |x||y| [(2\sigma +1)
\omega +i (\sigma +1) \sqrt{\omega} \left( \sqrt{-\omega -i\lambda}
+\sqrt{-\omega +i\lambda} \right) -\sqrt{-\omega -i\lambda}
\sqrt{-\omega +i\lambda}]} + \end{equation}
$$-i \frac{[(\sigma +1) \sqrt{\omega} +i \sqrt{-\omega
+i\lambda}] e^{i\sqrt{-\omega +i\lambda}(|x| +|y|)} - \sigma
\sqrt{\omega} e^{i\sqrt{-\omega +i\lambda}|y|} e^{i\sqrt{-\omega
-i\lambda}|x|}}{8\pi |x||y| [(2\sigma +1) \omega +i (\sigma +1)
\sqrt{\omega} \left( \sqrt{-\omega -i\lambda} +\sqrt{-\omega
+i\lambda} \right) -\sqrt{-\omega -i\lambda} \sqrt{-\omega
+i\lambda}]}.$$

\noindent Then we can define
$$F_*(k) = R_{*, 1,1}(\lambda +0) -R_{*, 1,1}(\lambda -0), \quad
{\rm{and}} \quad
F_m(k) = R_{m, 1,1}(\lambda +0) -R_{m, 1,1}(\lambda -0).$$

\noindent One can easily compute $F_*$ and gets
$F_*(k) = - \frac{\sin(|x-y| k)}{4\pi |x-y|}.$
Thus, %In this case the following integral can be explicitly
%computed using
by formula 3.851 in \cite{tavole},
$$\int_{-\infty}^{+\infty} F_*(k) 2k e^{-itk2} dk =%$$
%$$= \frac{1}{\pi |x-y|} \left[ \int_0^{+\infty} k\cos(t k2)
\sin(|x-y| k) dk -i
%\int_0^{+\infty} k\sin(t k2) \sin(|x-y| k) dk \right] =$$
%$$
 \frac{1 +i}{16 \sqrt{\pi}} t^{-\frac{3}{2}} e^{i\frac{|x-y|^2}{4t}},$$

\noindent for any $t > 0$. Hence
\begin{equation}    \label{0-conv}
\left| \frac{1}{2 \pi i} \int_{\R^3} \int_{-\infty}^{+\infty} F_*(k;
y) dk f(y) dy \right| \leq \frac{1}{8 \sqrt{2\pi}} t^{-\frac{3}{2}}
\int_{\R^3} |f(y)| dy.
\end{equation}

\noindent Let us estimate $\int_{-\infty}^{+\infty} F_m (k) 2k
e^{-itk^2} \, dk$. One can
notice that $F_m(k)$ is the sum of terms of the form
$\frac{i}{8\pi |x||y|} g(k) e^{\pm iks},$
where $g(k)$ is a rational function of $k$
and $\sqrt{-2 \omega - k^2}$ possibly multiplied by
$e^{i\sqrt{-2\omega -k^2} s}$, and $s$ can be $0$, $|x|$, $|y|$ or
$|x| +|y|$.
Let us consider the term
$$g (k) e^{-ik (|x| +|y|)} =$$
$$= \frac{(\sigma +1) \sqrt{\omega}
+i\sqrt{-2\omega -k^2}}{(2\sigma +1) \omega +i(\sigma +1)
\sqrt{\omega} (\sqrt{-2\omega -k^2} -k) +k \sqrt{-2\omega -k^2}}
e^{-ik (|x| +|y|)},$$
which results from the second term in \eqref{rm11} referred to
$R_{m,1,1} (\lambda + 0)$.

\noindent Notice that $g \in C^1(\R, \C)$ and
$|g (k)| \sim \frac{1}{ik} \quad \textrm{as} \; k \rightarrow
+\infty,$
hence $g \in L^2(\R)$. Moreover,
$$\frac{dg}{dk}(k)
= \frac{-ik}{\left[(2\sigma +1) \omega +i(\sigma +1)
\sqrt{\omega} (\sqrt{-2\omega -k^2} -k) +k \sqrt{-2\omega
-k^2}\right]\sqrt{-2\omega -k^2}} +$$
$$-\frac{(\sigma +1) \sqrt{\omega} +i \sqrt{-2\omega -k^2}}{[(2\sigma +1) \omega
+i(\sigma +1)
\sqrt{\omega} (\sqrt{-2\omega -k^2} -k) +k \sqrt{-2\omega -k^2}]2}
\cdot$$
$$\cdot \left( -\frac{i(\sigma +1) \sqrt{\omega}
k}{\sqrt{-2\omega -k^2}} -i(\sigma +1) \sqrt{\omega} +\sqrt{-2\omega
-k2} -\frac{k^2}{\sqrt{-2\omega -k^2}} \right),$$

\noindent which belongs to $L^2(\R)$ too, so
$g$ is an element of $H^1(\R)
$, and as consequence $\check{g}  \in L^1(\R
)$, where $\check{g}$ is the inverse
Fourier transform of $g$.
Furthermore, one can compute the inverse Fourier transform
of $2k e^{-it k^2}$ as
$$U_t(s) = \frac{1}{2\pi i} \int_{-\infty}^{+\infty} 2k e^{-it k^2}
e^{-ik s} dk = \frac{1}{(4\pi i t)^\frac{3}{2}} e^{-\frac{s^2}{4i
t}}.$$
From the last identity it follows
$$\left| \frac{1}{2\pi i} \int_{\R^3} \int_{-\infty}^{+\infty}
\frac{i}{8\pi |x||y|} g (k) e^{-ik (|x| +|y|)} 2k e^{-it k2} dk
f(y)
dy \right| =$$
\begin{equation}    \label{0-molt}
= \left| \int_{\R^3} \int_{-\infty}^{+\infty} \frac{1}{8\pi
|x||y|} \check{g}(u) U_t(u-|x| -|y|) du f(y) dy
\right|
\leq C \frac{1}{|x|} t^{-\frac{3}{2}} \int_{\R^3} \frac{|f(y)|}{|y|} dy,
\end{equation}

\noindent where the last inequality follows from H\"{o}lder
inequality and $C > 0$.
The other terms in $F_m(k)$ are handled in an analogous way so we do
not give details.

Summing up, let $f \in L^1_w(\R^3)$. Then
$$\left| \frac{1}{2\pi i} \int_{\R^3} \int_{\mathcal{C}_+ \cup \mathcal{C}_-}
(R(\lambda +0) -R(\lambda -0)) e^{-\lambda t} d\lambda f(y) dy
\right| \leq$$
$$\leq \frac{1}{2\pi} \left( \int_{\R^3} \left| \int_{\mathcal{C}_+}
(R(\lambda +0) -R(\lambda -0)) e^{-\lambda t} d\lambda f(y) \right|
dy +\right.$$
$$\left.+\int_{\R^3} \left| \int_{\mathcal{C}_-} (R(\lambda +0) -R(\lambda
-0)) e^{-\lambda t} d\lambda f(y) \right| dy \right) =
\frac{1}{2\pi} (I +II).$$

\noindent Let us estimate the integral $I$. Thanks to the estimates
\eqref{0-conv} and \eqref{0-molt} one has
$$I  = \int_{\R^3} \left| \int_{-\infty}^{+\infty} F(k) 2k e^{-itk^2}
dk f(y) \right| dy \leq$$
$$\leq \int_{\R^3} \left| \int_{-\infty}^{+\infty} F_*(k) 2k
e^{-itk^2} dk f(y) \right| dy +\int_{\R^3} f(y) \left|
\int_{-\infty}^{+\infty} F_m(k) 2k e^{-itk^2} dk  \right| dy
\leq$$
$$\leq C t^{-3/2} \left( \int_{\R^3} |f(y)| dy +\frac{1}{|x|}
\int_{\R^3} \frac{|f(y)|}{|y|} dy \right) \leq
C \left( 1 +\frac{1}{|x|} \right) t^{-3/2} \int_{\R^3} |f(y)|
\left( 1 +\frac{1}{|y|} \right) dy.$$

\noindent The integral $II$ can be estimated in the same way, which
completes the proof.
\end{proof}

\begin{rmk}	\label{split}{\em
Evaluating the propagator $e^{-Lt}$ at $t = 0$ one gets
$$1 = \frac{1}{2\pi i} \int_{|\lambda| = r} R(\lambda) d\lambda
+\frac{1}{2\pi i} \int_{\mathcal{C}_+ \cup \mathcal{C}_-} (R(\lambda
+0) -R(\lambda -0)) d\lambda = P_0 +P_c.$$

\noindent From Lemma \ref{proiettore} it will follow that the operators
$P_0$ and $P_c$ are symplectic projectors onto the subspaces
associated to generalized
kernel and to the continuous spectrum respectively.
Finally, let us note that explicitly integrating the resolvent around
its poles it
turns out that the dynamics along the generalized kernel grows
linearly in time. This fact is
proved in Appendix \ref{polo}.}
\end{rmk}

\section{Modulation equations} \label{modula}
In this section we restrict to the case $\sigma \in (0,
1/\sqrt{2})$, summarize the main technical steps
and  give some preliminary results towards the proof of asymptotic
stability of standing waves. In particular, we write the
so-called
{\em modulation equations} that rule the evolution of a perturbed
standing wave when splitted in a solitary component and a fluctuating one.
We recall once more that the scalar product we adopt is the real
scalar product on the Hilbert space $L^2(\R^3,\C)$ defined
in \eqref{scalarproduct}. 
\noindent
In order to make the reading easier, let
us give a
brief outline of the strategy to be employed. We follow the roadmap of the classical
papers \cite{SW1},\cite{SW2},\cite{BP1},\cite{BP2}, \cite{BS}, also
adopted for the model with concentrated
nonlinearity in dimension one in \cite{BKKS} and \cite{KKS}. More
specifically, we decompose
the dynamics  in the neighbourhood
of the solitary manifold in  a ``longitudinal" and a ``transversal"
component with respect to the generalized kernel $N_g (L)$, given in
Theorem \ref{kergen}, of the linearized operator
$L$. In order to perform the required analysis, we
exploit the symplectic structure introduced in
Section \ref{sw}.
Let us begin by noticing that the solitary manifold $\mathcal{M}$
defined 
in \eqref{M} is a symplectic
submanifold of $(L^2(\R^3,\C), \Omega)$, invariant under the flow of \eqref{eq1}. Its
tangent space at the standing wave $e^{i\theta}\Phi_\omega$ when $\theta=0$ is two-dimensional
and is generated by the vectors
$\frac{d}{d\theta}\{e^{i\theta}\Phi_{\omega} \}_{\theta=0}$ and
$\frac{d}{d\omega}\{e^{i\theta}\Phi_{\omega} \}_{\theta=0}$, in real
representation given by
$$
\frac{d}{d\theta}\{e^{i\theta}\Phi_{\omega} \}_{\theta=0}\ \mapsto \ e_1 =\left(
  \begin{array}{ll}
   0\\
   \Phi_{\omega}
   \end{array}
   \right)\ \ {\rm and} \ \
   \frac{d}{d\omega}\{e^{i\theta}\Phi_{\omega} \}_{\theta=0}\ \mapsto \ e_2= \left(
   \begin{array}{ll}
   \varphi_{\omega}\\
   0
   \end{array}
\right)\ ,
$$

\noindent  where $\varphi_{\omega}=\frac{d}{d\omega}\Phi_{\omega}$ was defined in
   Section \ref{linearizzato}. However, when no confusion arises we
   use the shorthand expressions $\Phi_{\omega}$ and $\varphi_\omega$
   with the meaning of the corresponding {\it real} representative
   vectors (second component vanishing). As already remarked the
   couple of vectors $\{e_1, e_2\}$ is a basis for $N_g (L)$. It
   immediately seen that
   $\Omega(e_1,e_2)=\frac 1 2 \frac{d}{d\omega}\|\Psi_\omega\|^2\neq 0$, thanks
   to the condition $\sigma \in (0,1/\sqrt{2})$ guaranteeing orbital
   stability. So the symplectic form is nondegenerate on the solitary
   manifold $\mathcal{M}$, which is a symplectic submanifold. By its
   very definition, $\mathcal{M}$ is invariant for the flow
   of \eqref{eq1}.  \par\noindent The following lemma
   establishes the relation between the spectral projection $P_0$
   introduced in Remark \ref{split} and the symplectic projection onto
   the solitary manifold.
\begin{lemma}   \label{proiettore}
Let $\Delta = \frac 1 2 \frac{d}{d\omega} \| \Phi_{\omega} \|_{L^2}^2,$ then for any $f \in L^2(\R^3)$
\be\label{Pzerosymp}
P_0 f = \frac{1}{\Delta} \Omega \left( f,
\varphi_\omega \right) J\Phi_{\omega}
-\frac{1}{\Delta} \Omega \left( f, J\Phi_{\omega} \right)
\varphi_\omega\ ,
\ee

\noindent where $\Omega(\cdot, \cdot)$ was defined in \eqref{formasimplettica}.

\end{lemma}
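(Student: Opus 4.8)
The plan is to show that the right-hand side of \eqref{Pzerosymp} defines a projection onto $N_g(L)=\mathrm{span}\{J\Phi_\omega,\varphi_\omega\}$ (recall $e_1=J\Phi_\omega$ in the real representation, since multiplication by $i$ is $-J$ but the kernel vector is $(0,\Phi_\omega)^T=J(\Phi_\omega,0)^T$) along the range of $P_c$, and then to identify this projection with $P_0$ by showing the two operators agree on a decomposition of $L^2(\R^3)$ into the generalized kernel and its symplectic complement. First I would verify the algebraic identities $\Omega(J\Phi_\omega,\varphi_\omega)=(\Phi_\omega,\varphi_\omega)_{L^2}=\tfrac12\frac{d}{d\omega}\|\Phi_\omega\|_{L^2}^2=\Delta$ and $\Omega(J\Phi_\omega,J\Phi_\omega)=\Omega(\varphi_\omega,\varphi_\omega)=0$ (the latter by antisymmetry of $\Omega$), together with $\Omega(\varphi_\omega,J\Phi_\omega)=-\Delta$. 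Plugging $f=J\Phi_\omega$ and $f=\varphi_\omega$ into the formula then gives back $J\Phi_\omega$ and $\varphi_\omega$ respectively, so the operator on the right of \eqref{Pzerosymp} is a projection with range exactly $N_g(L)$ and it acts as the identity there.

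Next I would compute the kernel of the right-hand operator: $f$ is annihilated precisely when $\Omega(f,\varphi_\omega)=0$ and $\Omega(f,J\Phi_\omega)=0$, i.e. when $f$ is symplectically orthogonal to the tangent space $T_{\Phi_\omega}\mathcal{M}=N_g(L)$. So the candidate operator is the symplectic projection onto $N_g(L)$ along $N_g(L)^{\perp_\Omega}$. It therefore suffices to show (i) $L^2(\R^3)=N_g(L)\oplus N_g(L)^{\perp_\Omega}$, which follows from nondegeneracy of $\Omega$ restricted to $N_g(L)$ — precisely the condition $\Delta\neq0$ valid for $\sigma\in(0,1/\sqrt2)$ — and (ii) that $P_0$ (the spectral Riesz projection $\frac{1}{2\pi i}\int_{|\lambda|=r}R(\lambda)\,d\lambda$) has the same range and kernel. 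Its range is the generalized eigenspace of $L$ at $\lambda=0$, which by Theorem \ref{kergen} (with $\sigma\neq1$) is exactly $N_g(L)$; and since $P_c=1-P_0$ is the spectral projection onto the continuous-spectrum subspace, $\ker P_0=\mathrm{ran}\,P_c$. The remaining point is to identify $\mathrm{ran}\,P_c$ with $N_g(L)^{\perp_\Omega}$.

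The identification $\mathrm{ran}\,P_c = N_g(L)^{\perp_\Omega}$ is where I expect the real work to be, and I would argue it from the adjoint structure of $L$ with respect to the symplectic form: since $L=JD$ with $D$ selfadjoint (Proposition \ref{E''}), one has $\Omega(Lu,v)=-\Omega(u,Lv)$, i.e. $L$ is skew-symmetric for $\Omega$, hence $L^k$ is $\Omega$-symmetric or $\Omega$-skew according to parity, and in any case the $\Omega$-orthogonal complement of any $L$-invariant subspace is $L$-invariant. The generalized kernel $N_g(L)$ and the continuous subspace $\mathrm{ran}\,P_c$ are both $L$-invariant and together span $L^2$; using skew-symmetry of $L$ one checks that $\Omega$ pairs $N_g(L)$ trivially with $\mathrm{ran}\,P_c$ — concretely, for $f=P_c f$ and $g\in N_g(L)$ with $L^mg=0$, repeated use of $\Omega(Lu,v)=-\Omega(u,Lv)$ moves all powers of $L$ onto $g$, and combined with the fact that $P_c$ commutes with $L$ and with a limiting/contour argument one gets $\Omega(P_c f,g)=0$. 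A dimension count ($N_g(L)$ is two-dimensional, its $\Omega$-complement is codimension two because $\Omega|_{N_g(L)}$ is nondegenerate) then forces $\mathrm{ran}\,P_c=N_g(L)^{\perp_\Omega}$, so both $P_0$ and the operator in \eqref{Pzerosymp} are the projection onto $N_g(L)$ along $N_g(L)^{\perp_\Omega}$ and hence coincide. Finally I would just double-check the signs and the placement of $J$ in the two terms of \eqref{Pzerosymp} against these pairings; this is routine but easy to get wrong, so it is worth doing carefully.
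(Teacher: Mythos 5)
Your argument is correct in outline but takes a genuinely different route from the paper's. The paper proves the lemma by direct computation: Appendix \ref{polo} (Theorem \ref{stima-disc}) evaluates the residue of $R(\lambda;x,y)e^{-\lambda t}$ at $\lambda=0$ as an explicit $2\times2$ integral kernel, and the stated proof amounts to setting $t=0$ there and matching that kernel against the rank-two expression in \eqref{Pzerosymp} by ``straightforward calculations''. You instead characterize both operators abstractly as the symplectic projection onto $N_g(L)$ along $N_g(L)^{\perp_\Omega}$: you check that the right-hand side of \eqref{Pzerosymp} is a projection with range $N_g(L)$ and kernel $N_g(L)^{\perp_\Omega}$, and you identify $\ker P_0=\mathrm{ran}\,P_c$ with $N_g(L)^{\perp_\Omega}$ using the $\Omega$-skewness of $L=JD$ together with a codimension count. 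Your approach is more conceptual and portable (it is the general fact that the Riesz projection of a Hamiltonian operator at an isolated eigenvalue of finite algebraic multiplicity is the symplectically orthogonal one), and it explains \emph{why} the formula has this form; the paper's route is computation-heavy but needs nothing beyond the residue calculation it has already performed, and it fixes all constants and signs automatically. Note that your argument does rely on $\mathrm{ran}\,P_0=N_g(L)$ being exactly two-dimensional, which is supplied by Theorem \ref{kergen} together with Theorem \ref{ris}(a) for $\sigma\in(0,1/\sqrt2)$.

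Two soft spots are worth tightening. First, the step $\Omega(P_cf,g)=0$ for $g\in N_g(L)$ cannot be obtained by ``moving powers of $L$ onto $g$'', since a generic $f\in\mathrm{ran}\,P_c$ is not in the domain of $L$; the clean way is to show that $P_0$ is $\Omega$-symmetric, $\Omega(P_0f,g)=\Omega(f,P_0g)$, which follows from the contour representation of $P_0$ and the identity stating that the $\Omega$-adjoint of $R(\lambda)$ is $-R(-\lambda)$, and then to write $\Omega(P_cf,g)=\Omega(f,g)-\Omega(f,P_0g)=0$. Second, the sign issue you defer to the end is not cosmetic: with the paper's convention $J=\left[\begin{smallmatrix}0&1\\-1&0\end{smallmatrix}\right]$ one has $J(\Phi_\omega,0)^T=(0,-\Phi_\omega)^T=-e_1$ (multiplication by $i$ corresponds to $-J$), and with that literal reading the right-hand side of \eqref{Pzerosymp} acts as $-\mathrm{Id}$ on $N_g(L)$, i.e.\ it is minus a projection. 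The formula works only if $J\Phi_\omega$ is read as $i\Phi_\omega\cong(0,\Phi_\omega)^T=e_1$, so the sign verification you postpone is actually the step that decides whether the identity as written is true.
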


\begin{proof}
The explicit expression of
the spectral projection $P_0 = \frac{1}{2\pi i} \int_{|\lambda| = r} R(\lambda)
d\lambda$ can be recovered by Appendix \ref{polo}, and the equivalence
with the r.h.s. follows by straightfoward  calculations.
\end{proof}  
Notice that the given representation of $P_0$ is well defined thanks to the fact that $\Delta>0$, again as a consequence of the choice $\sigma \in (0,
1/\sqrt{2})$. Moreover, $P_0$ is a symplectically orthogonal
projection, in the sense that given a couple $\{\zeta,f\}$ with
$\zeta\in {\rm Im}\ P_0$ and $f\in {\rm Ker}\ P_0$, one has
$\Omega(\zeta, f)=0\ .$ 
In particular it is useful to note that due to the definition of
symplectic form $\Omega$, a state $f$ with vanishing component along
the continuous spectrum of $L$ is orthogonal to the vectors $Je_1$ and
$Je_2$, or in complex notation, to $\Phi_\omega$ and
$i\frac{d}{d\omega}\Phi_\omega\ =i\varphi_{\omega}.$  \par\noindent 
After these preliminaries, as anticipated in formula \eqref{asrepr},
%We recall the fact that the space $L^2(\R^3, \C)$ has the structure
%of a symplectic manifold, when considering it as a real Hilbert space
%of couples of real and imaginary parts of the wavefunction $u\equiv
%(\Re u, \Im u)$, and endowed with the symplectic form which is the
%imaginary part of the original scalar product, explicitly given in
%formula \eqref{formasimplettica}.
%\noindent As recalled in the introduction,
we write the solution to \eqref{eq1} as
\begin{equation}    \label{mod-ansatz}
u(t,x) = e^{i\Theta(t)} \left( \Phi_{\omega(t)}(x) +\chi(t,x)
\right),\ \  \Theta(t) = \int_0^t \omega(s) ds +\gamma(t) ,
\end{equation}
with the final goal of proving that the solution decomposes in the sum of a solitary component and a dispersive one.
\par\noindent
%Note that for constant $t$ the real numbers $\Theta$ and $\omega$ identify an element of the
%manifold $\mathcal{M}$ defined
%in \eqref{M}, and the same can be said as concerns $\omega$ and $\gamma$,
%which are thus local coordinates on $\mathcal{M}$.
  
The local splitting of the invariant symplectic
   manifold $(L^2(\R^3,\C), \Omega)$ in two symplectically orthogonal
   manifolds, the finite dimensional solitary manifold $\mathcal{M}$
   and the infinite dimensional range of the spectral projection on
   the continuous spectrum, suggests to symplectically project the flow
   according to this decomposition (see also Remark \ref{split}),
   in order to obtain the so called modulation equations. The
   projection along $\mathcal{M}$ (``longitudinal") gives rise to two
   ordinary 
   differential equations for the frequency $\omega$ and the phase
   $\gamma$ of the solitary wave, depending parametrically on the fluctuating component $\chi$; while the projection on the
   continuous spectrum (``transversal") gives a partial differential
    equation for the remainder $\chi$ (with coefficients depending on $\gamma$ and $\omega$).
%The coupled system of equations for $\gamma$, $\omega$, $\chi$ is the one constructed in this section.
The solution to the equation for the $\chi$ component will be
shown to decay in time in suitable norms. As a consequence, one has the
asymptotic behaviour of the solutions for the parameters $\omega$ and
$\gamma$ of the 
solitary wave, to be shown in Section 6, and finally asymptotic
stability, which will be the subject of Section 7. 
%For technical reasons, we will need a representation of the energy domain which is different (but equivalent, see e. g. \cite{Teta}) from the one provided in \ref{def-V}. Fix $\lambda>0$,
%\begin{equation}\label{def-V2}
%V = \left\{ u = \phi_{\lambda}+ q G_{\lambda}, \, \textrm{with} \, \phi_{\lambda} \in H^1(\R^3), \, q \in \C\ , \ G_{\lambda}(x)=\frac{e^{-\lambda|x|}}{4\pi|x|} \right\}.
%\end{equation}
%Notice that $G_{\lambda}\in L^2(\R^3)$ and $\phi_{\lambda} \in H^1(\R^3)$, while in the representation \eqref{def-V} of the regular part was just in the {\it homogeneous} Sobolev space $D^1(\R^3)$.
\par\noindent
To deduce the modulation equations
it proves convenient to make use of the  variational formulation of equation \eqref{eq1}
\begin{equation}    \label{eq-variazionale}
\left( i\frac{du}{dt}(t), v \right)_{L^2} = E'[u(t)](v) \quad
\forall v \in V.
\end{equation}
%where $Q_{\alpha(u)}$ is the non linear form with the same action as the form $Q_{\alpha}$ corresponding to the operator $H_{\alpha}$.
%\noindent Let us note that the last equation makes sense because $V$ is independent of the positive parameter $\lambda$ and is a Hilbert space with the norm
%$$\|u\|_V^2 = \| \nabla \phi_{\lambda} \|_{L^2}^2 +|q|^2, \quad \forall u \in V.$$

\noindent 
To begin with, we replace in the previous equation the Ansatz \eqref{mod-ansatz}.

\noindent \noindent %Hence, we are constructing a solution of equation \eqref{eq1} close at each time to a solitary wave. Let us notice that the solitary wave does not need to be the same at every time, which means that the parameters $\omega(t)$ and $\Theta(t)$ are free to vary in time.

\noindent By equation \eqref{gs_eq} and Proposition \ref{E''}, equation \eqref{eq-variazionale} can be rephrased as
\begin{equation}    \label{eqchi}
\left( i\frac{d\chi}{dt}(t), v \right)_{L^2} = Q_{\alpha,
Lin}(\chi(t),v) +\dot{\gamma}(t) (\Phi_{\omega(t)} +\chi(t),
v)_{L^2} +\dot{\omega}(t) \left(-i\frac{d\Phi_{\omega(t)}}{d\omega}
,v\right)_{L^2} +N(q_{\chi}(t),q_v(t))
\end{equation}

\noindent for any $v \in V$. \par\noindent Here $Q_{\alpha,Lin}$ is the quadratic
form of  the operator $D$ defined in \eqref{eq_linearizzata0} and acting as
$$Q_{\alpha,Lin}(\chi,v) = (\nabla \phi_{\chi}, \nabla \phi_v)_{L^2}
-\frac{\sqrt{\omega}}{4\pi} \Re (q_{\chi} \overline{q_v})
-\sigma \frac{\sqrt{\omega}}{2\pi} \Re q_{\chi} \Re q_v +\omega (\chi,v)_{L^2}\ , $$

\noindent and the nonlinear remainder $N(q_{\chi},q_v)$ is given by
$$N(q_{\chi}, q_v) = -\nu |q_{\chi}
+q_{\omega}|^{2\sigma} \Re ((q_{\chi} +q_{\omega}) \overline{q_v})
+\nu (2\sigma +1)|q_{\omega}|^{2\sigma} \Re q_{\chi} \Re q_v +\nu
|q_{\omega}|^{2\sigma} \Im q_{\chi} \Im q_v +\nu
|q_{\omega}|^{2\sigma} \Re(q_{\omega} \overline{q_v}).$$ 
In the previous equation, according to Section II B, $q_\omega =
\left( \frac {\sqrt \omega} {4 \pi \nu} \right)^{\frac 1 {2 \sigma}}.$

%is the nonlinear part in the variational formulation of equation \eqref{eq1}.
\par\noindent
\begin{rmk}{\em
The remainder $N(q_{\chi}, q_v)$ depends nonlinearly on $\chi$ (and
$\omega$) and it is real linear in $v$; so, by Riesz representation
theorem and with a slight abuse of notation, there exist a vector
$N(q_\chi)$ such that $N(q_{\chi},q_v)=\Re
N(q_\chi)\overline{q_v}$. It is a peculiarity of this model that in
fact it depends just on the charges of $\chi$ and $v$. Moreover, by
its very definition, the remainder is the difference between the
action of the complete vector field and its linear part at the
solitary wave, and so it is quadratic in $q_{\chi} $ near $\chi=0$.}  
\end{rmk}
\par\noindent
Corresponding expressions can be given with obvious modification in purely real form, which we omit for the sake of brevity.
\noindent Since $\omega$, $\gamma$ and $\chi$ are all unknown the Ansatz \eqref{mod-ansatz} makes the problem underdetermined, and a supplementary condition is needed to give a unique representation of the solution; a way to close the 
system for $\omega$, $\gamma$ and $\chi$ is to require that the $\chi$
component is decoupled from the discrete spectrum, i.e. $P_0\chi=0$,
or equivalently to project equation \eqref{eqchi} onto the
symplectically orthogonal complement of the generalized kernel of
$L$. The corresponding modulation equations take different forms
according to the way one writes the projection and we give two of them
for future reference. In the following we denote by $Q_L$ the bilinear
form associated to the linear nonselfadjoint operator $L$.

\noindent %From now on let us assume $\sigma \in (0, 1/\sqrt{2})$.

%\noindent In this way it is proved the following theorem.
\begin{theo}{\emph{\textbf{(Modulation equations I)}}}  \label{lemma-modI}
Let $\chi$ be a solution to equation \eqref{eqchi} such that
$P_0 \chi (t) = 0$ for all $t \geq 0$, and let the functions $\omega$ and $\gamma$
belong to ${\rm C}^1(\R)$; then $\omega$ and $\gamma$
solve the equations 
\begin{equation}    \label{mod-eq1}
\dot{\omega} = \frac{\Re\ (JN(q_{\chi}) \overline{q_{P_0^*(\Phi_\omega+\chi)}})}{\left(
\varphi_{\omega} - \frac{d P_0}{d\omega}\chi , \Phi_\omega + \chi \right)_{L^2}},
\end{equation}

\noindent and
\begin{equation}    \label{mod-eq2}
\dot{\gamma} = \frac{\Re\ (JN(q_{\chi}) \overline{q_{J({\varphi_\omega}-\frac{d P_0}{d\omega}\chi)}}} {\left(
\varphi_{\omega} - \frac{d P_0}{d\omega}\chi , \Phi_\omega + \chi \right)_{L^2}}
\end{equation}

\end{theo}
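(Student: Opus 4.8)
The plan is to extract the two scalar ODEs for $\omega$ and $\gamma$ by testing the variational equation \eqref{eqchi} against well-chosen vectors, namely the vectors spanning $N_g(L)$ (equivalently, their symplectic counterparts $J\Phi_\omega$ and $J\varphi_\omega$), and then using the constraint $P_0\chi(t)=0$ to eliminate all the ``transversal'' contributions. First I would differentiate the constraint $P_0\chi(t)=0$ in time. Since $P_0=P_0(\omega(t))$ depends on $t$ only through $\omega(t)$, this gives $P_0\dot\chi = -\dot\omega\,\frac{dP_0}{d\omega}\chi$, which is the identity that will let us trade a derivative of $\chi$ for derivatives of the modulation parameters. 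Equivalently, applying $\frac{d}{dt}$ to $P_0\chi=0$ and pairing symplectically, one learns that $\dot\chi$ has a prescribed, explicitly $\dot\omega$-dependent component along $\mathrm{Im}\,P_0$, while being symplectically orthogonal to $\mathrm{Ker}\,P_0$ up to that correction.

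The core computation is then to insert the Ansatz \eqref{mod-ansatz} into \eqref{eq-variazionale} — which is already done, yielding \eqref{eqchi} — and to choose the test vector $v$ so as to kill the linearized quadratic form term $Q_{\alpha,Lin}(\chi,v)$ and isolate $\dot\omega$ or $\dot\gamma$. The natural choice, dictated by the symplectic-projection method (see \cite{FGJS1}), is to take $v$ in the span of $\Phi_\omega$ and $\varphi_\omega=\frac{d}{d\omega}\Phi_\omega$, i.e. along $N_g(L)$. Because $\{e_1,e_2\}$ spans $N_g(L)$ and $L$ annihilates it, the generalized-kernel structure forces the pairing of $Q_{\alpha,Lin}(\chi,\cdot)$ — equivalently $Q_L(\chi,\cdot)$ — against these vectors to vanish once $P_0\chi=0$ is used (the quadratic form $Q_L$ of $L=JD$ restricted between $\mathrm{Ker}\,P_0$ and $N_g(L)$ is zero by the symplectic orthogonality stated just after Lemma \ref{proiettore}). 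What survives on the right-hand side is the $\dot\omega$-term $(-i\,\frac{d\Phi_\omega}{d\omega},v)_{L^2}$, the $\dot\gamma$-term $\dot\gamma(\Phi_\omega+\chi,v)_{L^2}$, and the nonlinear remainder $N(q_\chi,q_v)$; the left-hand side $(i\dot\chi,v)_{L^2}$ is where the time-differentiated constraint enters, producing the $\frac{dP_0}{d\omega}\chi$ corrections that appear in both the numerators and denominators of \eqref{mod-eq1}--\eqref{mod-eq2}. Solving the resulting $2\times2$ linear algebraic system for $(\dot\omega,\dot\gamma)$ — whose determinant is, up to harmless factors, $(\varphi_\omega-\frac{dP_0}{d\omega}\chi,\Phi_\omega+\chi)_{L^2}$, a small perturbation of $\Omega(e_1,e_2)=\Delta\neq0$ — gives exactly the stated formulas, after rewriting the $L^2$ pairings of $N(q_\chi)$ with the test vectors in the charge notation $\Re(JN(q_\chi)\overline{q_{(\cdot)}})$ via the Riesz identity from the Remark.

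The main obstacle is bookkeeping rather than conceptual: one must be careful that the right test vectors are $P_0^*(\Phi_\omega+\chi)$ and $J(\varphi_\omega-\frac{dP_0}{d\omega}\chi)$ (as the numerators reveal), not simply $\Phi_\omega$ and $\varphi_\omega$, so the argument is really a projection of \eqref{eqchi} onto $(\mathrm{Ker}\,P_0)^{\perp_\Omega}$ realized through these shifted vectors, and every place where $P_0$ is $\omega$-dependent contributes a $\frac{dP_0}{d\omega}$ term that must be tracked consistently. A secondary technical point is the regularity needed to differentiate $P_0\chi=0$ and to legitimately use $\chi\in C^1$ in time: this is where the hypothesis $\omega,\gamma\in C^1(\R)$ is used, together with the well-posedness in $V$ from \cite{ADFT,ADFT2}, and where — as flagged in the introduction — the time-dependence of the operator domains makes the variational (quadratic-form) formulation \eqref{eq-variazionale} essential, since the strong formulation would require $\chi(t)\in D(H_{\alpha_j})$ with a $t$-dependent domain. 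Once these points are handled, the derivation is a direct, if lengthy, linear-algebra manipulation, which we only sketch here and carry out in detail below.
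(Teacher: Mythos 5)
Your plan reproduces the paper's own argument: differentiate the constraint $P_0\chi=0$ to get $P_0\frac{d}{dt}(\Phi_\omega+\chi)=\dot\omega(\varphi_\omega-\frac{dP_0}{d\omega}\chi)$, then test the weak equation with $P_0^*(\Phi_\omega+\chi)$ and (up to the harmless factor $\dot\omega$) with $J(\varphi_\omega-\frac{dP_0}{d\omega}\chi)$, using $Q_L(\chi,P_0^*\,\cdot)=Q_L(P_0\chi,\cdot)=0$ and the antisymmetry identities $P_0J=JP_0^*$ to isolate each parameter. The only cosmetic difference is that you describe solving a coupled $2\times 2$ system, whereas the paper's choice of test vectors makes each equation drop out directly; the substance is the same.
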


\begin{proof}
We adapt the reasoning in \cite{BS}. Equation \eqref{eqchi} is equivalent to
\be\label{eqchi2}
\left(\frac{d}{dt}(\Phi_\omega +\chi), v\right)_{L^2} = Q_L(\chi, v)) +\dot\gamma\left(J(\Phi_\omega +\chi),v\right)_{L^2} + \Re(JN(q_\chi)\overline{q_v})\ \ \ \forall v\in V\ .
\ee
Set $v=P_0^*(\Phi_\omega +\chi)$; notice that differentiating in time
$P_0\chi=0$, one has $$P_0\frac{d}{dt}(\Phi_\omega
+\chi)=\dot\omega\left(\varphi-\frac{dP_0}{d\omega}\chi \right)\ ,$$
where expressions such as $\frac{dP_0}{d\omega}\chi$ are computed
from the representation given in \eqref{Pzerosymp}. \par\noindent 
Moreover, one immediately has the identities
$$
Q_L(\chi,P_0^*(\Phi_\omega + \chi))=Q_L(P_0\chi,(\Phi_\omega + \chi))=0
$$
and, using $P_0J=JP_0^*$,
$$
\left(J(\Phi_\omega + \chi),P_0^*(\Phi_\omega + \chi)\right)_{L^2}=\left(J(\Phi_\omega + \chi),(P_0^*)^2(\Phi_\omega + \chi)\right)_{L^2}=\left(JP_0^*(\Phi_\omega + \chi),P_0^*(\Phi_\omega + \chi)\right)_{L^2}=0\ .
$$
So one remains with
$$
\left(P_0\frac{d}{dt}(\Phi_\omega +\chi), \Phi_\omega + \chi\right)_{L^2}=\Re(JN(q_\chi)\overline{q_{P_0^*(\Phi_\omega + \chi)}})
$$
from which the equation for $\dot\omega$ follows.\par\noindent
Now let us consider the test function $JP_0\frac{d}{dt}(\Phi_\omega +\chi)$, and notice the following facts, in which use is made of $JP_0=P_0^*J$.\par\noindent
$$
\left(\frac{d}{dt}(\Phi_\omega +\chi), JP_0(\Phi_\omega + \chi)\right)_{L^2}=\left(\frac{d}{dt}(\Phi_\omega +\chi), JP_0^2(\Phi_\omega + \chi)\right)_{L^2}=\left(P_0\frac{d}{dt}(\Phi_\omega +\chi), JP_0(\Phi_\omega + \chi)\right)_{L^2}=0\ ;
$$
$$
Q_L(\chi,JP_0\frac{d}{dt}(\Phi_\omega +\chi))=0\ .
$$
It follows from the weak equation \eqref{eqchi2} 
$$
\dot\gamma\left(\Phi_\omega +\chi,P_0\frac{d}{dt}(\Phi_\omega +\chi)\right)_{L^2}=\Re(JN(q_\chi)\overline{q_{JP_0\frac{d}{dt}(\Phi_\omega + \chi)}}
$$
and hence, after substituting the expression of
$P_0\frac{d}{dt}(\Phi_\omega +\chi)$ determined above and cancelation
of $\dot\omega$ the equation for $\dot\gamma$ follows. This ends the
proof. 
\end{proof}

\par\noindent
Two properties of the modulation equations which will be useful in the
subsequent analysis are the following.\par\noindent 
\begin{cor} \label{stime-modulazione}
Under the hypotheses of Theorem \ref{lemma-modI}, and if it is known that $\| \chi
\|_{L^1_w}$ is sufficiently small, the right hand sides of
\eqref{mod-eq1} and \eqref{mod-eq2} are smooth and there exists a
continuous function $\mathcal{R} = \mathcal{R} (\omega, \| \chi
\|_{L^1_w})$ such that, for any $t \geq 0$,
$$|\dot{\omega}(t)| \leq \mathcal{R} |q_{\chi}(t)|^2 \quad \mbox{and}
\quad |\dot{\gamma}(t)| \leq \mathcal{R} |q_{\chi}(t)|^2.$$
\end{cor}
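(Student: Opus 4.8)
The plan is to read off both bounds directly from the structure of the right–hand sides of \eqref{mod-eq1} and \eqref{mod-eq2}: each numerator carries the factor $N(q_\chi)$, which by the Remark following \eqref{eqchi} vanishes to second order in $q_\chi$ at $\chi=0$, while each denominator is a small perturbation of the strictly positive number $\Delta=\frac12\frac{d}{d\omega}\|\Phi_\omega\|_{L^2}^2$ and hence stays bounded away from zero once $\chi$ is small. Since the right–hand sides depend only on $(\omega,\chi)$, smoothness will follow from smoothness of the ingredients.

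First I would make the quadratic bound on $N$ quantitative. By construction $N(q_\chi,\cdot)$ is the difference between the full nonlinear term and its linearization at the solitary wave; inspection of its explicit expression shows that the only genuinely nonlinear contribution is $|q_\chi+q_\omega|^{2\sigma}(q_\chi+q_\omega)$, and the constant and linear parts cancel against the remaining terms, so $N(q_\chi)$ is exactly the second–order Taylor remainder of $z\mapsto -\nu|z|^{2\sigma}z$ based at $z=q_\omega$ and evaluated at $z=q_\chi+q_\omega$. As $z\mapsto|z|^{2\sigma}z$ is of class $C^\infty$ in a neighbourhood of $z=q_\omega\neq0$ (and $2\sigma<\sqrt2<2$, so nothing degenerates), Taylor's theorem yields $|N(q_\chi)|\le C_1(\omega)\,|q_\chi|^2$ whenever $|q_\chi|$ lies below a threshold depending continuously on $\omega$ — which is the case once $\chi$ is small — with $C_1$ continuous in $\omega$. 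Next I would bound the test vectors in the numerators. By the explicit representation \eqref{Pzerosymp}, the operators $P_0$, $P_0^*$ and $\frac{dP_0}{d\omega}$ are of finite rank with ranges spanned by $\Phi_\omega$, $\varphi_\omega$ and their $J$–images, all of which lie in every space used here and depend smoothly on $\omega$; since moreover $\varphi_\omega,\Phi_\omega\in L^\infty_{w^{-1}}$, one has $|\Omega(\chi,\varphi_\omega)|,\,|\Omega(\chi,J\Phi_\omega)|\le C(\omega)\|\chi\|_{L^1_w}$, so the charges $q_{P_0^*(\Phi_\omega+\chi)}$ and $q_{J(\varphi_\omega-\frac{dP_0}{d\omega}\chi)}$ are each bounded by $C_2(\omega)\bigl(1+\|\chi\|_{L^1_w}\bigr)$ with $C_2$ continuous in $\omega$.

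Then I would estimate the common denominator $D(\omega,\chi):=\bigl(\varphi_\omega-\frac{dP_0}{d\omega}\chi,\,\Phi_\omega+\chi\bigr)_{L^2}$. At $\chi=0$ it equals $(\varphi_\omega,\Phi_\omega)_{L^2}=\Delta(\omega)>0$, the positivity being precisely the orbital–stability condition $\sigma<1$ (in particular valid for $\sigma\in(0,1/\sqrt2)$); expanding the remaining terms and invoking the bounds on $\frac{dP_0}{d\omega}$ from the previous step gives $|D(\omega,\chi)-\Delta(\omega)|\le C_3(\omega)\|\chi\|_{L^1_w}$, hence $D(\omega,\chi)\ge\frac12\Delta(\omega)$ once $\|\chi\|_{L^1_w}$ is small (the smallness threshold depending continuously on $\omega$). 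Combining the three steps, the right–hand side of \eqref{mod-eq1} is bounded by $\dfrac{2C_1(\omega)C_2(\omega)\bigl(1+\|\chi\|_{L^1_w}\bigr)}{\Delta(\omega)}\,|q_\chi|^2$, and analogously for \eqref{mod-eq2}; taking $\mathcal R(\omega,s)$ to be a common majorant of the two resulting continuous functions of $(\omega,s)$ finishes the estimates, while smoothness of the right–hand sides follows from smoothness in $(\omega,\chi)$ of $N(q_\chi)$, of $P_0$ via \eqref{Pzerosymp}, and of $D^{-1}$.

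The point I expect to be the most delicate is not conceptual but a matter of uniformity: one must verify that the quadratic estimate on $N$ and the lower bound on $D$ hold with constants genuinely continuous in $\omega$ and uniform over the admitted range of $\|\chi\|_{L^1_w}$, so that $\mathcal R$ may indeed be taken to depend only on $(\omega,\|\chi\|_{L^1_w})$. This ultimately rests on the two structural facts that $z\mapsto|z|^{2\sigma}z$ is smooth away from the origin and that $\Delta(\omega)>0$ for every $\omega>0$ when $\sigma<1$.
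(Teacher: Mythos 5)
Your proposal is correct and follows essentially the same route as the paper, which justifies the corollary by exactly the two facts you isolate: the denominator is a small perturbation of $\Delta=\tfrac12\tfrac{d}{d\omega}\|\Phi_\omega\|_{L^2}^2>0$ (so it stays bounded away from zero for $\|\chi\|_{L^1_w}$ small), and the remainder $N$, being the difference between the full vector field and its linearization at the solitary wave, satisfies $|N(q_\chi)|\leq C|q_\chi|^2$. Your elaboration (Taylor's theorem for $z\mapsto|z|^{2\sigma}z$ at $q_\omega\neq0$, the finite-rank structure of $P_0$ via \eqref{Pzerosymp} to control the numerator charges, and the duality bound $|\Omega(\chi,\cdot)|\leq\|\chi\|_{L^1_w}\|\cdot\|_{L^\infty_{w^{-1}}}$) merely fills in details the paper leaves implicit.
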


\noindent The proof of the previous result is a consequence of two
facts. In the first place  $(\varphi_\omega,\Psi_\omega)_{L^2}=\frac
1 2 \frac{d}{dt}\|\Phi_\omega\|^2>0 $ by condition $\sigma \in (0,
1/\sqrt{2})$ which gives orbital stability; secondarily, the nonlinear part in \eqref{eqchi} actually depends only on
the charges $q_{\chi}$ and $q_v$; provided that $|q_{\chi}| \leq
c$,  there exists a positive constant $C > 0$ such that the denominators in \eqref{mod-eq1} and \eqref{mod-eq2} are strictly away from zero and
$$|N(q_{\chi})| \leq C |q_{\chi}|^2, \quad \forall \chi \in V.$$

\par\noindent  The second property concerns the compatibility of the
orthogonality condition  of the fluctuating part $\chi$ with arbitrary choices 
of initial data. The following lemma assures in fact that the
orthogonality condition $P_0 \chi = 0$ can be satisfied at the initial
time in the neighbourhood of the solitary manifold without loss of
generality. 
\begin{lemma}   \label{dati-iniz}
Let $u(t) \in C(\R^+, V)$ be a solution to equation \eqref{eq1} with
$u(0) = u_0 \in V \cap L^1_w$ and assume
$$d = \| u_0 -e^{i\theta_0} \Phi_{\omega_0} \|_{V \cap L^1_w} \ll 1,$$

\noindent for some $\omega_0 > 0$ and $\theta_0 \in \R$.

\noindent Then, there exists a stationary wave
$e^{i\widetilde{\theta}_0} \Phi_{\widetilde{\omega}_0}$, and
$\chi_0(x)$ with $P_0(\widetilde{\omega}_0) \chi_0 = 0$ such that
$u_0(x) =  e^{i\widetilde{\theta}_0} \left( \Phi_{\widetilde{\omega}_0}(x) +\chi_0(x) \right),$
and $\| \chi_0 \|_{V \cap L^1_w} = O(d) \quad \textrm{as} \; d \rightarrow 0.$

\end{lemma}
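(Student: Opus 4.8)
The plan is to construct the decomposition by applying the implicit function theorem to the map that measures the symplectic projection of $u_0$ onto the solitary manifold. Concretely, I would introduce the function
$$
F(\theta, \omega) \; = \; P_0(\omega)\bigl( e^{-i\theta} u_0 - \Phi_\omega \bigr),
$$
viewed as a map from a neighborhood of $(\theta_0, \omega_0)$ in $\R^2$ into $\mathrm{Im}\, P_0(\omega) \cong \R^2$ (using the real representation of $L^2(\R^3,\C)$ and the basis $\{e_1, e_2\}$ of $N_g(L)$). The goal $P_0(\widetilde\omega_0)\chi_0 = 0$ with $\chi_0 = e^{-i\widetilde\theta_0} u_0 - \Phi_{\widetilde\omega_0}$ is exactly $F(\widetilde\theta_0,\widetilde\omega_0) = 0$. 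At the reference point one has $F(\theta_0,\omega_0) = P_0(\omega_0)(\Phi_{\omega_0} - \Phi_{\omega_0}) = 0$ only when $u_0 = e^{i\theta_0}\Phi_{\omega_0}$ exactly; in general $|F(\theta_0,\omega_0)| = O(d)$ since $\|e^{-i\theta_0}u_0 - \Phi_{\omega_0}\|_V = O(d)$ and $P_0$ is a bounded operator on $L^2$ (hence controlled by the $V$-norm via the embedding implicit in the structure of $V$).

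The key computation is the Jacobian of $F$ at $(\theta_0,\omega_0)$ evaluated on the ``unperturbed'' configuration $u_0 = e^{i\theta_0}\Phi_{\omega_0}$. Differentiating in $\theta$ brings down $-i\Phi_{\omega_0} = -Je_1$-type terms, and differentiating in $\omega$ brings down $-\varphi_\omega = -e_2$ together with the $\omega$-derivative of the projector; using Lemma \ref{proiettore} and the explicit formula \eqref{Pzerosymp} for $P_0$, the leading part of the differential is, up to the nonzero factor $1/\Delta$, the map whose matrix in the basis $\{e_1,e_2\}$ is built out of the symplectic pairings $\Omega(\Phi_\omega,\varphi_\omega)$, $\Omega(\Phi_\omega,J\Phi_\omega)$, etc. Since $\Omega(e_1,e_2) = \Delta = \tfrac12\frac{d}{d\omega}\|\Phi_\omega\|_{L^2}^2 \neq 0$ in the range $\sigma\in(0,1/\sqrt2)$, this differential is invertible. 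The implicit function theorem then yields a $C^1$ (indeed smooth, since everything depends smoothly on the parameters) solution $(\widetilde\theta_0,\widetilde\omega_0)$ near $(\theta_0,\omega_0)$ with $F(\widetilde\theta_0,\widetilde\omega_0)=0$, and the quantitative version gives $|\widetilde\theta_0 - \theta_0| + |\widetilde\omega_0 - \omega_0| = O(d)$.

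Once $(\widetilde\theta_0,\widetilde\omega_0)$ is fixed I would simply \emph{define} $\chi_0 = e^{-i\widetilde\theta_0}u_0 - \Phi_{\widetilde\omega_0}$; by construction $P_0(\widetilde\omega_0)\chi_0 = 0$, and
$$
\|\chi_0\|_{V\cap L^1_w} \le \|e^{-i\widetilde\theta_0}u_0 - e^{-i\theta_0}u_0\|_{V\cap L^1_w} + \|e^{-i\theta_0}u_0 - \Phi_{\omega_0}\|_{V\cap L^1_w} + \|\Phi_{\omega_0} - \Phi_{\widetilde\omega_0}\|_{V\cap L^1_w}.
$$
The middle term is $d$ by hypothesis; the first is $O(|\widetilde\theta_0-\theta_0|\,\|u_0\|_{V\cap L^1_w}) = O(d)$; the last is $O(|\widetilde\omega_0 - \widetilde\omega_0 - \omega_0|) = O(d)$ because $\omega \mapsto \Phi_\omega$ is smooth into $V\cap L^1_w$ (the explicit profile \eqref{fiom} and its $\omega$-derivative $\varphi_\omega$ are manifestly in this space). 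Hence $\|\chi_0\|_{V\cap L^1_w} = O(d)$.

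The main obstacle I anticipate is the bookkeeping around the $\omega$-dependence of the projector $P_0(\omega)$: because the target space of $F$ itself moves with $\omega$, one must be careful to express $F$ with values in a fixed space (e.g. by composing with a fixed pair of linear functionals, such as $\Omega(\cdot,\varphi_{\omega_0})$ and $\Omega(\cdot,J\Phi_{\omega_0})$, which span the annihilator of $\mathrm{Ker}\,P_0(\omega_0)$) before invoking the implicit function theorem, and to check that the $\frac{dP_0}{d\omega}$ terms appearing in the Jacobian do not destroy invertibility. They do not, since at the base point $\chi = 0$ so $\frac{dP_0}{d\omega}$ acts on zero and the differential reduces to the nondegenerate symplectic pairing matrix; the only subtlety is making this precise uniformly for $u_0$ in a $d$-neighborhood, which is handled by the standard quantitative implicit function theorem together with the continuity of all the relevant pairings in $\omega$. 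The required smallness of $d$ is exactly what guarantees one stays in the chart where the implicit function theorem applies.
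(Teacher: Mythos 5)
Your proposal is correct and follows exactly the route the paper indicates: the paper itself omits the details, stating only that the lemma is ``an application of the implicit function theorem making use again of the condition $\frac{d}{d\omega}\|\Phi_\omega\|^2 \neq 0$'' with references to the literature, and your argument is a faithful fleshing-out of that — setting $F(\theta,\omega)=P_0(\omega)(e^{-i\theta}u_0-\Phi_\omega)$, checking that the differential at the base point reduces to the nondegenerate symplectic pairing with determinant proportional to $\Delta$, and reading off the $O(d)$ bounds from the quantitative implicit function theorem. The only blemishes are cosmetic (the typo $O(|\widetilde\omega_0-\widetilde\omega_0-\omega_0|)$, and the slightly hand-wavy claim that $L^2$-boundedness of $P_0$ is ``controlled by the $V$-norm'' — better to pair against the decaying functions $\varphi_\omega$, $J\Phi_\omega$ using the $V\cap L^1_w$ structure directly), neither of which affects the substance.
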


\noindent The result is commonly stated as a preliminary step in the
analysis of modulation equations (see for example \cite{GNT},\cite{KM}
and \cite{BKKS}). The proof is an application of the implicit function
theorem making use again of the condition
$ \frac{d}{dt}\|\Phi_\omega\|^2 \neq 0 $; we omit details and refer to
the quoted references. As a consequence of the previous lemma, in all
proofs in the rest of the paper 
we can assume $P_0\chi_0 = 0$ where $\chi_0=\chi(0)$.\par\noindent

\par\noindent
An equivalent form of the modulation equations for the soliton
parameters $\omega$ and $\gamma$ can be obtained exploiting the
characterization of the condition $P_0\chi=0$ through the (Hilbert)
orthogonality $(\chi, \Phi_{\omega})_{L^2}=0=(\chi,
i\varphi_{\omega})_{L^2}$. In some respects they are more transparent
and we give them making use of the complex writing. 
\begin{theo}{\emph{\textbf{(Modulation equations II)}}}  \label{lemma-modII}
Let $\chi$ be a solution to equation \eqref{eqchi} such that
$P_0 \chi (t) = 0$ for all $t \geq 0$, and let the functions $\omega$
and $\gamma$ 
belong to ${\rm C}^1(\R)$; then $\omega$ and $\gamma$
satisfy the equations 

\be
\dot\omega=\frac{((\chi,\varphi_\omega)_{L^2}+(\varphi_\omega,\Phi_\omega)_{L^2})
N(\chi,
i\Phi_\omega)-(\chi,i\Phi_\omega)_{L^2}N(\chi,\varphi_\omega)}{(\varphi_\omega,\Phi_\omega)_{L^2}^2-(\chi,\varphi_\omega)_{L^2}^2}  
\ee
\\
\be
\dot\gamma=\frac{((\chi,\varphi_\omega)_{L^2}-(\varphi_\omega,\Phi_\omega)_{L^2}) N(\chi, \varphi_\omega)+(\chi,i\frac{d}{d\omega}\varphi_\omega)_{L^2}N(\chi, i\Phi_\omega)}{(\varphi_\omega,\Phi_\omega)_{L^2}^2-(\chi,\varphi_\omega)_{L^2}^2}
\ee

\end{theo}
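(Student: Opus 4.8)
The plan is to obtain the two identities by projecting the variational equation \eqref{eqchi} onto two well-chosen test directions and then inverting the resulting $2\times 2$ linear system in $(\dot\omega,\dot\gamma)$, using the fact -- recalled just above the statement, and read off from the representation of $P_0$ in Lemma \ref{proiettore} -- that $P_0\chi(t)=0$ is equivalent to the pair of Hilbert orthogonality relations $(\chi(t),\Phi_{\omega(t)})_{L^2}=0$ and $(\chi(t),i\varphi_{\omega(t)})_{L^2}=0$. First I would differentiate these two relations in $t$ (legitimate since $\omega,\gamma\in C^1$, so $\chi$ is differentiable through the Ansatz \eqref{mod-ansatz}). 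Using $\frac{d}{dt}\Phi_{\omega(t)}=\dot\omega\,\varphi_\omega$, $\frac{d}{dt}\varphi_{\omega(t)}=\dot\omega\,\frac{d}{d\omega}\varphi_\omega$, and that $\Phi_\omega,\varphi_\omega$ are real-valued, this gives $(\dot\chi,\Phi_\omega)_{L^2}=-\dot\omega\,(\chi,\varphi_\omega)_{L^2}$ and $(\dot\chi,i\varphi_\omega)_{L^2}=-\dot\omega\,(\chi,i\frac{d}{d\omega}\varphi_\omega)_{L^2}$.

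Next I would insert in \eqref{eqchi} the test functions $v=i\Phi_\omega$ and $v=\varphi_\omega$, both of which lie in $V$. On the left-hand side, because multiplication by $i$ is a skew-adjoint isometry of $L^2(\R^3,\C)$ for the real scalar product, $(i\dot\chi,i\Phi_\omega)_{L^2}=(\dot\chi,\Phi_\omega)_{L^2}$ and $(i\dot\chi,\varphi_\omega)_{L^2}=-(\dot\chi,i\varphi_\omega)_{L^2}$, so after substituting the two relations above the left-hand sides become linear in $\dot\omega$ and $\chi$. The decisive simplification is that the quadratic-form term in \eqref{eqchi} drops out for both choices: in real representation $i\Phi_\omega$ corresponds to $e_1=(0,\Phi_\omega)\in D(H_{\alpha,Lin})$ with $H_{\alpha,Lin}e_1=De_1=(0,L_2\Phi_\omega)=0$ since $L_2\Phi_\omega=0$, so symmetry of $Q_{\alpha,Lin}$ gives $Q_{\alpha,Lin}(\chi,i\Phi_\omega)=(H_{\alpha,Lin}e_1,\chi)_{L^2}=0$; and $\varphi_\omega$ corresponds to $e_2=(\varphi_\omega,0)\in D(H_{\alpha,Lin})$ with $H_{\alpha,Lin}e_2=De_2=(L_1\varphi_\omega,0)=(-\Phi_\omega,0)$ by the Jordan relation $L_1\varphi_\omega=-\Phi_\omega$ built into Theorem \ref{kergen}, so $Q_{\alpha,Lin}(\chi,\varphi_\omega)=-(\chi,\Phi_\omega)_{L^2}=0$, this time by the first orthogonality relation. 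The surviving scalar products collapse via $(\Phi_\omega,i\Phi_\omega)_{L^2}=(i\varphi_\omega,\varphi_\omega)_{L^2}=0$ and $(i\varphi_\omega,i\Phi_\omega)_{L^2}=(\varphi_\omega,\Phi_\omega)_{L^2}$, while the nonlinear remainder reduces to $N(q_\chi,q_{i\Phi_\omega})=N(\chi,i\Phi_\omega)$ and $N(q_\chi,q_{\varphi_\omega})=N(\chi,\varphi_\omega)$, respectively.

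Collecting the two resulting scalar identities yields the linear system $\dot\omega\,[(\varphi_\omega,\Phi_\omega)_{L^2}-(\chi,\varphi_\omega)_{L^2}]-\dot\gamma\,(\chi,i\Phi_\omega)_{L^2}=N(\chi,i\Phi_\omega)$ and $\dot\omega\,(\chi,i\frac{d}{d\omega}\varphi_\omega)_{L^2}-\dot\gamma\,[(\varphi_\omega,\Phi_\omega)_{L^2}+(\chi,\varphi_\omega)_{L^2}]=N(\chi,\varphi_\omega)$ for the unknowns $\dot\omega,\dot\gamma$. Its determinant equals $-(\varphi_\omega,\Phi_\omega)_{L^2}^2$ at $\chi=0$, hence remains bounded away from $0$ for $\|\chi\|$ small, because $(\varphi_\omega,\Phi_\omega)_{L^2}=\frac{1}{2}\frac{d}{d\omega}\|\Phi_\omega\|_{L^2}^2>0$ in the range $\sigma\in(0,1/\sqrt{2})$ that guarantees orbital stability. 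Solving by Cramer's rule and simplifying then produces the displayed closed-form expressions for $\dot\omega$ and $\dot\gamma$.

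The step I expect to be the genuine difficulty is the rigorous treatment of the quadratic form rather than the operator: since $\alpha_1\neq\alpha_2$ the profile $\Phi_\omega$ does not belong to $D(H_{\alpha_1})$ and $\varphi_\omega$ does not belong to $D(H_{\alpha_2})$, and moreover these domains vary with $\omega(t)$, so the two vanishing identities for $Q_{\alpha,Lin}$ above must be read entirely through the symmetric form of $D$, using in each case that one slot is occupied by an element of the operator domain while $\chi$ is merely in the form domain $V$. Once this is in place, the remaining content is the elementary differentiation of the orthogonality relations, the bookkeeping of the factors $i$ in the real scalar products, and the invertibility of the $2\times 2$ system secured by the positivity of $\frac{d}{d\omega}\|\Phi_\omega\|_{L^2}^2$.
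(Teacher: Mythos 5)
Your argument is correct and is essentially identical to the paper's own proof: differentiate the two orthogonality relations $(\chi,\Phi_\omega)_{L^2}=0=(\chi,i\varphi_\omega)_{L^2}$ in time, test the weak equation \eqref{eqchi} with $v=i\Phi_\omega$ and $v=\varphi_\omega$ (where the form terms vanish via $L_2\Phi_\omega=0$ and $L_1\varphi_\omega=-\Phi_\omega$ combined with the orthogonality), and solve the resulting $2\times2$ system — your version merely makes explicit the steps the paper compresses into ``taking into account properties of the operators $L_1$ and $L_2$.'' The only caveat is at the very last step: Cramer's rule on the system you (and the paper) derive gives the determinant $(\varphi_\omega,\Phi_\omega)_{L^2}^2-(\chi,\varphi_\omega)_{L^2}^2-(\chi,i\Phi_\omega)_{L^2}(\chi,i\frac{d}{d\omega}\varphi_\omega)_{L^2}$, whose last (quadratic in $\chi$) term is absent from the denominator displayed in the theorem, a discrepancy inherited from the paper rather than introduced by your proof.
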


\begin{proof}
Differentiating in time the orthogonality conditions
$(\chi, \Phi_\omega)_{L^2}=0=(\chi, i\varphi_\omega)_{L^2}$, it easily follows
that 
\begin{equation*}
(i\dot\chi,
i\Phi_\omega)_{L^2}=-\dot\omega(\chi, \varphi_\omega)_{L^2}, \qquad
%\end{equation*}
%\begin{equation*}
(i\dot\chi, \varphi_\omega)_{L^2}=\dot\omega(\chi,
i\frac{d}{d\omega}\varphi_\omega)_{L^2}\ . 
\end{equation*}
So testing the weak equation for $\chi$ with $i\Phi_\omega$ and
$\varphi$ and taking into account properties of operators $L_1$ and
$L_2$ and orthogonality conditions again, one obtains the system 
\begin{align*}
\dot\omega((\chi,\varphi_\omega)_{L^2}-
(\Phi_\omega,\varphi_\omega)_{L^2})
+ \dot\gamma(\chi,i\Phi_\omega)_{L^2}&=-N(\chi,i\Phi_\omega)\\ 
\dot\omega(\chi,\frac{d}{d\omega}\varphi_\omega)_{L^2}
- \dot\gamma((\Phi_\omega,\varphi_\omega)_{L^2}+(\chi,\varphi_\omega)_{L^2})&=N(\chi,\varphi_\omega). 
\end{align*}
The thesis follows solving for $\dot\omega$ and $\dot\gamma$.

\end{proof}
\par\noindent
Notice that to this second form of modulation equations apply similar
remarks to the ones made for the first form. In particular, if a
priori estimates on smallness of $\chi$ are known, the modulation
equations are well defined thanks to the condition
$\frac{d}{d\omega}\|\Phi_\omega\|^2>0\ ,$ and the analogous of
Lemma \ref{dati-iniz} holds true.

\section{Time decay of weak solutions}
The goal of this section is to provide the time decay of the
transversal component $\chi$ of the solution $u$ (see (31)) to
equation \eqref{eq1}; 
the result we achieve shows that $\chi$ is in fact not only a fluctuation, but also a
decaying disperive remainder and it paves the way to the proof of
asymptotic stability of standing waves, that is given in the next
section. To this end we follow the idea 
developed in \cite{BP1},\cite{BP2},\cite{BS} for the standard NLS and
applied in \cite{BKKS} to the case of 1-d concentrated
nonlinearities. 
\par\noindent For any $T > 0$, define preliminarily the
so-called  
{\it majorant}
\begin{equation}    \label{maggioranti}
M(T) = \sup_{0 \leq t \leq T} \left[ (1+t)^{3/2} \| \chi(t)
\|_{L^{\infty}_{w^{-1}}} +(1+t)^3 (|\dot{\gamma}(t)|
+|\dot{\omega}(t)|) \right].
\end{equation}

\noindent 
We aim at proving that the majorant is uniformly bounded  in $T$  by a
constant $\overline{M}=O(d)$, where $d$ is the size of the dispersive
component $\chi$.  The proof of such bound is the content of the
following theorem. 

\begin{theo}    \label{teo-maggioranti}
Let $u \in C(\R^+, V)$ be a solution to equation
\eqref{eq1} with $u(0) = u_0 \in V \cap L^1_w$ and
define
$d := \| u_0 -e^{i\theta_0} \Phi_{\omega_0} \|_{V \cap L^1_w},$
for some $\omega_0 > 0$ and $\theta_0 \in \R$. Then, if $d$ is
sufficiently small, there are $\omega, \gamma \in C^1(\R^+)$
which satisfy \eqref{mod-eq1}-\eqref{mod-eq2}, and such that the
solution $u$ can be written as in \eqref{mod-ansatz}. 

\noindent
Moreover,
there is a positive constant $\overline{M} > 0$, depending only on
the initial data, such that, for any $T>0$, one has $M(T) \leq \overline{M},$ 
and $\overline{M} = O(d)$ as $d
\rightarrow 0$.
In particular
\begin{align}
\|\chi(t)\|_{L^{\infty}_{w^{-1}}}&\leq \overline{M}\ (1+t)^{-3/2} \ &\forall t>0\\
|\dot{\gamma}(t)|+|\dot{\omega}(t)|&\leq \overline{M}\ (1+t)^{-3} \
 &\forall t>0. 
\end{align} 
\end{theo}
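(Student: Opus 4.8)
The plan is to run the standard majorant/bootstrap argument of Buslaev--Perelman and \cite{BS}, adapted to our setting. We fix $T>0$ and assume, as a bootstrap hypothesis, that $M(T)\le 2\overline M$ for a constant $\overline M$ to be chosen of size $O(d)$; the goal is to improve this to $M(T)\le \overline M$, which by continuity of $M$ and the smallness of $M(0)$ (controlled by Lemma \ref{dati-iniz}) closes the argument and yields the stated global bounds. By Lemma \ref{dati-iniz} we may and do assume $P_0\chi_0=0$, and by Theorem \ref{lemma-modI} (or \ref{lemma-modII}) the parameters $\omega,\gamma\in C^1(\R^+)$ exist and satisfy the modulation equations as long as $\|\chi\|_{L^1_w}$ stays small, so the decomposition \eqref{mod-ansatz} is legitimate throughout $[0,T]$.

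The core is a Duhamel representation for $\chi$ on the continuous spectral subspace. Differentiating \eqref{mod-ansatz} and using \eqref{eqchi}, one writes the equation for $\chi$ in the form $\dot\chi=-L\chi+F$, where $F$ collects the nonlinear remainder $N(q_\chi)$ together with the terms proportional to $\dot\omega$ and $\dot\gamma$ times vectors from $N_g(L)$; since $P_0\chi\equiv0$, projecting with $P_c=1-P_0$ gives $\chi(t)=e^{-Lt}\chi_0+\int_0^t e^{-L(t-s)}P_cF(s)\,ds$, and only the $P_c$-component of $F$ survives. Here the domains of the operators change with time, so — as emphasized in the Introduction — this identity must be read in the weak (quadratic-form) sense; I would carry the whole computation at the level of the bilinear forms $Q_L$, $Q_{\alpha,Lin}$. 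Applying the $L^{\infty}_{w^{-1}}$ norm and the dispersive estimate of Theorem \ref{stima-cont} (together with Remark \ref{split}, which identifies $P_c$ with the integral over $\mathcal C_+\cup\mathcal C_-$), one gets
\begin{equation*}
\|\chi(t)\|_{L^{\infty}_{w^{-1}}}\le C(1+t)^{-3/2}\|\chi_0\|_{L^1_w}+C\int_0^t (1+t-s)^{-3/2}\|P_cF(s)\|_{L^1_w}\,ds.
\end{equation*}
The key structural point is that the nonlinear term is a pure function of the charge: by the Remark following \eqref{eqchi}, $|N(q_\chi)|\le C|q_\chi|^2$, and $|q_\chi|\lesssim\|\chi\|_{L^{\infty}_{w^{-1}}}$; likewise the profile-derivative vectors lie in $L^1_w$, so Corollary \ref{stime-modulazione} gives $|\dot\omega|+|\dot\gamma|\le\mathcal R|q_\chi|^2\lesssim\|\chi\|_{L^{\infty}_{w^{-1}}}^2$. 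Inserting the bootstrap bound $\|\chi(s)\|_{L^{\infty}_{w^{-1}}}\le 2\overline M(1+s)^{-3/2}$ makes the integrand $\lesssim \overline M^2(1+t-s)^{-3/2}(1+s)^{-3}$, and the elementary convolution inequality $\int_0^t(1+t-s)^{-3/2}(1+s)^{-3}ds\le C(1+t)^{-3/2}$ yields $\|\chi(t)\|_{L^{\infty}_{w^{-1}}}\le C d\,(1+t)^{-3/2}+C\overline M^2(1+t)^{-3/2}$. Feeding this back into Corollary \ref{stime-modulazione} gives $|\dot\omega(t)|+|\dot\gamma(t)|\le C(d+\overline M^2)^2(1+t)^{-3}$. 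Hence $M(T)\le C_1 d+C_2\overline M^2$; choosing $\overline M=2C_1 d$ and $d$ small enough that $C_2\overline M\le 1/2$ closes the bootstrap, giving $M(T)\le\overline M=O(d)$ uniformly in $T$, which is exactly the asserted inequalities.

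\textbf{Main obstacle.} The delicate point is not the convolution bookkeeping but making the Duhamel formula for $\chi$ rigorous when the operator domains move with $\omega(t)$: one cannot simply write $\partial_t\chi=-L_{\omega(t)}\chi+F$ in the strong sense, so the Duhamel identity and the action of $P_c=1-P_0$ on it must be justified entirely through the time-differentiated weak equation \eqref{eqchi2} and the representation \eqref{Pzerosymp} of $P_0$ (whence the appearance of $\frac{dP_0}{d\omega}\chi$ terms), and one must check that these extra modulation-generated terms are themselves $O(|q_\chi|^2)$ and hence harmless. A secondary technical nuisance is propagating the control from the $L^{\infty}_{w^{-1}}$ norm back to the $L^1_w$ (and $V$) norms needed to keep the modulation equations well posed; this uses that the singular parts of all the relevant profile vectors are explicit multiples of $G_\omega$-type functions with charges controlled by $|q_\chi|$, so the $L^1_w$ smallness is inherited. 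Once these two points are handled the estimate is routine.
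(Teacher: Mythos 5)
Your overall strategy (majorant plus bootstrap, Duhamel representation, dispersive estimate of Theorem \ref{stima-cont}, convolution inequalities) is the same as the paper's, which proves the theorem via the continuity/open-closed argument applied to Proposition \ref{prop-maggioranti}. However, there is a genuine gap at the heart of your Duhamel step. You write $\chi(t)=e^{-Lt}\chi_0+\int_0^t e^{-L(t-s)}P_cF(s)\,ds$ with a single operator $L$, but the linearization is $L=L(\omega(t))$ and is therefore \emph{non-autonomous}: there is no semigroup $e^{-L(t-s)}$ to insert into Duhamel's formula, and the dispersive estimate of Theorem \ref{stima-cont} is proved only for a fixed frequency. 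The paper resolves this by freezing the linearization at the endpoint time $t_1$, setting $L_I=L(\omega_1)$ with $\omega_1=\omega(t_1)$, and simultaneously reparametrizing the phase via \eqref{reparametrization}, $e^{-J\Theta(t)}\chi=e^{-J\tilde\Theta(t)}\eta$ with $\tilde\Theta(t)=\omega_1 t+\gamma_1$. This produces extra source terms — the difference $Q_L-Q_{L_I}$, proportional to $(\sqrt{\omega(t)}-\sqrt{\omega_1})\,\mathbb{T}q_\eta$, and the dragging term proportional to $\sin(\Theta-\tilde\Theta)\sigma_3 q_\eta$ — which are \emph{linear} in $q_\eta$, not quadratic. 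Your assertion that all source terms are $O(|q_\chi|^2)$ is therefore false; the bootstrap still closes only because these linear terms carry the small prefactor $\rho$ (via $|\sqrt{\omega(t)}-\sqrt{\omega_1}|\le C\rho$ and $|\Theta-\tilde\Theta|\le C\rho$), and this is exactly why Proposition \ref{prop-maggioranti} needs the hypothesis $M(t_1)\le\rho$ with $\rho$ small, not merely $d$ small. Without the freezing step the term $\rho\,m(t)$ in the final inequality $m(t)\le c(d+m^2+\rho m)$ is invisible, and the argument as you state it does not literally parse.

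A second, related omission: the orthogonality condition you impose is $P_0(\omega(t))\chi=0$, while the spectral projection compatible with the frozen propagator $e^{L_I t}$ is $P_c(\omega_1)$. These do not coincide, so one cannot simply say that ``only the $P_c$-component of $F$ survives''; the paper needs Lemma \ref{proiezioneeta}, constructing the bounded comparison operator $\Pi(t)$ with $\eta(t)=\Pi(t)h(t)$, $h=P_c(\omega_1)\eta$, in order to transfer the dispersive decay of $h$ back to $\eta$. Finally, the weak-form Duhamel identity is obtained in the paper by testing separately against the regular part $\phi_v^\lambda\in H^1(\R^3)$ and the singular part $q_vG_\lambda$ of $v$, which yields two distinct source terms $f_I$ and $g_I$ with different mapping properties; your proposal correctly flags this as the main obstacle but does not indicate how to carry it out. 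The remaining bookkeeping (the bound $|q_\eta|\le 4\pi\|\eta\|_{L^\infty_{w^{-1}}}$, Corollary \ref{stime-modulazione}, and the convolution estimates) matches the paper.
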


%\noindent Recall that the initial datum can be written without loss
%of generality in the form
%$$u_0(x) =  e^{i\widetilde{\theta}_0} \left( \Phi_{\widetilde{\omega}_0}(x) +\chi_0(x) \right).$$

\noindent The previous theorem is implied by the
following proposition that is proven in
Section \ref{proof-maggioranti} by using the results given in
Sections \ref{frozen} and \ref{duhamel}, and the dispersive properties of the linearization operator $L$ given in Section \ref{dispersive}. 

\begin{prop}    \label{prop-maggioranti}
Under the hypotheses of the previous theorem, assume that there
exist some $t_1 > 0$ and $\rho > 0$ such that $M(t_1) \leq \rho$.
Then there are two positive numbers $d_1$ and $\rho_1$, independent
of $t_1$, such that if $d = \| \chi_0 \|_{V \cap L^1_w} < d_1$ and
$\rho < \rho_1$, then $M(t_1) \leq \frac{\rho}{2}.$

\end{prop}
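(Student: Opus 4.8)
The plan is to prove Proposition~\ref{prop-maggioranti} by a bootstrap estimate on the majorant: assuming $M(t_1)\le\rho$, I will derive a closed inequality $M(t_1)\le C_\ast(d+M(t_1)^2)$ and then absorb the quadratic term by taking $\rho_1$ and $d_1$ small. By Lemma~\ref{dati-iniz} one may assume $P_0\chi(0)=0$, an orthogonality propagated by the modulation equations, so that $\chi(t)=P_c\chi(t)$ throughout, with $P_0,P_c$ the symplectic projections of Remark~\ref{split}.

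\textbf{Duhamel representation.} First I would recast the weak equation \eqref{eqchi2} for $\chi$ as an inhomogeneous linear problem for the \emph{frozen} linearised generator: freezing the linearisation at $\omega_\ast:=\omega(0)$ and using the propagator $e^{-Lt}$ of Section~\ref{linearizzato} restricted to the range of $P_c$, the variation-of-constants formula gives
\[
\chi(t)=e^{-Lt}P_c\chi_0+\int_0^t e^{-L(t-s)}P_c\,F(s)\,ds ,
\]
where $F(s)$ collects (i)~the nonlinear remainder $N(q_\chi(s))$, read as a vector via Riesz as in the remark after \eqref{eqchi}; (ii)~the modulation terms $\dot\omega(s),\dot\gamma(s)$ times the fixed profiles $\Phi_{\omega(s)},\varphi_{\omega(s)},\chi(s)$; and (iii)~the freezing error, which at the level of quadratic forms is $\bigl(\alpha_j(\omega(s))-\alpha_j(\omega_\ast)\bigr)\Re(q_\chi\overline{q_v})$, of size $O(|\omega(s)-\omega_\ast|\,|q_\chi(s)|)$. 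Since the domains of $L(\omega(s))$ change with $s$, this identity must be formulated and justified entirely variationally; this, together with the requisite short-time and boundedness properties of $e^{-Lt}P_c$, is the content of Sections~\ref{frozen} and \ref{duhamel}, which I take as given.

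\textbf{Smallness of the source, and closing.} The charge obeys $|q_\chi(s)|\le C\|\chi(s)\|_{L^\infty_{w^{-1}}}$, so $M(t_1)\le\rho$ forces $|q_\chi(s)|\le C\rho(1+s)^{-3/2}$ on $[0,t_1]$; in particular $|q_\chi(s)|$ is small, which is exactly the smallness needed in Corollary~\ref{stime-modulazione}. That corollary and the remark after \eqref{eqchi} give $|N(q_\chi(s))|\le C|q_\chi(s)|^2$ and $|\dot\omega(s)|+|\dot\gamma(s)|\le C|q_\chi(s)|^2\le C\rho^2(1+s)^{-3}$, hence $|\omega(s)-\omega_\ast|\le\int_0^s|\dot\omega(\tau)|\,d\tau\le C\rho^2$. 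Since all profiles in $F$ are fixed Coulombic functions lying in $L^1_w$, this yields $\|F(s)\|_{L^1_w}\le C\rho^2(1+s)^{-3/2}$ on $[0,t_1]$. Then the dispersive estimate of Theorem~\ref{stima-cont} applied to the Duhamel integral (with the short-time bound of Section~\ref{duhamel} covering $t-s$ bounded, where the $t^{-3/2}$ estimate degenerates), the bound $\|e^{-Lt}P_c\chi_0\|_{L^\infty_{w^{-1}}}\le Cd(1+t)^{-3/2}$, and the elementary convolution inequality $\int_0^t(1+t-s)^{-3/2}(1+s)^{-3/2}\,ds\le C(1+t)^{-3/2}$ (split at $s=t/2$) give $(1+t)^{3/2}\|\chi(t)\|_{L^\infty_{w^{-1}}}\le C(d+\rho^2)$ for $t\le t_1$. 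Adding $(1+t)^3(|\dot\omega(t)|+|\dot\gamma(t)|)\le C\rho^2$, one obtains $M(t_1)\le C_\ast(d+\rho^2)$. Choosing $\rho_1$ with $C_\ast\rho_1\le1/4$ (so $C_\ast\rho^2\le\rho/4$) and $d_1$ small enough that $C_\ast d\le\rho/4$ — both independent of $t_1$ — yields $M(t_1)\le\rho/2$.

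\textbf{Main obstacle.} The convolution bookkeeping is routine; the real difficulty is the first step. Because $L(\omega(s))$ has $s$-dependent operator domain, neither the variation-of-constants identity nor the dispersive estimate of Theorem~\ref{stima-cont} can be invoked at the operator level, and the whole Duhamel scheme — the freezing, the control of the error term (iii), and the action of $e^{-L(t-s)}P_c$ on $F(s)$ — has to be carried out in the quadratic-form formulation \eqref{eqchi2}. A secondary technical point is reconciling the $t^{-3/2}$ dispersive estimate, which is singular as $t\to0$, with the need to integrate the Duhamel kernel up to $s=t$. Both are dealt with in Sections~\ref{frozen} and \ref{duhamel}, on which the argument rests.
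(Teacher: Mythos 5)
Your overall architecture -- Duhamel representation with a frozen linearization, the charge bound $|q_\chi|\le C\|\chi\|_{L^\infty_{w^{-1}}}$ feeding Corollary \ref{stime-modulazione}, the dispersive estimate of Theorem \ref{stima-cont}, the convolution inequality, and the final absorption $M\le C_\ast(d+\rho^2)$ -- is the same as the paper's. But there is a genuine gap in your first step, and it sits exactly where you place the difficulty. Your source term $F(s)$ contains $\dot\gamma(s)\chi(s)$, and (though you omit it) freezing the generator also produces $(\omega(s)-\omega_\ast)J\chi(s)$, since $L=J(H_{\alpha,Lin}+\omega)$ and not only the $\alpha_j$'s but also the additive $\omega$ must be frozen. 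These terms are proportional to the unknown $\chi$ itself, which is controlled only in $V\cap L^{\infty}_{w^{-1}}$; there is no propagated $L^1_w$ bound on $\chi(s)$, so your claim $\|F(s)\|_{L^1_w}\le C\rho^2(1+s)^{-3/2}$ fails for them, and the only available mapping property of the propagator is $L^1_w\to L^{\infty}_{w^{-1}}$. The paper's device for removing precisely these terms is the gauge reparametrization $e^{-J\Theta(t)}\chi=e^{-J\tilde\Theta(t)}\eta$ with $\tilde\Theta(t)=\omega_1 t+\gamma_1$ frozen at the \emph{final} time $t_1$ (Section \ref{frozen}): differentiating the phase factor generates $(\omega_1-\omega-\dot\gamma)J\eta$, which cancels both $\dot\gamma J\chi$ and the $(\omega_1-\omega)J\eta$ part of the freezing error, leaving in $N_I$ only charge-type (rank-one, hence harmless in $L^1_w$) and fixed-profile sources. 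Your scheme freezes at $\omega(0)$ and performs no phase change, so this cancellation is unavailable and the Duhamel estimate does not close.

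A secondary point you gloss over: the orthogonality you propagate is $P_0(\omega(t))\chi(t)=0$, at the \emph{running} frequency, whereas the dispersive estimate applies to the range of $P_c(\omega_1)$ for the frozen operator; these projections do not coincide, and the paper needs Lemma \ref{proiezioneeta} (the operator $\Pi(t)$ and the two-sided comparison between $\|\eta\|$ and $\|h\|$ with $h=P_c(\omega_1)\eta$) to transfer the decay of $h$ back to $\eta$. With the reparametrization and this lemma inserted, the rest of your bookkeeping (the $|\dot\gamma|+|\dot\omega|$ bound, the splitting of the convolution integral, and the choice of $\rho_1$ and $d_1$ independent of $t_1$) matches the paper's proof.
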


\noindent Indeed, if Proposition \ref{prop-maggioranti} were true,
then Theorem \ref{teo-maggioranti} would follow from the next
argument:
let $\mathcal{I} \subset [0, +\infty)$ be defined as
$$\mathcal{I} = \{ t_1 \geq 0: \;\; \omega, \gamma \in C^1([0,
t_1]), \; M(t_1) \leq \rho \}.$$

\noindent $\mathcal{I}$ is obviously relatively closed in $[0, +\infty)$ with the topology induced by considering it as a subspace of $\R$ with the
standard Euclidean topology. On the other
hand, the thesis of Proposition \ref{prop-maggioranti} and the
estimates of Corollary \ref{stime-modulazione} imply that
$\mathcal{I}$ is also relatively open. Hence, the uniform estimate
of Theorem \ref{teo-maggioranti} follows from the fact that $\sup
\mathcal{I} = +\infty$.

\subsection{Frozen linearized problem}	\label{frozen}
Note that the equation \eqref{eqchi} is non autonomous. In order to
make its study simpler, it is useful to exploit a further
reparametrization of the solution $\chi(t)$. We fix a time $t_1> 0$
and denote $\omega_1 = \omega(t_1)$ and $\gamma_1
= \gamma(t_1)$. Now define (in vector notation; we recall that $J$
corresponds to $-i$)  
\begin{equation}\label{reparametrization} 
e^{-J\Theta(t)}\chi(t,x) = e^{-J\tilde{\Theta}(t)} \eta(t,x)\ ,\ \ {\rm where}\ \ \tilde{\Theta}(t) = \omega_1 t +\gamma_1\ .
\end{equation} 
The function $\eta$ satisfies the equation

\begin{align*}
\left(e^{J(\Theta -\tilde{\Theta})} \frac{d\eta}{dt}, v \right)_{L^2} &= Q_L(e^{J(\Theta -\tilde{\Theta})} \eta, v)+(\omega_1 -\omega) (J\eta, v)_{L^2}+\dot{\gamma} (J\Phi_{\omega}, v)_{L^2} \\
& -\dot{\omega} \left(
\frac{d\Phi_{\omega}}{d\omega}, v \right)_{L^2} +JN(e^{J(\Theta
-\tilde{\Theta})} q_{\eta}) \overline{q_v}\ \ \ \forall v \in V
\end{align*}

\n
We need a further manipulation which allows to rewrite the previous
equation in a form which makes the role of
reparametrization clear. To this end we need the following identities, which
can be obtained from straightforward computations  

\begin{itemize}
\item $J e^{J(\Theta -\tilde{\Theta})} = e^{J(\Theta -\tilde{\Theta})} J$;

\item $\displaystyle Q_L(e^{J(\Theta -\tilde{\Theta})}u, v)
-e^{J(\Theta -\tilde{\Theta})} Q_L(u, v) = \frac{(\sigma
+1) \sqrt{\omega}}{2\pi} \sin(\Theta -\tilde{\Theta}) \sigma_3
q_u \overline{q_v}$, 
for any $u$, $v \in V$, where
 $$\sigma_3 =\left[
  \begin{array}{cc}
    1 & 0 \\
    0 & -1 \\
  \end{array}
\right].$$
\end{itemize}

\noindent Making use of the previous identities, one rewrites the
  equation for $\eta$ as  
\begin{equation}    \label{eq-da-congelare}
\left( \frac{d\eta}{dt}, v \right)_{L^2} = (\omega_1 -\omega)
(J\eta, v)_{L^2} +Q_L(\eta, v) +\left( e^{-J(\Theta -\tilde{\Theta})}
\left(\dot{\gamma} J\Phi_{\omega} -\dot{\omega}
\frac{d\Phi_{\omega}}{d\omega} \right), v \right)_{L^2}+
\end{equation}
$$+e^{-J(\Theta -\tilde{\Theta})} \frac{(\sigma +1)
\sqrt{\omega}}{2\pi} \sin(\Theta -\tilde{\Theta}) \sigma_3 q_{\eta}
\overline{q_v} +e^{-J(\Theta -\tilde{\Theta})} JN(e^{J(\Theta -\tilde{\Theta})}
q_{\eta}) \overline{q_v},\ \ \forall v\in V\ .$$

\noindent Let us define the linearization frozen at time $t_1$ as $L_I = L(\omega_1),$ and observe that for all $u$, $v \in V$
$$Q_L(u,v) -Q_{L_I}(u,v) =
\frac{\sqrt{\omega} -\sqrt{\omega_1}}{4\pi} \mathbb{T} q_u \overline{q_v}
-(\omega_1 -\omega) (Ju, v)_{L^2},$$

\noindent where $\displaystyle \mathbb{T} =\left[
  \begin{array}{cc}
    0 & -1 \\
    2\sigma +1 & 0 \\
  \end{array}
\right].$
Hence, equation \eqref{eq-da-congelare} becomes
\begin{equation}\label{eq-congelata}
\left( \frac{d\eta}{dt}, v \right)_{L^2} = Q_{L_I}(\eta, v) + N_I(t,\omega,q_\eta,q_v)\ \ \ \forall v\in V\ ,
\end{equation}
where the time dependent nonlinear remainder (including now
``dragging" terms due to reparametrization) is given for all $v \in V$ by 
\begin{align}
N_I(t,\omega,q_\eta,q_v)&=\left(
e^{-J(\Theta -\tilde{\Theta})} \left(\dot{\gamma} J\Phi_{\omega}
-\dot{\omega} \frac{d\Phi_{\omega}}{d\omega}\right), v \right)_{L^2}+\frac{\sqrt{\omega} -\sqrt{\omega_1}}{4\pi} \mathbb{T} q_{\eta}
\overline{q_v}\
\\
&+e^{-J(\Theta -\tilde{\Theta})} \frac{(\sigma +1) \sqrt{\omega}}{2\pi}
\sin(\Theta -\tilde{\Theta}) \sigma_3 q_{\eta} \overline{q_v}+e^{-J(\Theta -\tilde{\Theta})} JN(e^{J(\Theta -\tilde{\Theta})} q_{\eta})
\overline{q_v}\ .
\end{align}

\noindent The gain in changing from original \eqref{eqchi} for the
dispersive component to equation \eqref{eq-congelata} is
that the latter is still non autonomous, but now the
generator of the evolution is (in weak form) a sum of a fixed linear
vector field (the frozen linearization $L_I$) and a nonlinear time
dependent perturbation (see also \cite{BP1}). This allows to use
the known dispersive properties of linearization operator $L$
described in \ref{dispersive}. 

\subsection{Duhamel's representation}	\label{duhamel}
In this subsection we write the equation
\eqref{eq-congelata} in Duhamel's representation to better exploit
the dispersive properties of the propagator $e^{L_It}$. This is not a completely
trivial task since our frozen equation is a variational equation and cannot be
written in strong form. In order to reach our purpose, we consider \eqref{eq-congelata} separating in the test function $v$ the regular and singular part accordingly to \eqref{def-V2}. So we begin by setting
$v = \phi_v^{\lambda} \in H^1(\R^3)$. We get
$$\left( \frac{d\eta}{dt}(t), \phi_v^{\lambda} \right)_{L^2} = (L_I
\eta(t) +f_I(t), \phi_v^{\lambda})_{L^2},$$

\noindent where $f_I(t) = e^{-J(\Theta(t) -\widetilde{\Theta}(t))}
\left( \dot{\gamma}(t) J\Phi_{\omega(t)} -\dot{\omega}(t)
\frac{d\Phi_{\omega(t)}}{d\omega} \right)$. Hence, by Duhamel's
principle one gets
$$(\eta, \phi_v^{\lambda})_{L^2} = \left( e^{L_I t} \eta_0 +\int_0^t
e^{L_I (t-s)} f_I(s) ds, \phi_v^{\lambda} \right)_{L^2}.$$

\noindent If one considers the same equation with $v = q_v
G_{\lambda}$ when $q_v \in \C$, one has
$$\left( \frac{d\eta}{dt}(t), q_v G_{\lambda} \right)_{L^2} = (L_I
\eta(t) +f_I(t) +g_I(t), q_v G_{\lambda})_{L^2},$$

\noindent where
$$g_I(t) = e^{-J(\Theta(t) -\widetilde{\Theta}(t))}
\left( 4\sqrt{\lambda} (\sigma +1) \sqrt{\omega(t)} \sin(\Theta(t)
-\widetilde{\Theta}(t)) \sigma_3 q_{\eta}(t) G_{\lambda} +\right.$$
$$\left. +8\pi \sqrt{\lambda} JN(e^{J (\Theta(t) -\widetilde{\Theta}(t))} q_{\eta}(t)) G_{\lambda}
\right) +2\sqrt{\lambda} (\sqrt{\omega(t)} -\sqrt{\omega_1}) \mathbb{T}
q_{\eta}(t) G_{\lambda},$$
where $q_\eta$ is the charge of the
function $\eta$.
\noindent Hence one has
$$(\eta, q_v G_{\lambda})_{L^2} = \left(
e^{L_I t} \eta_0 +\int_0^t e^{L_I (t-s)} (f_I(s) +g_I(s)) ds, q_v
G_{\lambda} \right)_{L^2}.$$

\noindent Summing up, for any $v \in V$, the equation \eqref{eq-congelata}
can be rewritten as
$$(\eta, v)_{L^2} = \left(
e^{L_I t} \eta_0 +\int_0^t e^{L_I (t-s)} f_I(s) ds, v
\right)_{L^2} +\left( \int_0^t e^{L_I (t-s)} g_I(s) ds, q_v G_{\lambda}
\right)_{L^2}.$$

\noindent In what follows we will use the following estimate on the function
$g_I$.

\begin{lemma}
Under the hypotheses of Proposition \ref{prop-maggioranti}, there
exists a constant $C > 0$ such that
$$\|g_I(t)\|_{V \cap L^1_w} \leq C(|q_\eta|^2 +\rho |q_{\eta}|),$$

\noindent for any $t \leq t_1$.
\end{lemma}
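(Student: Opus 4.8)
The plan is to estimate separately the three summands defining $g_I(t)$, using that each of them is a scalar (or $\R^2$-valued) coefficient multiplying the fixed function $G_{\lambda}$. First I would record that $G_{\lambda}\in V\cap L^1_w$ and set $c_{\lambda}:=\|G_{\lambda}\|_{V\cap L^1_w}<\infty$: in the representation \eqref{def-V2} its regular part vanishes and its charge equals $1$, so $\|G_{\lambda}\|_V=1$, while $\int_{\R^3}\bigl(1+|x|^{-1}\bigr)\frac{e^{-\lambda|x|}}{4\pi|x|}\,dx=\lambda^{-1}+\lambda^{-2}<\infty$. Next I would note that, in the complex representation, multiplication by $e^{-J(\Theta-\widetilde{\Theta})}$ is multiplication by the unimodular scalar $e^{-i(\Theta-\widetilde{\Theta})}$, hence an isometry of both $V$ and $L^1_w$; likewise $\sigma_3$ acts as complex conjugation (an isometry), and $\mathbb{T}$ is a fixed matrix of operator norm $2\sigma+1$. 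Consequently, since also $|q_{\eta}|$ is unaffected by the phase factor,
\begin{equation*}
\|g_I(t)\|_{V\cap L^1_w}\le c_{\lambda}\Bigl[\bigl(4\sqrt{\lambda}(\sigma+1)\sqrt{\omega(t)}\,|\sin(\Theta(t)-\widetilde{\Theta}(t))|+2\sqrt{\lambda}(2\sigma+1)\,|\sqrt{\omega(t)}-\sqrt{\omega_1}|\bigr)|q_{\eta}(t)|+8\pi\sqrt{\lambda}\,\bigl|N(e^{J(\Theta-\widetilde{\Theta})}q_{\eta}(t))\bigr|\Bigr],
\end{equation*}
so the task reduces to bounding the two scalar prefactors and the nonlinear term $|N(\cdot)|$.

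For the prefactors I would convert the hypothesis $M(t_1)\le\rho$ into drift estimates. By the definition \eqref{maggioranti}, $|\dot{\omega}(s)|,|\dot{\gamma}(s)|\le\rho(1+s)^{-3}$ for $s\le t_1$, whence $|\omega(t)-\omega_1|=\bigl|\int_t^{t_1}\dot{\omega}\bigr|\le\tfrac{\rho}{2}(1+t)^{-2}$ and $|\gamma(t)-\gamma_1|\le\tfrac{\rho}{2}(1+t)^{-2}$; since $\Theta(t)-\widetilde{\Theta}(t)=\int_0^t(\omega(s)-\omega_1)\,ds+\gamma(t)-\gamma_1$, one further integration yields $|\Theta(t)-\widetilde{\Theta}(t)|\le\rho$. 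Moreover $\omega(t)$ stays within $\tfrac{\rho}{2}$ of $\omega(0)=\widetilde{\omega}_0$, which by Lemma \ref{dati-iniz} is within $O(d)$ of $\omega_0$; hence for $d$ and $\rho$ small enough $\omega(t)$ remains in a fixed compact subset of $(0,+\infty)$ on which $\sqrt{\cdot}$ is Lipschitz, so $|\sqrt{\omega(t)}-\sqrt{\omega_1}|\le C\rho$. Using $|\sin s|\le|s|$, both prefactors are then $\le C\rho$, and with $C$ depending only on $\lambda,\omega_0,\nu,\sigma$ (in particular not on $t_1$) this part contributes at most $C\rho\,|q_{\eta}(t)|$.

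For the nonlinear term, $|q_{\eta}(t)|=|q_{\chi}(t)|\le\|\chi(t)\|_V$ stays small for $t\le t_1$ by the a priori control on $\chi$ (orbital stability), so, as recorded in the proof of Corollary \ref{stime-modulazione}, $|N(e^{J(\Theta-\widetilde{\Theta})}q_{\eta}(t))|\le C|q_{\eta}(t)|^2$. Adding the three bounds gives $\|g_I(t)\|_{V\cap L^1_w}\le C(|q_{\eta}(t)|^2+\rho|q_{\eta}(t)|)$. The step I expect to be the only delicate one is the passage in the second paragraph: obtaining the $O(\rho)$-smallness of the accumulated phase mismatch $\Theta-\widetilde{\Theta}$ and of the frequency mismatch $\omega-\omega_1$ \emph{uniformly} in $t\le t_1$, and confirming that $\omega(t)$ never exits a fixed compact subset of $(0,+\infty)$, so that all the constants are genuinely independent of $t_1$; the rest is the routine bookkeeping above.
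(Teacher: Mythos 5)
Your proposal is correct and follows essentially the same route as the paper's proof: the same majorant-based estimates $|\sqrt{\omega(t)}-\sqrt{\omega_1}|\leq C\rho$ and $|\Theta(t)-\widetilde{\Theta}(t)|\leq C\rho$ (via the same iterated integrals of $(1+s)^{-3}$), combined with the quadratic bound on $N$ and the observation that $g_I$ is a scalar multiple of the fixed function $G_\lambda\in V\cap L^1_w$. The only difference is that you spell out more carefully why $\omega(t)$ stays in a fixed compact subset of $(0,+\infty)$ and why the coefficient matrices are bounded, details the paper leaves implicit.
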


\begin{proof}
First of all let us notice that it is possible to chose $t_1$ in such a way that $\omega (t) \geq c > 0$ for any $0 \leq t \leq t_1$, then
$$|\sqrt{\omega(t)} -\sqrt{\omega_1}| \leq C |\omega(t) -\omega_1| \leq
C \int_t^{t_1} |\dot{\omega}(s)| ds \leq C \sup_{0 \leq t \leq t_1} \left[
(1+t)^3 |\dot{\omega}(t)| \right]
\int_t^{t_1} (1+s)^{-3} ds \leq C \rho,$$

\noindent and
$$|\Theta(t) -\widetilde{\Theta}(t)| \leq \int_0^t \int_s^{t_1}
|\dot{\omega}(\tau)| d\tau ds + \int_t^{t_1} |\dot{\gamma}(s)| ds \leq C\rho
\int_0^t \int_s^{t^1} (1+\tau)^{-3} d\tau ds +C\rho \int_t^{t_1}
(1+s)^{-3} ds \leq C \rho.$$

\noindent The result follows since
$$\|g_I(t)\|_{V \cap L^1_w} \leq C (|\Theta(t) -\widetilde{\Theta}(t)|
|q_{\eta}(t)| +|\sqrt{\omega(t)} -\sqrt{\omega_1}| |q_{\eta}(t)|
+|q_{\eta}(t)|^2).$$

\end{proof}
We end the section with a technical result that allows to transfer
dispersive estimates on the frozen fluctuating component
$P_c(L_I)\eta=P_c(\omega_1)\eta$ into estimates on $\eta$. This is
needed because $\eta$ appears in the integral Duhamel's equation
where estimates have to be done, but the dispersive behaviour is at
our disposal for $P_c(\omega_1)\eta$. 
This is stated in the following lemma (see for analogous construction, for example, \cite{GS1} and \cite{BKKS}).
\begin{lemma}\label{proiezioneeta}
Let the hypotheses of Proposition \ref{prop-maggioranti} hold true
and suppose that the quantity 
$$\sup_{0 \leq t\leq t_1} (|\omega(t) -\omega_1| +|\Theta(t)
-\widetilde{\Theta}(t)|) = \delta$$

\noindent is sufficiently small; then, for any $t \in [0, t_1]$ there
is a bounded linear operator $\Pi(t): P_c(\omega_1)(V \cap
L^{\infty}_{w^{-1}}) \rightarrow V \cap L^{\infty}_{w^{-1}}$, and a
positive constant $C = C(\delta, \omega_1) > 0$ such that $\eta(t) =
\Pi(t) h(t)$, and
$$C(\delta, \omega_1)^{-1} \| h \|_{V \cap L^{\infty}_{w^{-1}}} \leq
\| \eta \|_{V \cap L^{\infty}_{w^{-1}}} \leq C(\delta, \omega_1) \|
h \|_{V \cap L^{\infty}_{w^{-1}}}.$$

\end{lemma}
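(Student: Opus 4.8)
The plan is to translate the constraint $P_0(\omega(t))\chi(t)=0$, via the reparametrization \eqref{reparametrization}, into an orthogonality condition on $\eta(t)$ against a two-dimensional subspace that is an $O(\delta)$ perturbation of the one attached to the frozen linearization $L_I=L(\omega_1)$, and then to invert the resulting finite-dimensional perturbation. Writing $\theta(t)=\Theta(t)-\widetilde{\Theta}(t)$, one has $\chi(t)=e^{J\theta(t)}\eta(t)$, so by the Hilbert characterization of $P_0\chi=0$ used in Theorem \ref{lemma-modII} (equivalently by Lemma \ref{proiettore}, since $\Omega(f,g)=(Jf,g)_{L^2}$) the condition $P_0(\omega(t))\chi(t)=0$ is equivalent to $\eta(t)\perp z_1(t)$ and $\eta(t)\perp z_2(t)$, where $z_1(t)=e^{-J\theta(t)}\Phi_{\omega(t)}$ and $z_2(t)=Je^{-J\theta(t)}\varphi_{\omega(t)}$, using that $e^{-J\theta}$ is orthogonal and commutes with $J$. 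Smoothness of $\omega\mapsto(\Phi_\omega,\varphi_\omega)$ together with $\|e^{-J\theta}u-u\|\le C|\theta|\,\|u\|$ give $\|z_1(t)-\Phi_{\omega_1}\|_{V\cap L^1_w}+\|z_2(t)-J\varphi_{\omega_1}\|_{V\cap L^1_w}\le C\delta$, uniformly in $t\in[0,t_1]$.

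Next I would set $h(t)=P_c(\omega_1)\eta(t)$ and seek $\eta(t)$ in the form $\eta(t)=h(t)+aJ\Phi_{\omega_1}+b\varphi_{\omega_1}$ with $(a,b)\in\C^2$, which is legitimate because $\mathrm{Ran}\,P_0(\omega_1)=N_g(L(\omega_1))=\mathrm{span}\{J\Phi_{\omega_1},\varphi_{\omega_1}\}$ by Theorem \ref{kergen}. Imposing $(\eta(t),z_j(t))_{L^2}=0$ for $j=1,2$ gives a linear $2\times2$ system $A(t)(a,b)^{\top}=-\big((h(t),z_1(t))_{L^2},(h(t),z_2(t))_{L^2}\big)^{\top}$, with $A(t)$ the matrix of entries $(J\Phi_{\omega_1},z_j(t))_{L^2}$ and $(\varphi_{\omega_1},z_j(t))_{L^2}$. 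At $\delta=0$ one has $z_1=\Phi_{\omega_1}$, $z_2=J\varphi_{\omega_1}$, and using $(Ju,u)_{L^2}=(\varphi_{\omega_1},J\varphi_{\omega_1})_{L^2}=0$ and $(J\Phi_{\omega_1},J\varphi_{\omega_1})_{L^2}=(\Phi_{\omega_1},\varphi_{\omega_1})_{L^2}$ one gets $A(0)=\left(\begin{array}{cc}0 & (\varphi_{\omega_1},\Phi_{\omega_1})_{L^2}\\ (\Phi_{\omega_1},\varphi_{\omega_1})_{L^2} & 0\end{array}\right)$, hence $\det A(0)=-\big(\tfrac{1}{2}\tfrac{d}{d\omega}\|\Phi_\omega\|_{L^2}^2\big)^2=-\Delta^2\neq0$ by the standing assumption $\sigma\in(0,1/\sqrt{2})$. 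Thus, for $\delta$ small enough, $A(t)$ is uniformly invertible on $[0,t_1]$, and $(a,b)=(a(h),b(h))$ is linear in $h$ with $|a(h)|+|b(h)|\le C\|h\|_{L^{\infty}_{w^{-1}}}$, the functionals $h\mapsto(h,z_j(t))_{L^2}$ being bounded on $L^{\infty}_{w^{-1}}$ because $z_j(t)\in L^1_w$. Setting $\Pi(t)h:=h+a(h)J\Phi_{\omega_1}+b(h)\varphi_{\omega_1}$, one checks that $\Pi(t)h$ is the unique point of the affine plane $h+N_g(L(\omega_1))$ lying in the annihilator of $\mathrm{span}\{z_1(t),z_2(t)\}$; since $\eta(t)$ has both properties and $P_c(\omega_1)\eta(t)=h(t)$, this forces $\eta(t)=\Pi(t)h(t)$, and $\Pi(t)$ is bounded linear from $P_c(\omega_1)(V\cap L^{\infty}_{w^{-1}})$ to $V\cap L^{\infty}_{w^{-1}}$.

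The two-sided norm bound then follows: $\|h(t)\|_{V\cap L^{\infty}_{w^{-1}}}\le C\|\eta(t)\|_{V\cap L^{\infty}_{w^{-1}}}$ from boundedness of $P_c(\omega_1)=I-P_0(\omega_1)$ on $V\cap L^{\infty}_{w^{-1}}$, and $\|\eta(t)\|_{V\cap L^{\infty}_{w^{-1}}}\le C\|h(t)\|_{V\cap L^{\infty}_{w^{-1}}}$ from the boundedness of $\Pi(t)$ just obtained, the constants being uniform in $t$ thanks to the uniform invertibility of $A(t)$. Here the specific structure of the model enters: from $P_0(\omega_1)f=\tfrac{1}{\Delta}\Omega(f,\varphi_{\omega_1})J\Phi_{\omega_1}-\tfrac{1}{\Delta}\Omega(f,J\Phi_{\omega_1})\varphi_{\omega_1}$ and $\Omega(f,g)=(Jf,g)_{L^2}$, continuity of $P_0(\omega_1)$ on the weighted space amounts to $w\,\Phi_{\omega_1}\in L^1(\R^3)$, $w\,\varphi_{\omega_1}\in L^1(\R^3)$ and $\Phi_{\omega_1},\varphi_{\omega_1}\in V\cap L^{\infty}_{w^{-1}}$, which all hold since $\Phi_{\omega_1}$ and the singular part of $\varphi_{\omega_1}$ behave like $|x|^{-1}$ near the origin, so $w\,\Phi_{\omega_1}\sim|x|^{-2}\in L^1_{loc}(\R^3)$, and decay exponentially at infinity.

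The main obstacle I expect is precisely this last point: verifying that the symplectic/spectral projectors $P_0(\omega_1)$, $P_c(\omega_1)$ and the pairings $f\mapsto(f,z_j(t))_{L^2}$ are continuous on the weighted spaces $V\cap L^1_w$ and $V\cap L^{\infty}_{w^{-1}}$ despite the Coulomb singularity of the finite-energy states. Once these continuity statements and the non-degeneracy $\det A(0)=-\Delta^2\neq0$ are in place, the construction of $\Pi(t)$ and the equivalence of norms are a routine finite-rank perturbation argument, uniform in $t\in[0,t_1]$.
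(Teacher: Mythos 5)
Your argument is correct and is essentially a fleshed‑out version of the paper's own (sketched) proof: both decompose $\eta=P_0(\omega_1)\eta+P_c(\omega_1)\eta$, convert the constraint $P_0(\omega(t))\chi(t)=0$ into a $2\times 2$ linear system for the coefficients of the discrete component with source depending on $h$, and invert it uniformly in $t$ using the non‑degeneracy $\Delta=\tfrac12\tfrac{d}{d\omega}\|\Phi_\omega\|_{L^2}^2\neq 0$ (guaranteed by $\sigma\in(0,1/\sqrt2)$) together with the $O(\delta)$ perturbation estimate. The only cosmetic slip is writing $(a,b)\in\C^2$ where the coefficients should be real in the real‑Hilbert‑space formalism; your explicit verification that the pairings and projectors are continuous on $V\cap L^1_w$ and $V\cap L^{\infty}_{w^{-1}}$ despite the Coulomb singularity is a detail the paper leaves implicit.
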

\begin{proof}
We give only a sketch of the standard proof, referring for details to
the literature cited above.
Set $\eta(t)=P_0(\omega_1)\eta+P_c(\omega_1)\eta=i
k_1(t)\Phi_{\omega_1} +k_2(t)\frac{d}{d\omega_1}\Phi_{\omega_1} +
h(t)\ .$ The condition $P_0\chi=0\ $ makes time dependent functions
$k_1$ and $k_2$ to satisfy a linear system with a source term
depending on $h$; the coefficient matrix has an inverse uniformly
bounded in $t$ and  $t_1$ thanks to the conditions
$(\Phi_{\omega},\frac{d}{d\omega_1}\Phi_{\omega_1})_{L^2}>{\rm
const}>0$ and
$(\Phi_{\omega_1},\frac{d}{d\omega}\Phi_{\omega})_{L^2}>{\rm const}>0$
valid for $|\omega-\omega_1|$ small enough. This gives a
representation of $k_1$ and $k_2$ in terms of $h$ and as a consequence
the required bound on the finite dimensional component. Now define
$\Pi(t)h(t)=\eta(t)-ik_1\Phi_{\omega_1}
-k_2\frac{d}{d\omega_1}\Phi_{\omega_1}\ $ and the complete bound
follows. 
\end{proof}
\noindent 
\subsection{Proof of Proposition \ref{prop-maggioranti}}	\label{proof-maggioranti}

\noindent \texttt{Estimate of $|\dot{\gamma}| +|\dot{\omega}|$.}

\begin{lemma}
%Let the hypotheses of Proposition \ref{prop-maggioranti} be satisfied and assume that $d = \| \chi_0 \|_{V \cap L^1_w}$ is sufficiently small. Then there is a constant $C > 0$ such that
If $\eta \in V \cap L^{\infty}_{w^{-1}}$ then its charge $q_{\eta}$ satisfies $|q_{\eta}| \leq 4\pi \| \eta \|_{L^{\infty}_{w^{-1}}}.$

\end{lemma}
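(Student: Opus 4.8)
The plan is to recover the charge $q_\eta$ from the spherical means of $\eta$ near the origin and estimate it directly against the weighted sup norm. Using the representation $\eta = \phi + q_\eta G_0$ from \eqref{def-V}, with $\nabla\phi \in L^2(\R^3)$, and writing $\overline{f}(r) = \frac{1}{4\pi r^2}\int_{|x|=r} f\, dS$ for the average of $f$ over the sphere of radius $r$, one has $\overline{G_0}(r) = \frac{1}{4\pi r}$ and hence the exact identity
\[
q_\eta = 4\pi r\,\overline{\eta}(r) - 4\pi r\,\overline{\phi}(r), \qquad r > 0 .
\]
Thus it suffices to show that $r\,\overline{\phi}(r) \to 0$ as $r \to 0^+$ and that $4\pi r\,|\overline{\eta}(r)|$ is controlled by $4\pi(1+r)\,\|\eta\|_{L^{\infty}_{w^{-1}}}$ along a sequence $r \to 0^+$.

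For the first point, set $m(r) = \overline{\phi}(r)$. Since $\phi \in H^1(\{ r_1 < |x| < r_2\})$ for every $0 < r_1 < r_2$ (because $\phi \in L^2_{loc}(\R^3)$ and $\nabla\phi \in L^2(\R^3)$), the function $m$ is absolutely continuous on each such interval, with $m'(r) = \frac{1}{4\pi r^2}\int_{|x|=r}\partial_r\phi\,dS$. By Cauchy--Schwarz, $|m'(r)| \le (4\pi)^{-1/2}\, r^{-1}\, g(r)$, where $g(r)^2 = \int_{|x|=r}|\nabla\phi|^2\,dS$ and $\int_0^\infty g(r)^2\,dr = \|\nabla\phi\|_{L^2}^2$. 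Fixing $r_0$ with $m(r_0)$ finite and integrating, a further application of Cauchy--Schwarz in the radial variable gives $|m(r) - m(r_0)| \le (4\pi)^{-1/2}\,\|\nabla\phi\|_{L^2}\, r^{-1/2}$, whence $r\,|m(r)| \le r\,|m(r_0)| + (4\pi)^{-1/2}\,\|\nabla\phi\|_{L^2}\, r^{1/2} \to 0$ as $r \to 0^+$.

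For the second point, the membership $\eta \in L^{\infty}_{w^{-1}}$ means $|\eta(x)| \le \|\eta\|_{L^{\infty}_{w^{-1}}}\,(1 + |x|^{-1})$ for a.e. $x \in \R^3$; by Fubini in polar coordinates this pointwise bound holds on a.e. sphere $|x| = r$, and for such $r$ one gets $|\overline{\eta}(r)| \le \|\eta\|_{L^{\infty}_{w^{-1}}}\,(1 + r^{-1})$, i.e. $4\pi r\,|\overline{\eta}(r)| \le 4\pi(1+r)\,\|\eta\|_{L^{\infty}_{w^{-1}}}$. Inserting the two bounds into the identity above and letting $r \to 0^+$ along the full-measure set of admissible radii yields $|q_\eta| \le 4\pi\,\|\eta\|_{L^{\infty}_{w^{-1}}}$. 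The only nonroutine ingredient is the Hardy-type estimate showing that the regular part $\phi$, despite failing to be bounded near the origin in three dimensions, contributes nothing to $\lim_{r\to 0^+} r\,\overline{\eta}(r)$; this uses only $\nabla\phi \in L^2(\R^3)$ together with the scaling of the two-dimensional spheres, everything else being elementary.
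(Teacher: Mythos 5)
Your proposal is correct, and the constant comes out right. The paper's own proof is a one--liner: it writes the weighted norm directly from the decomposition $\eta=\phi_\eta+q_\eta G_0$ as
\[
\| \eta \|_{L^{\infty}_{w^{-1}}}=\sup_{x\in\R^3}\left|\tfrac{|x|}{1+|x|}\,\phi_\eta(x)+\tfrac{q_\eta}{4\pi(1+|x|)}\right|\;\ge\;\tfrac{1}{4\pi}|q_\eta|,
\]
i.e.\ it evaluates the sup pointwise near the origin, where the singular part dominates. You reach the same conclusion by a genuinely different technical route: you extract $q_\eta$ from the spherical means, bound $4\pi r\,|\overline{\eta}(r)|$ by $4\pi(1+r)\|\eta\|_{L^\infty_{w^{-1}}}$, and kill the regular part via the Hardy-type estimate $r\,|\overline{\phi}(r)|=O(r^{1/2})$ obtained from $\nabla\phi\in L^2$ and two applications of Cauchy--Schwarz. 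The trade-off is clear: the paper's argument is shorter but tacitly assumes that $\tfrac{|x|}{1+|x|}|\phi_\eta(x)|$ can be made arbitrarily small as $x\to 0$, which is not immediate since the regular part lies only in the homogeneous space $D^1(\R^3)$ and need not be pointwise bounded near the origin (one would have to argue, e.g., via $\phi_\eta\in L^6$, that $|x|\,\phi_\eta(x)\to 0$ along some sequence). Your averaged version supplies precisely the justification the paper leaves implicit, at the cost of a few extra lines; it is the more robust argument of the two.
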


\begin{proof}
Since $\eta \in L^{\infty}_{w^{-1}}(\R^3)$ then
$\| \eta \|_{L^{\infty}_{w^{-1}}}
= \sup_{x \in \R^3} \left| \frac{|x|}{1 +|x|} \phi_{\eta}(x)
+\frac{q_{\eta}}{4\pi (1+|x|)} \right| \geq \frac{1}{4 \pi} |q_{\eta}|
.$

%Since $u(t) = e^{-J \Theta(t)} (\Phi_{\omega(t)} +e^{J(\Theta(t -\widetilde{\Theta(t)})} \eta(t))$ is a solution of equation \eqref{eq1}, we can apply for any $t \leq t_1$ the implicit function theorem to the function $F: \R^2 \times \R^2 \rightarrow \R^2$ defined by
%$$F(q, \phi) = e^{J(\Theta(t) -\widetilde{\Theta}(t))} \phi +\nu (|q_{\omega} +e^{J(\Theta(t) -\widetilde{\Theta}(t))} q|^{2\sigma} (q_{\omega} +e^{J(\Theta(t) -\widetilde{\Theta}(t))} q) -|q_{\omega}|^{2\sigma} q_{\omega}).$$

%\noindent Since $F \in C^1(\R^2 \times \R^2)$, and provided $d$ is small, the following estimate holds true,
%$$|q_{\eta}(t)| \leq C |\phi_{\eta}(t)|, \quad \forall t \leq t_1,$$

%\noindent and hence the lemma.
\end{proof}

\noindent From the last lemma and Corollary \ref{stime-modulazione}
one gets
$$|\dot{\gamma}(t)| +|\dot{\omega}(t)| \leq c |q_{\eta}(t)|^2 \leq
c_1 \| \eta(t) \|_{L^{\infty}_{w^{-1}}}^2 \leq c_1 (1+t)^{-3} M(t)^2,
\quad \forall t \in [0, t_1],$$

\noindent with $c_1$ independent of $t_1$.
Hence, one can choose ${\rho_1}^2 < \frac{1}{4c_1}$ and get $(1+t)^{3}
(|\dot{\gamma}(t)| +|\dot{\omega}(t)|) \leq c_1 \rho^2 
\leq \frac{\rho}{4}, \quad \forall t \in [0, t_1].$

\noindent \texttt{Estimate of $\| \eta \|_{L^{\infty}_{w^{-1}}}$.}

\noindent As explained in the previous section, for any $t \in [0, t_1]$ we have $\eta(t) =
P_0(\omega_1)\eta(t) +P_c(\omega_1)\eta(t)$ (for the definitions of
$P_0$ and $P_c$ see Remark \ref{split}) and thanks to Lemma \ref{proiezioneeta} we have $\eta(t)=\Pi h(t)$ where $\Pi(t): P_c(\omega_1)(V \cap
L^{\infty}_{w^{-1}}) \rightarrow V \cap L^{\infty}_{w^{-1}}$ is bounded.

In order to estimate $\| \eta \|_{L^{\infty}_{w^{-1}}}$ we make use of the equation for $h$. For all $v \in V$, $h$ is a solution to
$$\left( \frac{dh}{dt}, v \right)_{L^2} = Q_{L_I}(h, v) +\left( P_c(\omega_1)
f_I, v \right)_{L^2} +\left( P_c(\omega_1) g_I, g_v G_{\lambda} \right)_{L^2},$$

\noindent where $f_I$ and $g_I$ were defined at the beginning of Section VI B; hence, for any $v \in V$, $h$ satisfies
$$(h, v)_{L^2} = \left(
e^{L_I t} h_0 +\int_0^t e^{L_I (t-s)} P_c(\omega_1) f_I(s) ds, v
\right)_{L^2} +\left( \int_0^t e^{L_I (t-s)} P_c(\omega_1)
g_I(s) ds, q_v G_{\lambda} \right)_{L^2}.$$

\noindent In addition let us assume that $v \in V \cap L^1_{w}$,
hence by H\"{o}lder inequality
$$(h, v)_{L^2} \leq \left( \|e^{L_I t} h_0\|_{V \cap L^{\infty}_{w^{-1}}}
+\left\|\int_0^t e^{L_I (t-s)} P_c(\omega_1) f_I(s) ds\right\|_{V
\cap L^{\infty}_{w^{-1}}} \right) \|v\|_{L^1_w} +$$
$$+\left\|\int_0^t e^{L_I (t-s)} P_c(\omega_1) g_I(s) ds\right\|_{V \cap
L^{\infty}_{w^{-1}}} \|q_v G_{\lambda}\|_{L^1_w}.$$

\noindent Now we can apply the dispersive estimate proved in Theorem
\ref{stima-cont} and get
$$\|e^{L_I t} h_0\|_{V \cap L^{\infty}_{w^{-1}}} \leq c (1+t)^{-3/2}
\|h_0\|_{V \cap L^1_w} \leq c (1+t)^{-3/2} d\ ,$$
where $d$ was defined in the statement of the
present proposition.
\noindent Furthermore, again by Theorem IV.8,
$$\left\|\int_0^t e^{L_I (t-s)} P_c(\omega_1) f_I(s) ds\right\|_{V \cap
L^{\infty}_{w^{-1}}} \leq c \int_0^t (1+ t -s)^{-3/2} \|f_I(s)\|_{V \cap L^1_w}
ds \leq$$
$$\leq c \int_0^t (1+ t -s)^{-3/2} (|\dot{\gamma}(s)| +|\dot{\omega}(s)|) ds
\leq c \int_0^t (1+ t -s)^{-3/2} \|\eta(s)\|_{L^{\infty}_{w^{-1}}}^2 ds.$$

\noindent Analogously, using Lemma VI.3 and Theorem IV.8,
$$\left\|\int_0^t e^{L_I (t-s)} P_c(\omega_1) g_I(s) ds\right\|_{V
\cap L^{\infty}_{w^{-1}}} \leq c \int_0^t (1+ t -s)^{-3/2} \|g_I(s)\|_{V \cap
L^1_w} ds \leq$$
$$\leq c \int_0^t (1+ t -s)^{-3/2} (\|\eta(s)\|_{L^{\infty}_{w^{-1}}}^2
+\rho \|\eta(s)\|_{L^{\infty}_{w^{-1}}}) ds.$$

\noindent Let us define
$$m(t) = \sup_{s \in [0,t]} (1+s)^{3/2}
\|\eta(s)\|_{L^{\infty}_{w^{-1}}}.$$

Now, using the above inequalities, Lemma 1.25, and exploiting the duality pairing defined by the inner
product in $L^2$, it holds
$$(1 +t)^{3/2} \| \eta (t) \|_{L^{\infty}_{w^{-1}}} = (1 +t)^{3/2}
\sup_{0 \neq v \in L^1_w} \frac{(\eta(t), v)_{L^2}}{\| v \|_{L^1_w}}
\leq$$
%$$\leq C \left( \|e^{L_I t} h_0\|_{V \cap L^{\infty}_{w^{-1}}}
%+\left\|\int_0^t e^{L_I (t-s)} P_c(\omega_1) f_I(s) ds\right\|_{V
%\cap L^{\infty}_{w^{-1}}} +\left\|\int_0^t e^{L_I (t-s)}
%P_c(\omega_1) g_I(s) ds\right\|_{V \cap L^{\infty}_{w^{-1}}}
%\right) \leq$$
$$\leq c \int_0^t (1+ t -s)^{-3/2} (\|\eta(s)\|_{L^{\infty}_{w^{-1}}}^2
+\rho \|\eta(s)\|_{L^{\infty}_{w^{-1}}}) ds$$
$$\leq c \left( d +m^2(t) \int_0^t (1+t)^{3/2} (1+s)^{-3}(1+ t -s)^{-3/2} ds +\rho m(t)
\int_0^t (1+t)^{3/2} (1+s)^{-3/2}(1+ t -s)^{-3/2} ds \right).$$

%\noindent Now, taking as the norm of the intersection of two sets the maximum of
%the norms in the two spaces, one has
%$$(V \cap L^1_w)^* \supset V^* \cap L^{\infty}_{w^{-1}},$$

%\noindent and hence, exploiting the duality pairing defined by the inner
%product in $L^2$, it holds
%$$(1+t)^{3/2} \|\eta(s)\|_{L^{\infty}_{w^{-1}}} \leq (1+t)^{3/2}
%\|\eta(s)\|_{V^* \cap L^{\infty}_{w^{-1}}} = (1+t)^{3/2} \sup_{0 \neq v \in V \cap
%L^1_w} \frac{(\eta(s), v)_{L^2}}{\|v\|_{V \cap L^1_w}} \leq$$
%$$\leq c \left( d +m^2(t) \int_0^t (1+t)^{3/2} (1+s)^{-3}(1+ t -s)^{-3/2} ds +\rho m(t)
%\int_0^t (1+t)^{3/2} (1+s)^{-3/2}(1+ t -s)^{-3/2} ds \right).$$

\noindent Observe that the constant $c$ and both integrals appearing
in the last inequality are bounded independently of $t$, and this implies
that for any $t \in [0, t_1]$ we have
$$m(t) \leq c (d +m^2(t_1) +\rho m(t_1)) \leq c (d +\rho_1^2) \leq c_2 d,$$

\noindent provided $d$ and $\rho$ are small enough. Since the
constant $c_2$ does not depend on $t_1$, we can choose $d <
\frac{\rho}{4c_2}$ and finally get
$$m(t_1) \leq \frac{\rho}{4},$$

\noindent concluding the proof of Proposition
\ref{prop-maggioranti}.

\section{Asymptotic stability} \label{stabasi}
Now we are in the position to prove the asymptotic stability result
as stated in the next theorem. Before formulating the result,
let us denote by $U_t$ the integral kernel which defines the
propagator of the free Laplacian in $\R^3$, namely $\displaystyle
U_t(x) = (4\pi it)^{-3/2} e^{i\frac{|x|^2}{4t}}.$

\begin{theo}
Assume $\sigma \in (0, 1/\sqrt{2})$. Let $u \in C(\R^+, V)$ be a
solution to equation \eqref{eq1} with $u(0) = u_0 \in V \cap L^1_w$
and denote $d = \| u_0 -e^{i\theta_0} \Phi_{\omega_0} \|_{V \cap
L^1_w},$ for some $\omega_0 > 0$ and $\theta_0 \in \R$. Then, if $d$
is sufficiently small, the solution $u(t)$ can be decomposed as
follows
\begin{equation}
u(t) = e^{i\omega_{\infty} t} \Phi_{\omega_{\infty}}
+U_t*\psi_{\infty} +r_{\infty}(t), %\quad \textrm{as} \;\; t \rightarrow +\infty,
\end{equation}

\noindent where $\omega_{\infty} > 0$ and $\psi_{\infty}$,
$r_{\infty}(t) \in L^2(\R^3)$, with $\| r_{\infty}(t) \|_{L^2} =
O(t^{-5/4})$ as $t \rightarrow +\infty.$

\end{theo}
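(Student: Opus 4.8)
The plan is to upgrade the decay estimates of Theorem~\ref{teo-maggioranti} into a genuine soliton resolution, along the lines of \cite{BP1,BP2,BS}. \emph{Step 1 (convergence of the modulation parameters).} By Theorem~\ref{teo-maggioranti} one has $|\dot\omega(t)|+|\dot\gamma(t)|\le\overline M(1+t)^{-3}$, which is integrable on $[0,+\infty)$; hence $\omega(t)\to\omega_\infty$ and $\gamma(t)\to\gamma_\infty$, with $|\omega(t)-\omega_\infty|\le C(1+t)^{-2}$ and $|\gamma(t)-\gamma_\infty|\le C(1+t)^{-2}$. Since $d$ is small and $\omega_0>0$, we get $\omega_\infty>0$. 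Writing $\Theta(t)=\omega_\infty t+\int_0^t(\omega(s)-\omega_\infty)\,ds+\gamma(t)$ and using $\int_0^\infty|\omega(s)-\omega_\infty|\,ds<\infty$, we obtain $\Theta(t)-\omega_\infty t\to\theta_\infty$ with $|\Theta(t)-\omega_\infty t-\theta_\infty|\le C(1+t)^{-1}$; using the phase freedom left in \eqref{fiom} we absorb $e^{i\theta_\infty}$ into $\Phi_{\omega_\infty}$, so that the modulated soliton $e^{i\Theta(t)}\Phi_{\omega(t)}$ converges, in $V$, to $e^{i\omega_\infty t}\Phi_{\omega_\infty}$, and only the dispersive part $e^{i\Theta(t)}\chi(t)$ remains to be analysed.

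\emph{Step 2 (freezing at $\omega_\infty$ and Duhamel representation).} We apply the reparametrization of Section~\ref{frozen} with the \emph{limiting} values, i.e.\ we set $e^{-J\Theta(t)}\chi(t)=e^{-J(\omega_\infty t+\theta_\infty)}\eta(t)$, so that $\eta$ solves the frozen variational equation \eqref{eq-congelata} with $L_I=L(\omega_\infty)$. By the rates of Step~1 together with $|q_\eta(t)|\le C(1+t)^{-3/2}$ (from $\|\chi\|_{L^\infty_{w^{-1}}}\le\overline M(1+t)^{-3/2}$), the source $N_I$ splits into a quadratic nonlinear part of size $O((1+t)^{-3})$ and ``dragging'' reparametrization terms of size $O((1+t)^{-5/2})$, so the associated $f_I,g_I$ satisfy $\|f_I(t)\|_{V\cap L^1_w}+\|g_I(t)\|_{V\cap L^1_w}\le C(1+t)^{-5/2}$ and, since they are built from $\Phi_\omega$, $\varphi_\omega$, $G_\lambda\in L^2$, also $\|f_I(t)\|_{L^2}+\|g_I(t)\|_{L^2}\le C(1+t)^{-5/2}$. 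Decomposing $\eta=P_0(\omega_\infty)\eta+h$ with $h=P_c(\omega_\infty)\eta$ and using Lemma~\ref{proiezioneeta} to pass between $\eta$ and $h$, then proceeding exactly as in Section~\ref{duhamel} (pairing separately with the regular part $\phi_v^\lambda$ and the singular part $q_vG_\lambda$ of the test function, since no strong formulation is available), we obtain the weak Duhamel identity expressing $(h(t),v)_{L^2}$ in terms of $e^{L_It}h_0$ and of $\int_0^t e^{L_I(t-s)}P_c(\omega_\infty)(f_I+g_I)(s)\,ds$.

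\emph{Step 3 (extraction of the free asymptotic state).} By Theorem~\ref{ris}, on the continuous subspace the resolvent of $L_I$ is the sum of a pure convolution term whose inverse Laplace transform is exactly the free propagator $U_t$ twisted by $e^{i\omega_\infty t}$ (this is the origin of the term $\tfrac{1+i}{16\sqrt\pi}t^{-3/2}e^{i|x-y|^2/4t}$ in the proof of Theorem~\ref{stima-cont}), plus a finite-rank term built from the functions $G_{\omega\pm i\lambda}$; equivalently, $e^{-L_It}P_c(\omega_\infty)$ intertwines with the free evolution through wave operators $W^\pm$ for the point interactions $H_{\alpha_j}$. We set $\psi_\infty:=(W^+)^*h_0$, which lies in $L^2$ and inherits the localization of $u_0\in L^1_w$, and show, by a Cauchy argument on $U_{-t}*(\text{free-wave component of }h(t))$, that this component converges to $U_t*\psi_\infty$: differentiating in $t$ and inserting the Duhamel identity, $\tfrac{d}{dt}$ of $U_{-t}*(\cdot)$ is the free propagator applied to the $L^2$-summable sources $f_I,g_I$ plus the time derivative of the finite-rank correction, and both are integrable in $t$ thanks to Theorem~\ref{stima-cont}; moreover $\int_t^\infty\|f_I(s)\|_{L^2}+\|g_I(s)\|_{L^2}\,ds=O((1+t)^{-3/2})$.

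\emph{Step 4 (assembling) and the main obstacle.} Undoing the reparametrizations and collecting terms, $u(t)=e^{i\omega_\infty t}\Phi_{\omega_\infty}+U_t*\psi_\infty+r_\infty(t)$, where $r_\infty(t)$ collects: the $V$-convergence of the modulated soliton (Step~1), the small component $P_0(\omega_\infty)\eta(t)$ (controlled via $\|\chi\|_{L^\infty_{w^{-1}}}$ and Lemma~\ref{proiezioneeta}), the nonlinear Duhamel integral (whose $L^2$ tail is $O((1+t)^{-3/2})$ after subtracting its own free-wave part), and $(W^+-I)$ applied to the free wave, whose $L^2$ decay follows by interpolating the weighted $t^{-3/2}$ bound of Theorem~\ref{stima-cont} against $L^2$ boundedness; tracking the slowest contribution gives $\|r_\infty(t)\|_{L^2}=O(t^{-5/4})$. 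The delicate point is Step~3: since the frozen equation is available only in quadratic-form (weak) form — the operator domains move with $t$ — every manipulation isolating the genuinely non-$L^2$-decaying free-wave component of $e^{-L_It}P_c$ from the point-interaction corrections must be done on the dual side, and pinning down the sharp $L^2$ rate for $r_\infty$ requires combining the weighted dispersive estimate of Theorem~\ref{stima-cont} with boundedness and local-smoothing properties of the wave operators associated with $L$ on its continuous subspace.
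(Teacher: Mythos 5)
Your proposal diverges from the paper at the crucial point and, as you yourself half-concede, leaves the divergent step unproved. The paper does \emph{not} extract the free wave from the linearized flow $e^{-L_I t}P_c$: in Section \ref{stabasi} it abandons the frozen-linearization machinery of Section VI altogether and works with $z=u-s$, where $s(t)=e^{i\Theta(t)}\Phi_{\omega(t)}$ is the modulated soliton. Because the nonlinearity (and the whole point interaction) lives only at the origin, $z$ satisfies the \emph{free} Schr\"odinger equation with a point source of strength $q_z(\tau)=q_u(\tau)-q_s(\tau)$ plus the modulation source $f(s)=\dot\gamma s-i\dot\omega\,ds/d\omega$, so it admits the explicit representation
\begin{equation*}
z(t,x)=U_t*z_0(x)+i\int_0^t U_{t-\tau}(x)\,q_z(\tau)\,d\tau-i\int_0^t U_{t-\tau}*f(s(\tau))\,d\tau,
\end{equation*}
in which only the free propagator appears. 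The asymptotic state $\psi_\infty$ and the remainder then come from splitting each Duhamel integral as $U_t*\int_0^\infty(\cdot)-\int_t^\infty(\cdot)$, and the rate $t^{-5/4}$ is obtained by an explicit radial Plancherel computation on the tail $\int_t^\infty U_{t-\tau}(x)q_z(\tau)d\tau$ of the point-source term, using only $|q_z(\tau)|\lesssim(1+\tau)^{-3/2}$ from Theorem \ref{teo-maggioranti}. No wave operators and no spectral decomposition of $L$ are needed at this stage.

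Your Step 3 is where the gap lies. You assert that $e^{-L_I t}P_c(\omega_\infty)$ intertwines with the free evolution through wave operators ``for the point interactions $H_{\alpha_j}$,'' and you define $\psi_\infty=(W^+)^*h_0$. But $L$ is a non-selfadjoint matrix operator coupling $L_1$ and $L_2$; the existence, completeness and $L^2$- (let alone weighted-$L^p$-) boundedness of wave operators comparing $e^{-Lt}P_c$ with the free flow is a substantive scattering-theoretic statement that is established nowhere in the paper and does not follow from Theorem \ref{ris} or Theorem \ref{stima-cont} (which give only a weighted $L^1_w\to L^\infty_{w^{-1}}$ decay bound, not an intertwining identity). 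The subsequent claims --- that $(W^+-I)$ applied to the free wave decays in $L^2$ ``by interpolation,'' and that ``tracking the slowest contribution gives $O(t^{-5/4})$'' --- are not computations: the exponent $5/4$ is the quantitative content of the theorem and in the paper it emerges from a specific integral, $\int_0^\infty u^{3/2}((1+t)u+1)^{-3}du\lesssim(1+t)^{-5/2}$, attached to the point-source structure. Your Steps 1, 2 and the bookkeeping in Step 4 are fine and consistent with the paper's preliminaries, but without either proving the wave-operator input or switching to the paper's free-propagator representation of $u-s$, the argument does not close.
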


\begin{proof}
Along the proof we assume that $P_0 (u_0 -e^{i\theta_0}
\Phi_{\omega_0}) = 0$, and we recall from Lemma \ref{dati-iniz} that
there is no loss of generality in this choice. First of all let us
notice that Theorem \ref{teo-maggioranti} implies $\omega(t)
\rightarrow \omega_{\infty},$ and $\Theta(t) -\omega_{\infty}t
\rightarrow 0$, as $t \rightarrow +\infty$.
Next, let us define the modulated soliton as
$$s(t,x) = e^{i\Theta(t)} \Phi_{\omega(t)}(x),$$

\noindent and the function
\begin{equation}    \label{z}
z(t,x) = u(t,x) -s(t,x).
\end{equation}

\noindent By equation \eqref{eq1} and \eqref{gs_eq} %it follows that the function $z(t)$ is a weak solution to
%$$i\frac{dz}{dt} = H_{\alpha}u -H_{\alpha}s +\dot{\gamma}s-i\dot{\omega} \frac{ds}{d\omega}.$$
%and exploiting the variational formulation of the former equation
one has that, for any $v \in V$, $z(t)$ is also a solution to
$$\left( i\frac{dz}{dt}, v \right)_{L^2} = \Re \int_{\R^3} \nabla \phi_z \cdot \overline{\nabla \phi_v} dx -\nu \Re (
(|q_u|^{2\sigma}q_u -|q_s|^{2\sigma}q_s) \overline{q_v}) +\left(
\dot{\gamma}s -i\dot{\omega} \frac{ds}{d\omega}, v \right)_{L^2}.$$

\noindent As one can verify by direct differentiation, the
solution of the last equation can be expressed as
\begin{equation}    \label{z-ansatz}
z(t,x) = U_t*z_0(x) +i\int_0^t U_{t-\tau}(x) q_z(\tau) d\tau
-i\int_0^t U_{t-\tau}*f(s(\tau)) d\tau,
\end{equation}

\noindent where we denoted $f(s) = \dot{\gamma}s -i\dot{\omega}
\frac{ds}{d\omega}$ and, according to \eqref{z}, $q_z(t) = q_u(t)
-q_s(t)$. Let us consider the last integral in formula
\eqref{z-ansatz}
$$\int_0^t U_{t-\tau}*f(s(\tau)) d\tau = U_t*\int_0^{\infty}
U_{-\tau}*f(s(\tau)) d\tau -\int_t^{\infty} U_{t-\tau}*f(s(\tau))
d\tau,$$

\noindent and note that the regularity of $s(t,x)$ implies
$\psi_1(x) = \int_0^{\infty} U_{-\tau}*f(s(\tau)) d\tau \in
L^2(\R^3)$, and $r_1(t,x) = -\int_t^{\infty} U_{t-\tau}*f(s(\tau))
d\tau \in L^2(\R^3)$. Moreover, from Theorem \ref{teo-maggioranti}
and the unitarity of the evolution group of the free Laplacian we
have $\|r_1(t)\|_{L^2} = O(t^{-2})$, $t \rightarrow +\infty.$

\noindent To conclude the proof it is left to prove a similar
asymptotic decomposition for the first integral in the formula
\eqref{z-ansatz}. As before, one can write
$$\int_0^t U_{t-\tau}(x) q_z(\tau) d\tau %= \int_0^t U_{t-\tau} * \delta q_z(\tau) d\tau
=  U_t * \int_0^{\infty} U_{-\tau}(x)q_z(\tau) d\tau
-\int_t^{\infty} U_{t-\tau}(x)q_z(\tau) d\tau.$$

\noindent First of all one needs to show that $\psi_0(x) =
\int_0^{\infty} U_{-\tau}(x)q_z(\tau) d\tau$ belongs to
$L^2(\R^3)$. To this aim,
let us observe that $\psi_0(x) =
\frac{1}{(4\pi i)^{3/2}} h\left(\frac{r^2}{4}\right)$, with $h(y) =
\int_0^{\infty} e^{-i y/ \tau} \tau^{-3/2} q_z\left( \tau \right)
d\tau$, hence
$$\|\psi_0\|_{L^2}^2 = \frac{1}{(4\pi)^2} \int_0^{\infty} \left| h\left(\frac{r^2}{4}\right)
\right|^2 r^2 dr = \frac{1}{(2\pi)^2} \int_0^{\infty} |h(y)|^2
\sqrt{y} dy.$$

\noindent From the first and the last terms one gets $\psi_0 \in
L^2(\R^3)$ if and only if $h \in L^2(\R^+, \sqrt{y}dy).$ On the
other hand, one can perform the change of variable $u =
\frac{1}{\tau}$ in the integral function $h$ and get
$$h(y) = \int_0^{\infty} e^{-i y u} \frac{1}{\sqrt{u}} q_z\left( \frac{1}{u} \right) du
= \int_0^{\infty} e^{-i y u} \frac{1}{u} q_z\left(
\frac{1}{u} \right) \sqrt{u} du,$$

\noindent where we set $y = \frac{|x|^2}{4}$. Then $\widehat{h}(u) =
\frac{1}{u} q_z\left( \frac{1}{u} \right)$. Moreover, by Theorem
\ref{teo-maggioranti}, $\left| \frac{1}{u} q_z \left( \frac{1}{u}
\right) \right|^2 \sqrt{u} \leq \frac{u^{3/2}}{(1+u)^3}$ then
$\widehat{h} \in L^2(\R^+, \sqrt{u}du)$ and hence, by
Plancherel's identity $h \in L^2(\R^+, \sqrt{y}dy)$.

\noindent Finally, let us denote $r_0 = \int_t^{\infty}
U_{t-\tau}(x)q_z(\tau) d\tau$. As before, we have $r_0(x) = g \left(
\frac{r^2}{4} \right)$, with $g(y) = \int_0^{\infty} e^{-i y/(t-
\tau)} (t-\tau)^{-3/2} q_z\left( \tau \right) d\tau$. Moreover, we
can set $y = \frac{|x|^2}{4}$ exploit the change of variables $u =
-\frac{1}{t-\tau}$ in order to get
$$g(y) = \int_0^{\infty} e^{-iyu} \frac{i}{u} q_z\left( t+\frac{1}{u}
\right) \sqrt{u} du.$$

\noindent Again, Theorem \ref{teo-maggioranti} implies that
$\widehat{g}(u) = \frac{i}{u} q_z\left( t+\frac{1}{u} \right) \in
L^2(\R^+, \sqrt{u} du)$, for any $t \geq 0$. In particular,
$$\|g\|^2_{L^2(\R^+, \sqrt{u} du)} \leq \tilde{c} \int_0^{\infty}
\frac{u^{3/2}}{((1+t)u+1)^3} du \leq c (1+t)^{-5/2},$$

\noindent for any $t \geq 0$, with $\tilde{c}$, $c > 0$ independent
of time. Summing up, Plancherel's identity allows us to conclude
$\|r_0\|_{L^2} = O(t^{-5/4})$ as $t \rightarrow +\infty.$

\noindent Hence the theorem follows with $\psi_{\infty} = z_0
+\psi_0 +\psi_1,$ and $r_{\infty} = r_0 +r_1.$
\end{proof}

\appendix
\section{The generalized kernel of the operator $L$} \label{nucleo}
The aim of this appendix is to provide the proof or Theorem \ref{kergen}.
\begin{proof}
It is easy to see that $c\Phi_{\omega}$, with $c \in
\C$, is the unique family of distributional solutions to the
equation
$$-\triangle u +\omega u = 0.$$

\noindent Furthermore, $\Phi_{\omega}$ belongs to
$D(H_{\alpha_2})$ but not to $D(H_{\alpha_1})$ since the boundary
condition is not satisfied. Hence
$$\ker(L) = \textrm{span} \left\{ \left(
                   \begin{array}{ll}
                     0\\
                     \Phi_{\omega}
                   \end{array}
                 \right)
          \right\}.$$

\noindent Let us now consider the operator
$$L^2 = \left[
  \begin{array}{cc}
    -L_2 L_1 & 0 \\
    0 & -L_1 L_2 \\
  \end{array}
\right].$$

\noindent Since the operator $L_1$ is invertible, the following
holds
$$u \in \ker(L_1 L_2) \Leftrightarrow u \in \ker(L_2), \ \ {\mbox{then}}
\ \
\ker(L_1 L_2) = {\mbox{span}} \{ \Phi_{\omega} \},$$
$$u \in \ker(L_2 L_1)
\Leftrightarrow \exists u \in D(H_{\alpha_1}) \quad
\textrm{such that } \ L_1 u = \Phi_{\omega}.$$

\noindent Solving the former equation one gets that $\ker(L_1 L_2) =
$ span $\{ \varphi_{\omega} \}$. From this follows
$$\ker(L^2) = \textrm{span} \left\{ \left(
                   \begin{array}{ll}
                     0\\
                     \Phi_{\omega}
                   \end{array}
                 \right),
                    \left(
                   \begin{array}{ll}
                     \varphi_{\omega}\\
                     0
                   \end{array}
                 \right)
          \right\}.$$

\noindent The operator $L^3$ has the following form
$$L^3 = \left[
  \begin{array}{cc}
    0 & -L_2 L_1 L_2 \\
    L_1 L_2 L_1 & 0 \\
  \end{array}
\right].$$

\noindent As before
$$u \in \ker(L_1 L_2 L_1) \Leftrightarrow L_1 u \in \ker(L_1 L_2) =
  {\textrm{span}} \ \{ \Phi_{\omega} \} \Leftrightarrow \ker(L_1 L_2 L_1) =
   {\textrm{span}} \ \{
\varphi_{\omega} \},$$
$$u \in \ker(L_2 L_1 L_2) \Leftrightarrow u \in \ker(L_2) =
       {\textrm{span}} \ \{
\Phi_{\omega} \} \quad \textrm{or} \quad L_2 u \in \ker(L_2 L_1) =
  {\textrm{span}} \ \{ \varphi_{\omega} \}.$$

\noindent Let us notice that the equation
$$-\triangle u + \omega u = \varphi_{\omega}$$

\noindent has a unique family of distributional solutions given by
$$u(x) = \left( \frac{\sqrt{\omega}}{4\pi \nu}
\right)^{\frac{1}{2\sigma}} \left[ \left( \frac{c_2}{2\sqrt{\omega}}
+\frac{1}{16\sigma^2 \omega^2} \right) \frac{e^{-\sqrt{\omega}
|x|}}{4\pi |x|} +\frac{c_1}{2\sqrt{\omega}} \frac{e^{\sqrt{\omega}
|x|}}{4\pi |x|} +\right.$$
$$\left. -\frac{1}{8\omega} |x| \frac{e^{-\sqrt{\omega} |x|}}{4\pi}
+\left( \frac{1}{8\sigma \omega^{\frac{3}{2}}}
-\frac{1}{8\omega^{\frac{3}{2}}} \right) \frac{e^{-\sqrt{\omega}
|x|}}{4\pi} \right].$$

\noindent Notice that one must impose that $u$ belongs to
$D(H_{\alpha_2})$ which
means that $u \in L^2(\R^3)$ and satisfies the boundary condition.
This is equivalent to ask the following algebraic conditions to be
verified
$$\left\{
  \begin{array}{ll}
    c_1 = 0\\
    c_2 = \frac{\sigma -1}{8\sigma \omega^{\frac{3}{2}}}.
  \end{array}
\right.$$

\noindent Therefore, if $\sigma \neq 1$, then $\ker(L_2 L_1 L_2) =
          {\textrm{span}} \ \{
\Phi_{\omega} \}$. Hence
$$\ker(L^3) = \ker(L^2),$$

\noindent which concludes the first part of the theorem.

\noindent In the case $\sigma = 1$ we get $\ker (L_2 L_1 L_2)
= \{ \Phi_{\omega}, g_{\omega} \}$, then
$$\ker(L^3) = \textrm{span} \left\{ \left(
                   \begin{array}{ll}
                     0\\
                     \Phi_{\omega}
                   \end{array}
                 \right),
                \left(
                   \begin{array}{ll}
                     \varphi_{\omega}\\
                     0
                   \end{array}
                 \right),
                \left(
                   \begin{array}{ll}
                     0\\
                     g_{\omega}
                   \end{array}
                 \right)
 \right\}.$$

\noindent With analogous computations one can prove that
$$\ker(L^4) = \ker(L^5) = \textrm{span} \left\{ \left(
                   \begin{array}{ll}
                     0\\
                     \Phi_{\omega}
                   \end{array}
                 \right),
                \left(
                   \begin{array}{ll}
                     \varphi_{\omega}\\
                     0
                   \end{array}
                 \right),
                \left(
                   \begin{array}{ll}
                     0\\
                     g_{\omega}
                   \end{array}
                \right),
                \left(
                   \begin{array}{ll}
                     h_{\omega}\\
                     0
                   \end{array}
                 \right)
 \right\},$$

\noindent which concludes the proof.
\end{proof}

\section{Proof of the resolvent formula} \label{risolvente}
In this appendix we prove that the operator $(L -\lambda I)^{-1}$ is
given by
$$R(\lambda) = \left[
  \begin{array}{cc}
    -\lambda (\lambda^2 +L_2 L_1)^{-1} & -L_2 (\lambda^2 +L_1 L_2)^{-1}\\
    L_1 (\lambda^2 +L_2 L_1)^{-1} & -\lambda (\lambda^2 +L_1 L_2)^{-1}\\
  \end{array}
\right]$$

\noindent for the resolvent of the linear operator $L$. More precisely, we
prove the following proposition.

\begin{prop}
If $\lambda \in \C \setminus \sigma(L)$, then
$R(\lambda) (L -\lambda I) u = u, \quad \forall u \in D(L),$
and
$(L -\lambda I) R(\lambda) f = f$ for  any $f \in (L^2(\R^3))^2.$

\end{prop}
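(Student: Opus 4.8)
The plan is to reduce the statement to the single identity $(L-\lambda I)R(\lambda)=I$ on $(L^2(\R^3))^2$, and then to obtain the companion identity $R(\lambda)(L-\lambda I)=I$ on $D(L)$ abstractly.

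First I would check that $R(\lambda)$ is a well-defined bounded operator on $(L^2(\R^3))^2$ whenever $\lambda\notin\sigma(L)$. Since $L=JD$ with $D$ selfadjoint and $J$ boundedly invertible, $L$ is closed; moreover $\sigma_3L\sigma_3=-L$ with $\sigma_3=\mathrm{diag}(1,-1)$, so $\sigma(L)=-\sigma(L)$ and hence $-\lambda\notin\sigma(L)$ as well. Because $L^2-\lambda^2I=(L-\lambda I)(L+\lambda I)=\mathrm{diag}(-(\lambda^2+L_2L_1),-(\lambda^2+L_1L_2))$, the invertibility of $L\pm\lambda I$ forces $\lambda^2+L_2L_1$ and $\lambda^2+L_1L_2$ to be boundedly invertible, with inverses given explicitly in Lemma \ref{045} and Remark \ref{046}. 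Since $(\lambda^2+L_2L_1)^{-1}$ has range $D(L_2L_1)\subset D(L_1)$ and $L_1$ is closed, $L_1(\lambda^2+L_2L_1)^{-1}$ is everywhere defined and closed, hence bounded; similarly for $L_2(\lambda^2+L_1L_2)^{-1}$. So all four entries of $R(\lambda)$ are bounded.

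Next I would check that $R(\lambda)$ maps into $D(L)=D(L_1)\times D(L_2)$. Writing $A=(\lambda^2+L_2L_1)^{-1}$, $B=(\lambda^2+L_1L_2)^{-1}$ and $f=(f_1,f_2)$, the first component of $R(\lambda)f$ is $-\lambda Af_1-L_2Bf_2$; here $Af_1\in D(L_2L_1)\subset D(L_1)$, while $Bf_2\in D(L_1L_2)$, and the definition $D(L_1L_2)=\{h\in D(L_2):L_2h\in D(L_1)\}$ gives $L_2Bf_2\in D(L_1)$, so the first component lies in $D(L_1)$. Symmetrically, using $D(L_2L_1)=\{g\in D(L_1):L_1g\in D(L_2)\}$ and $D(L_1L_2)\subset D(L_2)$, the second component $L_1Af_1-\lambda Bf_2$ lies in $D(L_2)$. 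Then $(L-\lambda I)R(\lambda)f=f$ is purely algebraic: the $(1,1)$ and $(2,2)$ blocks collapse to $(\lambda^2+L_2L_1)A=I$ and $(\lambda^2+L_1L_2)B=I$, and the two off-diagonal blocks cancel identically.

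Finally, since $\lambda\notin\sigma(L)$ the bounded operator $(L-\lambda I)^{-1}$ exists; applying it to $(L-\lambda I)R(\lambda)f=f$ and using $R(\lambda)f\in D(L)$ gives $R(\lambda)f=(L-\lambda I)^{-1}f$ for every $f$, which yields both claimed identities at once. The only genuinely delicate point is the domain bookkeeping of the previous paragraph; a direct verification of $R(\lambda)(L-\lambda I)u=u$ without this abstract shortcut would instead require the intertwining relations $L_1(\lambda^2+L_2L_1)^{-1}=(\lambda^2+L_1L_2)^{-1}L_1$ and $L_2(\lambda^2+L_1L_2)^{-1}=(\lambda^2+L_2L_1)^{-1}L_2$ together with the handling of triple products such as $L_2L_1L_2$, which is exactly the obstacle the factorization $L^2-\lambda^2I=(L-\lambda I)(L+\lambda I)$ is designed to circumvent.
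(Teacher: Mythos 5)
Your proof is correct, but it takes a genuinely different route from the paper's for the harder half of the statement. The verification of $(L-\lambda I)R(\lambda)f=f$ is essentially identical to the paper's Step II (block multiplication plus the domain bookkeeping $Af_1\in D(L_2L_1)$, $Bf_2\in D(L_1L_2)$), and your preliminary check that $R(\lambda)$ maps into $D(L_1)\times D(L_2)$ is exactly the point that makes that computation legitimate. Where you diverge is the identity $R(\lambda)(L-\lambda I)u=u$: the paper proves it by direct computation, which forces it to establish the commutation identities of Lemma \ref{lemma-commut}, to split $u_2=a\Phi_\omega+g_2$ against $\ker L_2=\mathrm{span}\{\Phi_\omega\}$, and to work with the restricted operator $\widetilde{L_2}$; you instead observe that the hypothesis $\lambda\notin\sigma(L)$ already gives a bounded two-sided inverse $(L-\lambda I)^{-1}$, so the right-inverse identity together with $\mathrm{Ran}\,R(\lambda)\subset D(L)$ forces $R(\lambda)=(L-\lambda I)^{-1}$, and the left-inverse identity is free. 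This is cleaner and eliminates the most delicate part of the paper's argument. The trade-off is that your shortcut leans entirely on the standing hypothesis $\lambda\notin\sigma(L)$, whereas the paper's computational proof really only uses the invertibility of $L_1$, $\widetilde{L_2}$ and of $\lambda^2+L_jL_k$; since the whole point of Theorem \ref{ris} is to \emph{read off} $\sigma(L)$ from the poles of the explicit formula, a version of the proposition valid wherever $W(\lambda^2)\neq 0$ (rather than wherever $\lambda\notin\sigma(L)$) is what one ultimately wants, and there the paper's direct verification retains an advantage. As stated, though, the proposition assumes $\lambda\in\C\setminus\sigma(L)$, and under that assumption your argument is complete; your justification that the diagonal blocks $\lambda^2+L_2L_1$ and $\lambda^2+L_1L_2$ are boundedly invertible, via $\sigma_3L\sigma_3=-L$ and the factorization $L^2-\lambda^2I=(L-\lambda I)(L+\lambda I)$ on $D(L^2)$, is also sound (and independently confirmed by the explicit inverses of Lemma \ref{045} and Remark \ref{046}).
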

\noindent Before proving the former proposition, let us prove the following
lemma.

\begin{lemma} \label{lemma-commut}
For any $\lambda \in \C \setminus \sigma(L)$ the following identities hold

\begin{enumerate}
\item $(\lambda ^2 +L_2 L_1)^{-1} L_1^{-1} = L_1^{-1} (\lambda ^2 +L_1
L_2)^{-1}$, \label{lemma1}
\item $(\lambda ^2 +L_1 L_2)^{-1} = (\lambda^2 L_1^{-1} +L_2)^{-1} L_1^{-1}$,
\label{lemma2}
\item $(\lambda^2 +L_1 \widetilde{L_2})^{-1} \widetilde{L_2}^{-1} =
\widetilde{L_2}^{-1} (\lambda^2 +\widetilde{L_2} L_1)^{-1}$, \label{lemma3}
\end{enumerate}

\noindent where $\widetilde{L_2}$ is the restriction of the operator $L_2$ to
the projection of its domain onto the subspace of $L^2(\R^3)$
associated to the continuous spectrum of $L_2$.
\end{lemma}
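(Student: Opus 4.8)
The three identities share a common template: each is an \emph{intertwining relation} of the schematic form $(\lambda^{2}+BA)^{-1}A^{-1}=A^{-1}(\lambda^{2}+AB)^{-1}$ — with $A=L_{1}$, $B=L_{2}$ in \eqref{lemma1}, with $A=\widetilde{L_{2}}$, $B=L_{1}$ in \eqref{lemma3}, and a minor variant in \eqref{lemma2} — which at the purely algebraic level would follow from $A(\lambda^{2}+BA)=(\lambda^{2}+AB)A$ by inverting both sides. The plan is to establish each one by applying both members to an arbitrary $f\in L^{2}(\R^{3})$ and chasing domains, exploiting two structural facts. First, $L_{1}=H_{\alpha_{1}}+\omega$ is boundedly invertible on all of $L^{2}(\R^{3})$, since by Lemma \ref{ass3} its spectrum $\{-4\sigma(\sigma+1)\omega\}\cup[\omega,+\infty)$ avoids the origin. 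Second, $\widetilde{L_{2}}$ is boundedly invertible on the subspace on which it is defined, because by construction it is the restriction of the self-adjoint operator $L_{2}$ after deletion of its kernel $\textrm{span}\{\Phi_{\omega}\}$, so that its spectrum is $[\omega,+\infty)$. Finally, by Lemma \ref{045} and Remark \ref{046} the operators $\lambda^{2}+L_{2}L_{1}=\mathcal{H}_{\lambda^{2}}^{21}$ and $\lambda^{2}+L_{1}L_{2}=\mathcal{H}_{\lambda^{2}}^{12}$ are invertible, so all the inverses appearing in the statement are meaningful.

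For the first identity, fix $f\in L^{2}(\R^{3})$ and set $h=(\lambda^{2}+L_{1}L_{2})^{-1}f$, so that $h\in D(L_{1}L_{2})\subseteq D(L_{2})$ and $(\lambda^{2}+L_{1}L_{2})h=f$. Then $L_{1}^{-1}h\in D(L_{2}L_{1})$: indeed $L_{1}^{-1}h\in D(L_{1})$ automatically, and $L_{1}(L_{1}^{-1}h)=h\in D(L_{2})$. Since moreover $L_{2}h\in D(L_{1})$, one may insert $L_{1}^{-1}L_{1}=\textrm{id}$ in front of $L_{2}h$ and compute
$$(\lambda^{2}+L_{2}L_{1})(L_{1}^{-1}h)=\lambda^{2}L_{1}^{-1}h+L_{2}h=L_{1}^{-1}\bigl(\lambda^{2}h+L_{1}L_{2}h\bigr)=L_{1}^{-1}f .$$
Applying $(\lambda^{2}+L_{2}L_{1})^{-1}$ yields $L_{1}^{-1}h=(\lambda^{2}+L_{2}L_{1})^{-1}L_{1}^{-1}f$, which is \eqref{lemma1} evaluated on $f$.

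Identity \eqref{lemma3} will be obtained by running exactly this argument with the pair $(L_{1},L_{2})$ replaced by $(\widetilde{L_{2}},L_{1})$: one uses the bounded invertibility of $\widetilde{L_{2}}$ just noted, together with the invertibility of $\lambda^{2}+L_{1}\widetilde{L_{2}}$ and $\lambda^{2}+\widetilde{L_{2}}L_{1}$, which follows by restricting the inverses of $\mathcal{H}_{\lambda^{2}}^{12}$ and $\mathcal{H}_{\lambda^{2}}^{21}$ to the continuous subspace of $L_{2}$ and verifying that this subspace is left invariant by the relevant compositions. For \eqref{lemma2}, the operator $\lambda^{2}L_{1}^{-1}+L_{2}$ has domain $D(L_{2})$ and agrees with $L_{1}^{-1}(\lambda^{2}+L_{1}L_{2})$ on the dense core $D(L_{1}L_{2})$; combining this with the invertibility of $\mathcal{H}_{\lambda^{2}}^{12}$ one deduces that $\lambda^{2}L_{1}^{-1}+L_{2}$ is invertible, with inverse $(\lambda^{2}+L_{1}L_{2})^{-1}L_{1}$ on $D(L_{1})$, and \eqref{lemma2} then follows upon composing with $L_{1}^{-1}$ on the right.

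The only genuine difficulty I anticipate is the bookkeeping of domains: the products $L_{i}L_{j}$ are unbounded and their domains are strictly smaller than those of the individual factors, so each ostensibly obvious cancellation $L_{1}^{-1}L_{1}=\textrm{id}$ or $L_{1}L_{1}^{-1}=\textrm{id}$ must be preceded by a check of the relevant inclusion (e.g. $L_{2}h\in D(L_{1})$, or $L_{1}(L_{1}^{-1}h)\in D(L_{2})$). The subtlest point is in \eqref{lemma2}, where $\lambda^{2}L_{1}^{-1}+L_{2}$ has a strictly larger domain than $L_{1}^{-1}(\lambda^{2}+L_{1}L_{2})$, so the two operators cannot simply be equated and invertibility must instead be argued through the common core together with a closedness argument. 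Once these domain matters are resolved, what remains are the short algebraic manipulations displayed above.
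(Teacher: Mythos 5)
Your proposal is correct and follows essentially the same route as the paper: both arguments reduce each identity to a domain-chasing verification that the two sides agree when applied to an arbitrary $f\in L^2(\R^3)$, using the bounded invertibility of $L_1$ and $\widetilde{L_2}$, the invertibility of $\lambda^2+L_2L_1$ and $\lambda^2+L_1L_2$ from the resolvent construction, and the factorizations $L_1(\lambda^2+L_2L_1)=(\lambda^2+L_1L_2)L_1$ on the appropriate domains. The only cosmetic difference is that the paper verifies two-sided inverse relations for the intermediate operator $\lambda^2L_1+L_1L_2L_1$ (and treats the invertibility of $\lambda^2L_1^{-1}+L_2$ as given), whereas you apply the operators directly to a vector and invoke injectivity, which amounts to the same computation.
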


\begin{proof}
First of all, let us notice that all the inverse operators are well
defined since
$\lambda$ is
not allowed to be a spectral point of $L$, $L_1$ is invertible and $L_2$ is
restricted to a subspace on which it is invertible too.

\noindent In order to prove \ref{lemma1}, we prove the following claim
$$(\lambda ^2 +L_2 L_1)^{-1} L_1^{-1} = (\lambda^2 L_1 +L_1 L_2 L_1)^{-1} =
L_1^{-1} (\lambda ^2 +L_1
L_2)^{-1}.$$

\noindent To this purpose, let us take any $\xi \in L^2(\R^3)$, then one has
$$(\lambda ^2 +L_2 L_1)^{-1} L_1^{-1} \xi \in D(L_2 L_1) \quad \textrm{and}
\quad L^{-1}_1 \xi \in D(L_1).$$

\noindent Hence, the following chain of identities holds
$$(\lambda^2 L_1 +L_1 L_2 L_1) (\lambda ^2 +L_2 L_1)^{-1} L_1^{-1} \xi = L_1
(\lambda^2 +L_2 L_1) (\lambda^2 +L_2 L_1)^{-1} L_1^{-1}\xi = L_1 L_1^{-1} \xi =
\xi.$$

\noindent On the other hand, let us take $\eta \in D(L_1 L_2 L_1)$, and observe
that, in particular, $\eta \in D(L_2 L_1)$. This justifies the following
identities
$$(\lambda ^2 +L_2 L_1)^{-1} L_1^{-1} (\lambda^2 L_1 +L_1 L_2 L_1) \eta =$$
$$= (\lambda ^2 +L_2 L_1)^{-1} L_1^{-1} L_1 (\lambda^2 +L_2 L_1) \eta = (\lambda
^2 +L_2 L_1)^{-1} (\lambda^2 +L_2 L_1) \eta = \eta,$$

\noindent which concludes the proof of the first identity of the claim. The
second one is proved in the same way.

\noindent The proof of \ref{lemma3}. can be done in the same way exganging $L_1$
with $\widetilde{L_2}$ and $L_2$ with $L_1$.

\noindent It is left to prove \ref{lemma2}.. To do that, let $\xi$ be in
$L^2(\R^3)$, then $(\lambda^2 L_1^{-1} +L_2)^{-1} L_1^{-1} \xi \in D((\lambda^2
L_1^{-1} +L_2))$ and $L_1^{-1} \xi \in D(L_1)$. Hence, we have
$$(\lambda^2 +L_1 L_2) (\lambda^2 L_1^{-1} +L_2)^{-1} L_1^{-1} \xi = L_1
(\lambda^2 L_1^{-1} +L_2) (\lambda^2 L_1^{-1} +L_2)^{-1} L_1^{-1} \xi = \xi.$$

\noindent On the other hand, for any $\eta \in D(L_1 L_2)$ one has $\eta \in
D(L_2) \subset L^2(\R^3) = D(L_1^{-1})$, which justifies
$$(\lambda^2 L_1^{-1} +L_2)^{-1} L_1^{-1} (\lambda^2 +L_1 L_2) \eta = (\lambda^2
L_1^{-1} +L_2)^{-1} L_1^{-1} L_1 (\lambda^2 L_1^{-1} +L_2) \eta = \eta.$$

\end{proof}

\noindent We can now prove the proposition.

\begin{proof}
\texttt{I step: proof of the first identity.}

\noindent Let us recall that for $u \in D(L)$ holds
$$R(\lambda) (L -\lambda I) u =$$
$$= \left[
  \begin{array}{cc}
    -\lambda (\lambda^2 +L_2 L_1)^{-1} & -L_2 (\lambda^2 +L_1 L_2)^{-1}\\
    L_1 (\lambda^2 +L_2 L_1)^{-1} & -\lambda (\lambda^2 +L_1 L_2)^{-1}\\
  \end{array}
\right]
\left[
  \begin{array}{cc}
    -\lambda & L_2\\
    -L_1& -\lambda\\
  \end{array}
\right]
\left(
  \begin{array}{ll}
  u_1\\
  u_2
  \end{array}
\right) =\left(
  \begin{array}{ll}
  w_1\\
  w_2
  \end{array}
\right),$$

\noindent where
$$w_1 = \lambda^2 (\lambda^2 +L_2 L_1)^{-1} u_1 + L_2 (\lambda^2 +L_1 L_2)^{-1}
L_1 u_1 -\lambda (\lambda^2 +L_2 L_1)^{-1} L_2 u_2 +\lambda L_2 (\lambda^2
+L_1 L_2)^{-1} u_2,$$

\noindent and
$$w_2 = \lambda^2 (\lambda^2 +L_1 L_2)^{-1} u_2 + L_1 (\lambda^2 +L_2 L_1)^{-1}
L_2 u_2 +\lambda (\lambda^2 +L_1 L_2)^{-1} L_1 u_1 -\lambda L_1 (\lambda^2
+L_2 L_1)^{-1} u_1.$$

\noindent We will concentrate on the first component $w_1$, because the second
one can be treated in the same way.

\noindent The spectrum of the selfadjoint operator $L_2$
is (\cite{Albeverio})
$$\sigma(L_2) = \{ 0 \} \cup [\omega, +\infty),$$

\noindent where $0$ is a simple eigenvalue and $\ker(L_2) = \textrm{span}\{
\Phi_{\omega} \}$. Hence, any $u_2 \in D(L_2)$ can be decomposed as
$$u_2 = a \Phi_{\omega} +g_2,$$

\noindent where $a \in \C$ and $g_2$ belongs to the projection of $D(L_2)$
onto the continuous spectrum of $L_2$.

\noindent Moreover, since $L_2 \Phi_{\omega} = 0$, one gets $\Phi_{\omega} \in
D(L_1 L_2)$ and
$$\Phi_{\omega} = \frac{1}{\lambda^2} (\lambda^2 +L_1
L_2) \Phi_{\omega} = (\lambda^2 +L_1
L_2) \left( \frac{1}{\lambda^2} \Phi_{\omega} \right),$$

\noindent which is equivalent to
$(\lambda^2 +L_1 L_2)^{-1} \Phi_{\omega} \in \ker(L_2).$

\noindent As a consequence, since $L_1$ and $\widetilde{L_2}$ are invertible on
their domains, one has
$$w_1 = \lambda^2 (\lambda^2 +L_2 L_1)^{-1} L_1^{-1} L_1 u_1 + L_2 (\lambda^2
+L_1 L_2)^{-1} L_1 u_1 +$$
$$-\lambda (\lambda^2 +\widetilde{L_2} L_1)^{-1} \widetilde{L_2} g_2 +\lambda
\widetilde{L_2} (\lambda^2 +L_1 \widetilde{L_2})^{-1} \widetilde{L_2}^{-1}
\widetilde{L_2} g_2,$$

\noindent hence, by lemma \ref{lemma-commut} it follows
$$w_1 = (\lambda^2 L_1^{-1} +L_2) (\lambda^2 +L_1 L_2)^{-1} L_1 u_1 -\lambda
(\lambda^2 +\widetilde{L_2} L_1)^{-1} \widetilde{L_2} g_2 +\lambda
\widetilde{L_2} \widetilde{L_2}^{-1} (\lambda^2 +\widetilde{L_2} L_1)^{-1}
\widetilde{L_2} g_2 =$$
$$= (\lambda^2 L_1^{-1} +L_2) (\lambda^2 L_1^{-1} +L_2)^{-1} L_1^{-1} L_1 u_1
= u_1.$$

\noindent Summing up, we proved
$$R(\lambda) (L -\lambda I) u = u \quad \forall u \in D(L).$$

\noindent \texttt{II step: proof of the second identity.}

\noindent First of all let us recall that for $f \in (L^2(\R^3))^2$ one has
$$(\lambda^2 +L_2 L_1)^{-1} f_1 \in D(L_2 L_1) \quad \textrm{and} \quad
(\lambda^2 +L_1 L_2)^{-1} f_2 \in D(L_1 L_2).$$

\noindent Hence, the following identities hold
$$(L -\lambda I) R(\lambda) f
= \left[
  \begin{array}{cc}
    -\lambda & L_2\\
    -L_1& -\lambda\\
  \end{array}
\right]
\left[
  \begin{array}{cc}
    -\lambda (\lambda^2 +L_2 L_1)^{-1} & -L_2 (\lambda^2 +L_1 L_2)^{-1}\\
    L_1 (\lambda^2 +L_2 L_1)^{-1} & -\lambda (\lambda^2 +L_1 L_2)^{-1}\\
  \end{array}
\right]
\left(
  \begin{array}{ll}
  f_1\\
  f_2
  \end{array}
\right) =$$
$$=\left(
  \begin{array}{ll}
  (\lambda^2 +L_2 L_1) (\lambda^2 +L_2 L_1)^{-1} f_1\\
  (\lambda^2 +L_1 L_2) (\lambda^2 +L_1 L_2)^{-1} f_2
  \end{array}
\right) = f,$$

\noindent which concludes the proof.
\end{proof}

\section{The dynamics generated by $L$ along the generalized
kernel} \label{polo}
\label{propagator}\noindent In this appendix we estimate the behaviour
of the propagator of $L$ along
the eigenvalue $0$. This is achieved in the following theorem in which it is
proved that the dynamics has a linear growth in time along the
generalized kernel.
\begin{theo}    \label{stima-disc}
For any $r \in (0, \omega)$ the following identity holds
$$\frac{1}{2\pi i} \int_{|\lambda| = r} R(\lambda; x, y) e^{-\lambda
t} d\lambda =$$
$$= \left[
  \begin{array}{cc}
    \frac{\sqrt{\omega}}{1 -\sigma} \frac{e^{-\sqrt{\omega} (|x| +|y|)}}{2\pi
|x| |y|} (2\sigma \sqrt{\omega} |x| -1) &
    0\\
    i\frac{2\omega^{\frac{3}{2}} \sigma}{1 -\sigma} \frac{e^{-\sqrt{\omega} (|x|
+|y|)}}{\pi |x| |y|} t & \frac{\sqrt{\omega}}{1 -\sigma} \frac{e^{-\sqrt{\omega}
(|x| +|y|)}}{2\pi |x| |y|} (2\sigma \sqrt{\omega} |y|
    -1)\\
  \end{array}
\right],$$

\noindent for any $x,y \in \R^3$.
\end{theo}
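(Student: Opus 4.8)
The plan is to read the left-hand side as a residue. By Theorem \ref{ris}, for $\sigma\in(0,1/\sqrt2)$ the essential spectrum of $L$ lies on $\{|\Im\lambda|\ge\omega\}$ and the only discrete eigenvalue is $0$; hence for $r\in(0,\omega)$ the circle $\{|\lambda|=r\}$ encloses exactly the point $0\in\sigma(L)$, and the residue theorem gives
$$\frac{1}{2\pi i}\int_{|\lambda|=r}R(\lambda)\,e^{-\lambda t}\,d\lambda=\textrm{Res}_{\lambda=0}\big(R(\lambda)\,e^{-\lambda t}\big).$$
Since $0$ has algebraic multiplicity $2$ but geometric multiplicity $1$ (a single Jordan block of size $2$, cf. Theorem \ref{kergen} and the computation of $\ker(L)$), $R$ has a pole of order exactly $2$ at $0$; writing $R(\lambda)=\lambda^{-2}A_{-2}+\lambda^{-1}A_{-1}+(\textrm{holomorphic})$ and expanding $e^{-\lambda t}=1-\lambda t+O(\lambda^2)$ yields $\textrm{Res}_{\lambda=0}(R(\lambda)e^{-\lambda t})=A_{-1}-t\,A_{-2}$. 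So everything reduces to extracting $A_{-2}$ and $A_{-1}$ from the explicit formula \eqref{eq:risolvente}.

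First I would observe that the convolution block of \eqref{eq:risolvente} is holomorphic at $\lambda=0$: $\Gamma_{\lambda^2}$ extends analytically, $\mathcal G_{\lambda^2}$ has a removable singularity there, and $-\lambda\mathcal G_{\lambda^2}$ vanishes at $\lambda=0$; so this block contributes nothing to the residue, and the whole principal part comes from the finite-rank block $\tfrac{4\pi i}{W(\lambda^2)}\,M(\lambda)$ with $M(\lambda)=\big(\begin{smallmatrix}\Lambda_1 & i\Sigma_2\\ -i\Sigma_1 & \Lambda_2\end{smallmatrix}\big)$. A short Taylor expansion of the square roots together with the values \eqref{alpha} gives $W(\lambda^2)=\tfrac{1-\sigma}{2\omega}\lambda^2+O(\lambda^4)$ — a double zero at $0$, consistent with the order of the pole. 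The operators $\Lambda_j,\Sigma_j$ are holomorphic at $\lambda=0$, and the elementary identities $4\pi\alpha_2+\sqrt\omega=0$ and $4\pi\alpha_1+\sqrt\omega=-2\sigma\sqrt\omega$, immediate from \eqref{alpha}, force the following: $\Lambda_1,\Lambda_2$ are odd in $\lambda$, so the diagonal entries have only simple poles (hence $A_{-2}$ is supported off the diagonal and the diagonal of the answer carries no $t$); $\Sigma_2=O(\lambda^2)$, so the $(1,2)$ entry is holomorphic and equals $0$; and $\Sigma_1$ is even with $\Sigma_1|_{\lambda=0}f=-4\sigma\sqrt\omega\,\langle\Gamma_0,f\rangle\,\Gamma_0\ne0$, where $\Gamma_0(x)=\tfrac{e^{-\sqrt\omega|x|}}{4\pi|x|}$, so the $(2,1)$ entry has a genuine double pole, whence the $t$-linear off-diagonal term.

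It then remains to compute the coefficients. For $A_{-2}$ one only needs $\Sigma_1|_{\lambda=0}$, recorded above, which yields the rank-one kernel proportional to $\Gamma_0(x)\Gamma_0(y)=\tfrac{e^{-\sqrt\omega(|x|+|y|)}}{16\pi^2|x||y|}$ in the $(2,1)$ slot after division by the coefficient $\tfrac{1-\sigma}{2\omega}$ of $W$; for the diagonal of $A_{-1}$ one needs $\tfrac{d}{d\lambda}\Lambda_1|_{\lambda=0}$ and $\tfrac{d}{d\lambda}\Lambda_2|_{\lambda=0}$, computed from $\tfrac{d}{d\lambda}\sqrt{-\omega\pm i\lambda}\big|_{\lambda=0}=\pm\tfrac{1}{2\sqrt\omega}$ and $\tfrac{d}{d\lambda}G_{\omega\pm i\lambda}\big|_{\lambda=0}=\mp\tfrac{i}{8\pi\sqrt\omega}e^{-\sqrt\omega|\cdot|}$, the latter being the source of the factor $2\sigma\sqrt\omega|x|-1$ in the final matrix. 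Pairing the surviving $\langle\Gamma_0,\cdot\rangle$ functionals against the argument and collecting the kernels then produces precisely the $2\times2$ matrix in the statement.

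The main obstacle is exactly this last step: the first-order Taylor bookkeeping of the finite-rank operators $\Lambda_j,\Sigma_j$ at $\lambda=0$, where one must differentiate simultaneously the prefactor $i\lambda$, the coefficients involving $\sqrt{-\omega\pm i\lambda}$, and the kernels $G_{\omega\pm i\lambda}$, and then apply the cancellations forced by \eqref{alpha}; keeping track of signs and of the factors $i$ produced by the real/complex identification is the delicate part. Two consistency checks are available. Evaluating the final formula at $t=0$ must reproduce the symplectic spectral projector $P_0$ of Lemma \ref{proiettore} (so the coefficient $\tfrac{1-\sigma}{2\omega}$ of $W$ must be compatible with $\Delta=\tfrac12\tfrac{d}{d\omega}\|\Phi_\omega\|_{L^2}^2$), and the $t$-linear part must act only within $\ker(L)=\textrm{span}\{(0,\Phi_\omega)^{\top}\}$, in accordance with $QP_0$ where $Q=L|_{N_g(L)}$ is the size-$2$ nilpotent part of $L$ on its generalized kernel (here $Qe_1=0$ and $Qe_2=-L_1\varphi_\omega=\Phi_\omega$).
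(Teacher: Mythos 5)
Your proposal is correct and follows essentially the same route as the paper's Appendix C: discard the convolution block as holomorphic at $\lambda=0$, Taylor-expand $W(\lambda^2)=\tfrac{1-\sigma}{2\omega}\lambda^2+O(\lambda^4)$ and the finite-rank entries, use the cancellations $4\pi\alpha_2+\sqrt\omega=0$ and $4\pi\alpha_1+\sqrt\omega=-2\sigma\sqrt\omega$ to see that the diagonal has simple poles (no $t$), the $(1,2)$ entry is regular, and only $\Sigma_1$ produces a double pole and hence the $t$-linear term, then finish by residues. Your parity argument in $\lambda$ and the two consistency checks ($t=0$ reproducing $P_0$, and the $t$-part being $QP_0$) are a slightly cleaner organization of the same computation the paper carries out entry by entry.
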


\begin{proof}
Since the convolution term of the resolvent $R(\lambda)$ is
continuous in zero it suffices to compute the integral of the
multiplication term.
First of all, let us note that the function
$$f(\lambda) = \frac{4\pi i}{W(\lambda^2)} \Lambda_1(\lambda)
e^{-\lambda t} = i\frac{e^{-\lambda t}}{W(\lambda^2)} \cdot$$
$$\cdot \left[ \frac{(4\pi \alpha_2 -i\sqrt{-\omega +i\lambda})
e^{i\sqrt{-\omega -i\lambda} |x|} +(4\pi \alpha_2 -i\sqrt{-\omega
-i\lambda}) e^{i\sqrt{-\omega +i\lambda} |x|}}{8\pi |x| |y|}  \left(
e^{i\sqrt{-\omega +i\lambda} |y|} -e^{i\sqrt{-\omega -i\lambda} |y|} \right)
+\right.$$
$$\left. +\frac{-(4\pi \alpha_1 -i\sqrt{-\omega +i\lambda})
e^{i\sqrt{-\omega -i\lambda} |x|} +(4\pi \alpha_1 -i\sqrt{-\omega
-i\lambda}) e^{i\sqrt{-\omega +i\lambda} |x|}}{8\pi |x| |y|} \left(
e^{i\sqrt{-\omega +i\lambda} |y|} +e^{i\sqrt{-\omega -i\lambda} |y|} \right)
\right] =$$
$$= \frac{i}{8\pi |x| |y|} \left[ 2(4\pi \alpha_2 +\sqrt{\omega})
e^{-\sqrt{\omega} |x|} \left( \frac{i|y|}{\sqrt{\omega}}
e^{-\sqrt{\omega} |y|} \lambda +o(\lambda^2) \right) +8\pi \alpha_1
e^{-\sqrt{\omega} |y|} \left(
\frac{i|x|}{\sqrt{\omega}} e^{-\sqrt{\omega} |x|} \lambda
+o(\lambda^2) \right) +\right.$$
$$\left. +2ie^{-\sqrt{\omega} |y|} \left( \left(
\frac{1}{\sqrt{\omega}} +|x| \right) e^{-\sqrt{\omega} |x|} \lambda
+o(\lambda^2) \right) \right] \left( \frac{1 -\sigma}{2\omega}
\lambda^2 +o(\lambda^4) \right)^{-1} \sim$$
$$\sim -\frac{\sqrt{\omega}}{1 -\sigma} \frac{e^{-\sqrt{\omega} (|x|
+|y|)}}{2\pi |x| |y|} [(4\pi \alpha_2 +\sqrt{\omega}) |y| +(4\pi
\alpha_1 +\sqrt{\omega}) |x| +1] \frac{1}{\lambda}.$$

\noindent as $\lambda \rightarrow 0$. Hence the function
$f(\lambda)$ has a pole of order one in zero. Then, by the Cauchy
theorem one gets
$$\frac{1}{2\pi i} \int_{|\lambda| = r} \frac{4\pi i}{W(\lambda^2)}
\Lambda_1(\lambda) e^{-\lambda t} d\lambda =
 -\frac{\sqrt{\omega}}{1 -\sigma} \frac{e^{-\sqrt{\omega} (|x|
+|y|)}}{2\pi |x| |y|} [(4\pi \alpha_2 +\sqrt{\omega}) |y| +(4\pi
\alpha_1 +\sqrt{\omega}) |x| +1] =$$
$$= -\frac{\sqrt{\omega}}{1 -\sigma} \frac{e^{-\sqrt{\omega} (|x|
+|y|)}}{2\pi |x| |y|} [-2 \sigma \sqrt{\omega} |x| +1].$$

\noindent Switching $\alpha_1$ to $\alpha_2$ and vice versa, it
follows
$$\frac{1}{2\pi i} \int_{|\lambda| = r} \frac{4\pi i}{W(\lambda^2)}
\Lambda_2(\lambda) e^{-\lambda t} d\lambda
= -\frac{\sqrt{\omega}}{1 -\sigma} \frac{e^{-\sqrt{\omega} (|x|
+|y|)}}{2\pi |x| |y|} [(4\pi \alpha_1 +\sqrt{\omega}) |y| +(4\pi
\alpha_2 +\sqrt{\omega}) |x| +1] =$$
$$= -\frac{\sqrt{\omega}}{1 -\sigma} \frac{e^{-\sqrt{\omega} (|x|
+|y|)}}{2\pi |x| |y|} [-2 \sigma \sqrt{\omega} |y| +1].$$

\noindent On the other hand, the function
$$\frac{4\pi i}{W(\lambda^2)} \Sigma_1(\lambda) e^{-\lambda t}$$

\noindent is the sum of a continuous function and a function with a
pole of second order in zero, namely
$$g(\lambda) e^{-\lambda t} =$$
$$= i \frac{(4\pi \alpha_1 -i\sqrt{-\omega +i\lambda})
e^{i\sqrt{-\omega -i\lambda} |x|} +(8\pi \alpha_1 -i\sqrt{-\omega
-i\lambda}) e^{i\sqrt{-\omega +i\lambda} |x|}}{W(\lambda^2) 4\pi |x|
|y|} (e^{i\sqrt{-\omega +i\lambda} |y|} +e^{i\sqrt{-\omega -i\lambda} |y|})
e^{-\lambda t}.$$

\noindent Note that $g(\lambda) = \sum_{k=2}^{+\infty} a_k \lambda_k$ with
$$
a_{-2} = i\frac{\omega}{1 -\sigma} \frac{4\pi \alpha_1
+\sqrt{\omega}}{\pi |x| |y|} e^{-\sqrt{\omega} (|x| +|y|)},\quad a_{-1} = 0,
$$
then, by residue theorem,
$$\frac{1}{2\pi i} \int_{|\lambda| = r} \frac{4\pi i}{W(\lambda^2)}
\Sigma_1(\lambda) e^{-\lambda t} d\lambda = - i\frac{\omega}{1 -\sigma}
\frac{4\pi \alpha_1
+\sqrt{\omega}}{\pi |x| |y|} e^{-\sqrt{\omega} (|x| +|y|)} t = i\frac{2 \sigma
\omega^{\frac{3}{2}}}{(1 -\sigma) \pi |x| |y|}
e^{-\sqrt{\omega} (|x| +|y|)} t.$$

\noindent In the same way
$$\frac{1}{2\pi i} \int_{|\lambda| = r} \frac{4\pi i}{W(\lambda^2)}
\Sigma_2(\lambda) e^{-\lambda t} d\lambda = 0,$$

\noindent which concludes the proof.
\end{proof}

%\bibliographystyle{plain}
%\bibliography{bibliografia}

\begin{thebibliography}{10}

\bibitem{ADFT}
R.~Adami, G.~Dell'{A}ntonio, R.~Figari, and A.~Teta.
\newblock The {C}auchy problem for the {S}chr\"{o}dinger equation in dimension
  three with concentrated nonlinearity.
\newblock {\em Ann. I. H. Poincar\'{e}}, 20:477--500, 2003.

\bibitem{ADFT2}
R.~Adami, G.~Dell'{A}ntonio, R.~Figari, and A.~Teta.
\newblock Blow-up solutions for the {S}chr\"{o}dinger equation in dimension
  three with a concentrated nonlinearity.
\newblock {\em Ann. I. H. Poincar\'{e}}, 21:121--137, 2004.

\bibitem{Albeverio}
S.~Albeverio, F.~Gesztesy, R.~H\"{o}gh-{K}rohn, and H.~Holden.
\newblock {\em Solvable models in quantum mechanics}.
\newblock 2005.

\bibitem{BKKS}
V.S. Buslaev, A.I. Komech, A.E. Kopylova, and D.~Stuart.
\newblock On asymptotic stability of solitary waves in {S}chr\"{o}dinger
  equation coupled to nonlinear oscillator.
\newblock {\em Communications in partial differential equations}, 33:669--705,
  2008.

\bibitem{BP1}
V.S. Buslaev and G.~Perelman.
\newblock Scattering for the nonlinear {S}chr\"{o}dinger equation: states close
  to a soliton.
\newblock {\em St.Petersbourg Math J.}, 4:1111--1142, 1993.

\bibitem{BP2}
V.S. Buslaev and G.~Perelman.
\newblock On the stability of solitary waves for nonlinear {S}chr\"odinger
  equations.
\newblock {\em Amer.Math.Soc.Transl.}, 164(2):75--98, 1995.

\bibitem{BS}
V.S. Buslaev and C.~Sulem.
\newblock On asymptotic stability of solitary waves for nonlinear
  {S}chr\"{o}dinger equation.
\newblock {\em Ann. I. H. Poincar\'{e}}, 20:419--475, 2003.

\bibitem{Cu}
S.~Cuccagna.
\newblock Stabilization of solution to nonlinear {S}chr\"odinger equations.
\newblock {\em Comm.Pure App.Math.}, 54:1110--1145, 2001.
\newblock erratum {\it ibid.} {\bf 58}, 147 (2005).

\bibitem{CM}
S.~Cuccagna and T.~Mizumachi.
\newblock On asymptotic stability in energy space of ground states for
  nonlinear {S}chr\"odinger equations.
\newblock {\em Comm.Math.Phys.}, 284:51--87, 2008.

\bibitem{DPT}
P.~D'{A}ncona, V.~Pierfelice, and A.~Teta.
\newblock Dispersive estimate for the {S}chroedinger equation with point
  interaction.
\newblock {\em Math. Meth. in Appl. Sci.}, 29:309--323, 2006.

\bibitem{NP}
D.Noja and A.~Posilicano.
\newblock Wave equations with concentrated nonlinearities.
\newblock {\em J.Phys.A:Math.Gen.}, 38:5011--5022, 2005.

\bibitem{DM}
N.~Dorr and B.A. Malomed.
\newblock Soliton supported by localized nonlinearities in periodic media.
\newblock {\em Phys.Rev. A}, 83:033828--1, 033828--21, 2011.

\bibitem{KM}
E.Kirr and \"O.Mizrak.
\newblock Asymptotic stability of ground states in 3d nonlinear schr\"odinger
  equation including subcritical cases.
\newblock {\em Journal of functional analysis}, 257:3691--3747, 2009.

\bibitem{FW}
G.~Fibich and X.P. Wang.
\newblock Stability of solitary waves for nonlinear {S}chr\"odinger equation
  with inhomogeneous nonlinearities.
\newblock {\em Physica D}, 175:96--108, 2003.

\bibitem{FGJS1}
J.~Fr\"ohlich, S.~Gustafson, B.L.G. Jonsson, and I.M. Sigal.
\newblock Solitary wave dynamics in an {E}xternal {P}otential.
\newblock {\em Commun.Math.Phys.}, 250:613--642, 2004.

\bibitem{GS1}
Z.~Gang and I.M. Sigal.
\newblock Asymptotic stability of nonlinear schr\"odinger equations with
  potentials.
\newblock {\em Rev. Math. Phys.}, 17:1143--1207, 2005.

\bibitem{GS2}
Z.~Gang and I.M. Sigal.
\newblock Relaxation of solitons in nonlinear schr\"odinger equations with
  potentials.
\newblock {\em Adv. Math.}, 216:443--490, 2007.

\bibitem{GeS}
F.~Genoud and C.A. Stuart.
\newblock Schr\"odinger equations with a spatially decaying nonlinearity:
  existence and stability of standing waves.
\newblock {\em DCDS}, 21:137--186, 2008.

\bibitem{tavole}
I.S. Gradshteyn and I.M. Ryzhik.
\newblock {\em Tables of integrals, series and products}.
\newblock 1965.

\bibitem{GSS}
M.~Grillakis, J.~Shatah, and W.~Strauss.
\newblock Stabity theory of solitary wawes in the presence of symmetry {I}.
\newblock {\em Journal of functional analysis}, 94:308--348, 1987.

\bibitem{GNT}
S.~Gustafson, K.~Nakanishi, and T.P. Tsai.
\newblock Asymptotic stability and completeness in the energy space for
  nonlinear schr\"odinger equations with small solitary waves.
\newblock {\em Int.Math.Res.Not.}, 66:3559--3584, 2004.

\bibitem{KKS}
A.I. Komech, E.A. Kopylova, and D.~Stuart.
\newblock On asymptotic stability of solitary waves for {S}chr\"{o}dinger
  equation coupled to nonlinear oscillator, {I}{I}.
\newblock {\em Comm. Pure Appl. Anal.}, to appear.

\bibitem{MA}
B.A. Malomed and M.Y. Azbel.
\newblock Modulational instability of a wave scattered by a nonlinear centre.
\newblock {\em Phys.Rev. B}, 47:10402--10406, 1993.

\bibitem{RS4}
M.Reed and B.Simon.
\newblock {\em Methods of modern mathematical physics. {IV: Analysis of
  operators}}.
\newblock Academic Press, 1977.

\bibitem{SW1}
A.~Soffer and M.~Weinstein.
\newblock Multichannel nonlinear scattering for nonintegrable equations.
\newblock {\em Comm.Math.Phys.}, 133:119--146, 1990.

\bibitem{SW2}
A.~Soffer and M.~Weinstein.
\newblock Multichannel nonlinear scattering for nonintegrable equations ii. the
  case of anisotropic potentials and data.
\newblock {\em J.Diff.Eq.}, 98:376--390, 1992.

\bibitem{SKBRC}
A.A. Sukhorukov, Y.S. Kivshar, O.~Bang, J.J. Rasmussen, and P.L. Christiansen.
\newblock Nonlinearity and disorder: Classification and stability of nonlinear
  impurity modes.
\newblock {\em Phys.Rev. E}, 63:036601--18, 2001.

\bibitem{TY1}
T.P. Tsai and H.T. Yau.
\newblock Asymptotic dynamics of nonlinear {S}chr\"odinger equations:
  resonance-dominated and dispersion-dominated solutions.
\newblock {\em Comm.Pure.Appl.Math}, 55:153--216, 2002.

\bibitem{TY2}
T.P. Tsai and H.T. Yau.
\newblock Relaxation of excited states in nonlinear {S}chr\"odinger equations.
\newblock {\em Int.Math.Res.Not.}, 31:1629--1673, 2002.

\bibitem{W1}
M.~Weinstein.
\newblock Lyapunov stability of ground states of nonlinear dispersive evolution
  equations.
\newblock {\em Comm.Pure.Appl.Math}, 39:51--68, 1986.

\end{thebibliography}

\end{document}